\documentclass[suppldata]{interact}
\usepackage[english]{babel}
\usepackage[utf8]{inputenc}
\usepackage[sc]{mathpazo} 
\usepackage[T1]{fontenc}
\usepackage{epstopdf}

\usepackage{hyperref}
\usepackage{amsmath}
\usepackage{amsfonts}
\usepackage{amssymb}
\usepackage{subfig}
\usepackage{caption}
\captionsetup{tableposition=top,figureposition=bottom,font=small}
\usepackage{tabularx}
\usepackage[flushleft]{threeparttable}
\usepackage{multirow}
\usepackage{lscape}
\usepackage{enumitem} 
\setlist[itemize]{noitemsep} 

\usepackage[numbers,sort&compress]{natbib}
\bibpunct[, ]{[}{]}{,}{n}{,}{,}
\makeatletter
\def\NAT@def@citea{\def\@citea{\NAT@separator}}
\makeatother

\theoremstyle{plain}
\newtheorem{theorem}{Theorem}[section]
\newtheorem{lemma}[theorem]{Lemma}

\theoremstyle{definition}

\theoremstyle{remark}

\date{ }
\begin{document}

\articletype{}

\title{General framework for testing Poisson-Voronoi assumption for real microstructures}

\author{
  \name{Martina~Vittorietti\thanks{CONTACT M.~ Vittorietti. Email: m.vittorietti@tudelft.nl}\textsuperscript{a,b}, Piet J.J.~Kok\textsuperscript{c}, Jilt~Sietsma\textsuperscript{d}, Wei~Li\textsuperscript{d}, Geurt~Jongbloed\textsuperscript{a}}
  \affil{\textsuperscript{a}Department of Applied Mathematics, Delft University of Technology, Mourik Broekmanweg 6, 2628 XE Delft, The Netherlands;\\ \textsuperscript{b}Materials Innovation Institute (M2i), Mourik Broekmanweg 6, 2628 XE Delft, The Netherlands; \\ \textsuperscript{c}Tata Steel, IJmuiden Technology Centre, Postbus 10.000, 1970 CA, IJmuiden, The Netherlands; \\ \textsuperscript{d}Department of Materials Science and Engineering, Delft University of Technology, Mekelweg 2, 2628 CD Delft, The Netherlands.
  }}

\maketitle

\begin{abstract}
Modeling microstructures is an interesting problem not just in Materials Science but also in Mathematics and Statistics.
The most basic model for steel microstructure is the Poisson-Voronoi diagram.
It has mathematically attractive properties and it has been used in the approximation of single phase steel microstructures.
The aim of this paper is to develop methods that can be used to test whether a real steel microstructure can be approximated by such a model.
Therefore, a general framework for testing the Poisson-Voronoi assumption based on images of 2D sections of real metals is set out.
Following two different approaches, according to the use or not of periodic boundary conditions, three different model tests are proposed.
The first two are based on the coefficient of variation and the cumulative distribution function of the cells area. The third exploits tools from to Topological Data Analysis, such as persistence landscapes.

\end{abstract}

\begin{keywords}
 Poisson-Voronoi diagrams; persistence landscape; cumulative distribution function, hypothesis testing; scaling; real microstructures
\end{keywords}

\section{Introduction}
The problem of quantifying complicated and fascinating microstructures of materials like metals has been around for many years.
It is an important issue in Materials Science because modeling 3D microstructures and relating these models to specific properties of the metals can give rise to new kinds of metals with desired performance. Indeed, having a good model for the microstructure, simulations can be performed to generate `digital versions' of the microstructure and testing its properties, for instance mechanical properties, using yet other models that establish the relation between microstructural and mechanical properties. These simulations, approximating reality, allow the researcher to test material at relatively low cost and relatively fast, compared to real physical experiments. It is clear that an important and challenging statistical question to be answered is whether a specific model for a microstructure is adequate, given measured data.

In the tentative of answering  this last question several points need to be touched upon.
The  first point concerns the choice of a model. There exists a vast choice of models and among them, Voronoi diagrams have been extensively studied and used \cite{okabe09}. In particular, Poisson-Voronoi diagrams, only involving one nonnegative intensity parameter $\lambda$, represent the most basic case for modeling microstrucures. In fact, they are often used in applications involving single-phase steel \cite{okabe09, kumar94, lorzkrawietz91}. More sophisticated models have been proposed, but in this paper we will concentrate on the Poisson-Voronoi model.

A second point concerns the available data. While the microstructure of a material is the arrangement of grains and
phases in a three dimensional (3D) space, the material is typically observed in two dimensions (2D). Usually, a small sample from inside the material is obtained and the exposed surface is examined in a microscope. Therefore, the work involves the study of 2D sections from which 3D microstructure information has to be extracted.

Under the 3D Poisson-Voronoi model, the observable 2D section is a realization of a so called \textit{2D sectional Poisson-Voronoi diagram}, often denoted by $\mathcal{V}_{\Phi}(2,3)$. It is the result of the intersection of a fixed plane and a 3D Poisson-Voronoi diagram. Only limited results about the geometrical characteristics of its grains have been obtained analytically but for most of them numerical results have been obtained through Monte Carlo simulations \cite{lorz90}. If a Poisson-Voronoi diagram is a good model, using 2D sections for the estimate of the intensity parameter $\lambda$, it is possible to infer distributions of almost all 3D microstructural properties, such as grain volume, grain surface area and grain number of faces \cite{vittorietti17}.

The last point is about the model validation. The question that this paper wants to answer is ``Given a real 2D materials section, could a Poisson-Voronoi diagram be a good model for approximating the 3D materials microstructure?'' We propose several tests for the Poisson-Voronoi hypothesis. These are all based on contrasts between features of the observed 2D picture and the features one would expect if the data were generated according to the Poisson-Voronoi model.




The paper structure is as follows.
After having reviewed the basic concepts of Voronoi diagrams (Section \textbf{\ref{sec:PVdiagrams}}), we recall the main stereological relations which can be used to estimate $\lambda$ based on a 2D sectional Poisson-Voronoi diagram and the most used intensity estimators introduced in \cite{lorzhahn94} (Section \textbf{\ref{sec:estimators}}).

Then, we move to the testing framework (Section \textbf{\ref{sec:test}}).
We distinguish periodic and non periodic boundary conditions. The former case is very popular in material science practice and it allows to approximate `infinite structures', giving nice scaling properties and  avoiding so called `edge effects'.
The latter more closely resembles real situations.
Assuming periodic boundary conditions, in Section \textbf{\ref{subsec:testper}} the distributions of the main geometrical characteristics of the 2D sectional cells are numerically obtained and two model tests are proposed.
The first, already introduced in \cite{lorzhahn93}, is based on the coefficient of variation of the cell (or grain) areas; the second is a Kolmogorov-Smirnov type test based on the cumulative distribution function of the cell areas.
In Section \textbf{\ref{subsec:testnonper}} the two tests previously mentioned are adapted to the non periodic boundaries setting. An additional test is defined, using tools from the emergent area of Topological Data Analysis (TDA), that combines the two disciplines of Statistics and Topology.
The focus is on \textit{persistent homology}, the branch of TDA that summarizes the 2D picture using various functions.
After having briefly and intuitively explained the basic concepts of persistent homology and the common ways of representing it (\textit{persistence diagram}), a  test based on the squared distances between \textit{persistence landscapes} is presented (Section \textbf{\ref{subsubsec:perapproach}}).

In Section \textbf{\ref{sec:quantile}}, we carry out a computer simulation for estimating the quantiles for the proposed model test statistics.
We consider null distributions for the test statistics conditional on the number of visible cells in 2D.
 For a general test statistic, the conditional distribution is expressed in terms of quantities that involve the (unknown) intensity parameter $\lambda$ of the 3D Poisson process and quantities independent of $\lambda$.

Finally, in Section \textbf{\ref{sec:application}}, we show an application of our work based on scanned images by \cite{lorzhahn93} of Alumina Ceramics. The different tests belonging to the different approaches are performed and the results are compared.
A brief discussion on future developments follows in Section \textbf{\ref{sec:discussion}}.

\section{Voronoi Diagrams}
\label{sec:PVdiagrams}
We begin reviewing the generic definition and the basic properties of the Voronoi Diagram.
Given a denumerable set of distinct points in $\mathbb{R}^d$, $\textbf{\textrm{X}}=\{x_i: i\ge1\}$, the Voronoi diagram of $\mathbb{R}^d$ with \textit{nuclei} $\{x_i\}$ (also called \textit{sites} or \textit{generator points}) is a partition of $\mathbb{R}^d$ consisting of cells
  \begin{equation}\notag
   C_i=\{ y\in\mathbb{R}^d\,:\, \| x_i-y \| \le \|x_j-y\|\,\, for\,\, j\ne i \},\,\,\, i=1,2,\dots
   \label{eq:vorcell}
   \end{equation}
   where $\|\cdot\|$ is the usual Euclidean norm.
   This means that given a set of two or more but finitely many distinct points, all locations in that space are associated with the closest member(s) of the point set with respect to the Euclidean distance.

If $\textbf{\textrm{X}}=\mathrm{\Phi}=\{x_i\}$ is the realization of a homogeneous Poisson point process, then we will refer to the resulting structure as the \textit{Poisson-Voronoi diagram} and denote it by $\mathcal{V}_\Phi$. This model is characterized by one single intensity parameter $\lambda$, the mean number of points generated according to the Poisson point process per unit volume.

Okabe et al. \cite{okabe09} synthesize previous research activity on the properties of Poisson-Voronoi diagrams.
Despite the fact that the moments of several geometrical characteristics are known, the distributions of the main features, especially in 3D, are not. In \cite{vittorietti17} a simulation study is conducted for finding accurate approximations for these distributions. A Generalized Gamma distribution is found to be the best approximating distribution among the well-known parametric densities frequently used in this framework.
Exploiting the scaling property of the Poisson process, one obtains the distribution of the main geometrical characteristics for $\lambda$.
In real experiments, it is often not possible to deal directly with 3D structures. Instead, one has to base inference on pictures of 2D sections of the 3D structure.
In \cite{chiu96}, Chiu et al.\ answer a fundamental question: ``For integers $2\le t\le d-1$, is the intersection between an arbitrary but fixed $t$-dimensional linear affine subspace of $\mathbb{R}^d$ and the $d$-dimensional Voronoi tessellation generated by a point process $\Phi$ a $t-$dimensional Voronoi tessellation?"
The answer to this question is negative when $\Phi$ is a Poisson point process \cite{mecke84, chiu96}. Moreover, each cell in a sectional Poisson-Voronoi tessellation is almost surely a \textit{non-Voronoi} cell \cite{chiu96}.
For 2D and 3D Poisson-Voronoi diagrams, also for 2D sectional Poisson-Voronoi diagrams, much information about moments and scaling for the main geometrical characteristics is known, but little information and no analytic expressions for their distributions of them are available so far.
In this paper, we focus on the distribution of the area, the perimeter and the number of edges of cells in 2D sectional Voronoi diagrams. The major results are summarized in Table \textbf{\ref{tab:moments}}.

\begin{table}[!h]
\centering
\caption{The first and second order moments of the main geometrical characteristics of a 2-dimensional sectional Poisson-Voronoi diagram}
\begin{threeparttable}
\begin{tabular}{lcc}
  \hline
   & Expected value &  Second moment \\
  \hline
 Number of vertices/edges & $6$&$38.827$\tnote{*} \\
 Area & $0.686\lambda^{-2/3}$&$0.699\lambda^{-4/3}$  \\
Perimeter & $3.136\lambda^{-1/3}$&$11.308\lambda^{-2/3}$\tnote{*}  \\
   \hline
\end{tabular}
\begin{tablenotes}
  \item[*] \footnotesize{The constants are estimated values; \cite{okabe09}.}
  \end{tablenotes}
  \end{threeparttable}
\label{tab:moments}
\end{table}

In the next section, we will see how stereological relations can be used to obtain estimates of the intensity parameter $\lambda$  of the 3D generating Poisson process based on the 2D sections.

\section{Stereological estimators for the intensity parameter $\lambda$}
\label{sec:estimators}
Basic stereological relationships exist which are independent of any underlying tessellation model. 
Moreover, in the literature explicit (scaling) relations are known expressing the expected number of vertices per unit area, $P_A$, the expected number of cells per unit area, $N_A$,  and the mean total edge length per unit area, $L_A$, in terms of the intensity parameter $\lambda$ for a generating 3D Poisson process. Combining stereological and scaling relationships, the following expressions hold, see e.g.\ \cite{lorzhahn94}. 
\begin{align*}\notag
P_A&=\frac{8}{15}\cdot\left(\frac{3}{4}\right)^{1/3}\cdot\pi^{5/3}\Gamma\left(\frac{4}{3}\right)\cdot\lambda^{2/3}=c_1\cdot\lambda^{2/3}\\
N_A&=\frac{4}{15}\cdot\left(\frac{3}{4}\right)^{1/3}\cdot\pi^{5/3}\Gamma\left(\frac{4}{3}\right)\cdot\lambda^{2/3}=\frac{c_1}{2}\cdot\lambda^{2/3} \mbox{ and }\\
L_A&=\pi\cdot\left(\frac{\pi}{6}\right)^{1/3}\cdot\Gamma\left(\frac{5}{3}\right)\cdot\lambda^{1/3}=c_2\cdot\lambda^{1/3}.
\end{align*}
Furthermore, exploiting the simple relation between $N_A$ and the expected area of the cell profiles, $\mathbb{E}(a)$,
$N_A=\frac{1}{\mathbb{E}(a)}$,
four estimators for $\lambda$ can be obtained:
\begin{align}
\notag
\hat{\lambda}_P&=\biggl(\frac{\hat{P}_A}{c_1}\biggr)^{3/2}\approx 0.2008\cdot\hat{P}_A^{3/2},\,\,
\hat{\lambda}_N=\biggl(\frac{2\hat{N}_A}{c_1}\biggr)^{3/2}\approx 0.5680\cdot\hat{N}_A^{3/2}\\\notag
\hat{\lambda}_L&=\biggl(\frac{\hat{L}_A}{c_2}\biggr)^{3}\approx 0.0837\cdot\hat{L}_A^{3},\,\, \label{eq:estimators}
\hat{\lambda}_a=\biggl(\frac{2}{c_1\bar{a}}\biggr)^{3/2}\approx 0.5680\cdot\bar{a}^{-3/2}.\\
\end{align}
Here the hats indicate natural estimates for the mean quantities based on the data (like `number of cells divided by observed area' for $\hat{\lambda}_n$).
In \cite{lorzhahn94}, the behavior of the estimators is investigated by means of a computer simulation. The authors state that the estimators show hardly any difference concerning bias and variance and that the biases are less than 1\% for sample size $n=50$ and that they decrease rapidly with increasing sample size.

Once we have an estimate of the intensity parameter $\hat{\lambda}$, it can be used for estimating the distribution of the main geometrical 3D features of the grains \cite{vittorietti17}.
An additional important issue of interest is whether the Poisson-Voronoi diagrams assumption is suitable in view of the observed 2D picture. We will consider this problem in the next section.

\section{Model tests for validity of the Poisson-Voronoi assumption}
\label{sec:test}
In \cite{lorzkrawietz91,lorzhahn93,stamm97} several model tests based on the distribution of geometrical features of the grains in random plane sections of a spatial tessellation are proposed.
More precisely in \cite{lorzkrawietz91}, the authors propose five stereological model tests based on the distribution of the number of cell vertices. The power of the model tests is investigated under some special parametric alternative hypotheses: a Mat\'ern cluster point process, a Mat\'ern hard-core point process and a simple sequential inhibition point process.
Secondly, in \cite{lorzhahn93} three different model tests are considered: the first two are based on the variability of the section cells area, the third is motivated by a well-known relationship between specific edge length $L_A$  and point process intensities $\lambda$ and $P_A$. In  line with their previous work, the authors propose one-sided and two-sided tests for distinguishing Poisson-Voronoi tessellation from more regular or irregular tessellations. The null distributions of the test statistics are approximated using simulation. Simulations also show that the model tests are quite powerful in discriminating the different kind of plane sections. It is interesting to note that all their tests are based on summarizing indices like the coefficient of variation, skewness index etc, and that the best behavior among them is reported to be the one based on the coefficient of variation of the cells area (eq. \textbf{\ref{eq:cv}}), also used by the authors in \cite{stamm97}.

In this paper we introduce test statistics that use more information contained in the data than only summarizing indices. To this end, we use tools belonging to different branches in statistics. Moreover, we describe a partly simulation-based framework to approximate null distributions of the test statistics considered.

Before going more deeply into the testing problem, it is necessary to make a distinction between periodic and non periodic boundary conditions.
On the one hand, periodic boundary conditions are mathematically convenient as these provide a natural way to deal with edge effects. Moreover, for large volumes and large values of $\lambda$, the construction really mimics the infinite volume situation where the convenient scaling results as mentioned in section \textbf{\ref{sec:estimators}}.  For real materials, the periodic boundary constraint is not realistic.
The approach without periodic boundary conditions is more realistic. It will be seen that determining null distributions of test statistics, the approach will be slightly more simulation based, but also more tailored to the data and 3D object at hand.

\subsection{Periodic Boundary Conditions}
\label{subsec:testper}
The first simulation study involves a Monte-Carlo procedure. The following results are obtained by randomly generating approximately $1\,000$ points in a box of dimension $10\times10\times10$ and using eq. \textbf{\ref{eq:vorcell}} for creating 3D Poisson-Voronoi cells. This is equivalent saying that the generator points of the Poisson-Voronoi diagram are generated according to a Poisson process with intensity parameter $\lambda=1$. Then, sections with dimensions $10\times10$ (parallel to the cube face for reducing boundary effect) are randomly taken. On average the number of 2D cells in a section turns out to be  approximately $146$.
The simulation is conducted using the software provided by TATA Steel.
The algorithm, that the software exploits is described in \cite{vittorietti17}.
The procedure consists of three main steps:\\



Repeat $1\,000\,000$ times:
 \begin{description}
\item[Step 1]: Generate a 3D Poisson-Voronoi diagram with intensity parameter $\lambda=1$ applying periodic boundary conditions;
\item[Step 2]: Take a random 2D section of the 3D structure;
\item[Step 3]: Determine the geometrical characteristics of all cells in the 2D section.
\end{description}

Graphical representations of the results are shown in Figures \textbf{\ref{fig:areaDensDistr}, \ref{fig:perDensDistr}, \ref{fig:edgesDensDistr}}.
For the grain area and the grain perimeter distributions estimation a simple boundary correction for kernel density estimation is adopted \cite{jones93}.
In fact, given the nonnegative support of the probability density functions of the two variables, the linear correction approach, as proposed in \cite{jones93}, prevents the estimate to assign mass outside$[0,\infty)$.

The values in Table \textbf{\ref{tab:moments}} are the estimated values of the main geometrical features for a 2D sectional Poisson-Voronoi diagram. They are in agreement with both theoretical and simulation results known in the literature (cf. \cite{okabe09}).

\begin{figure}[!h]
\centering
\subfloat[]{\includegraphics[width=7cm]{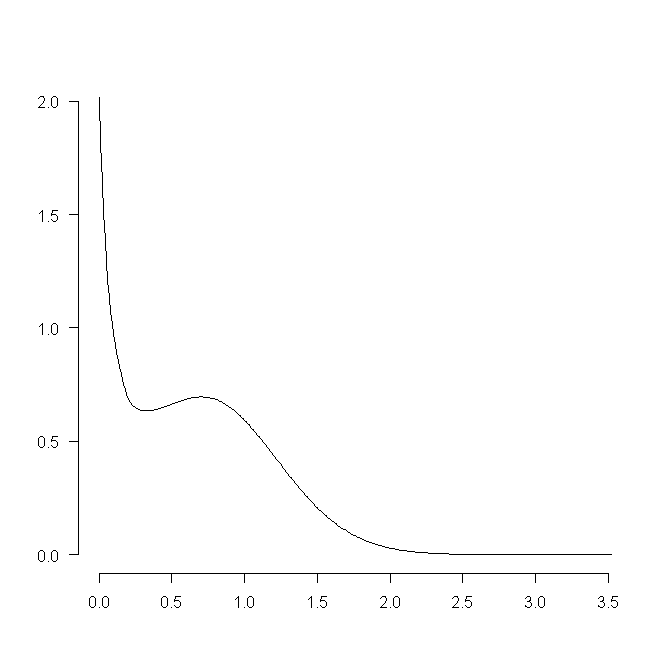}}
\subfloat[]{\includegraphics[width=7cm]{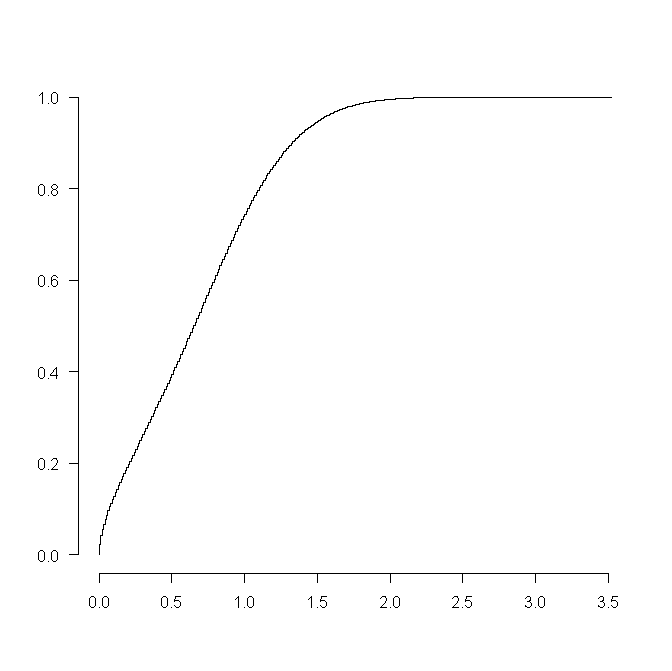}}
\caption{(a) Boundary corrected Kernel density estimate (Epanechnikov kernel, linear combination correction, $h=0.2$ \cite{jones93}) and (b) empirical cumulative distribution function of the area of 36\,480\,600 (originating from the 1\,000\,000 slices) 2D sectional cells, $\lambda=1$ }
\label{fig:areaDensDistr}
\end{figure}

\begin{figure}[!h]
\centering
\subfloat[]{\includegraphics[width=7cm]{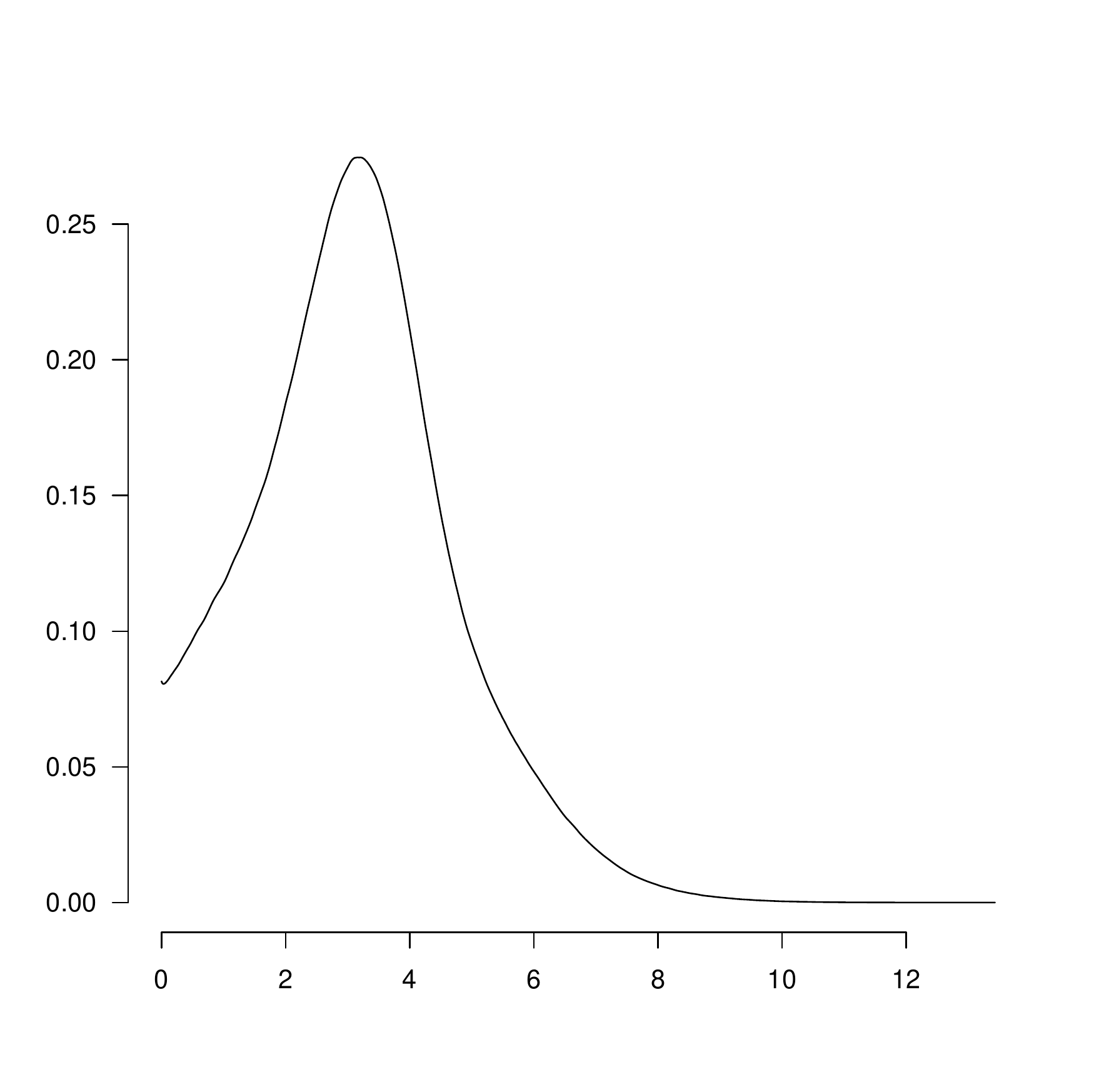}}
\subfloat[]{\includegraphics[width=7cm]{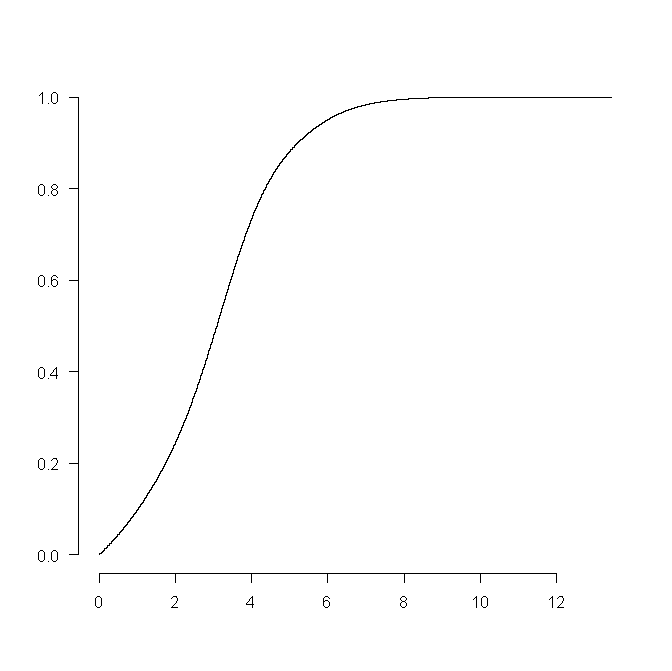}}
\caption{(a) Boundary corrected Kernel density estimate (Epanechnikov kernel, linear combination correction, $h=0.1$ \cite{jones93}) and (b) empirical cumulative distribution function of the perimeter of 36\,480\,600 2D sectional cells, $\lambda=1$ }
\label{fig:perDensDistr}
\end{figure}

\begin{figure}[!h]
\centering
\subfloat[]{\includegraphics[width=7cm]{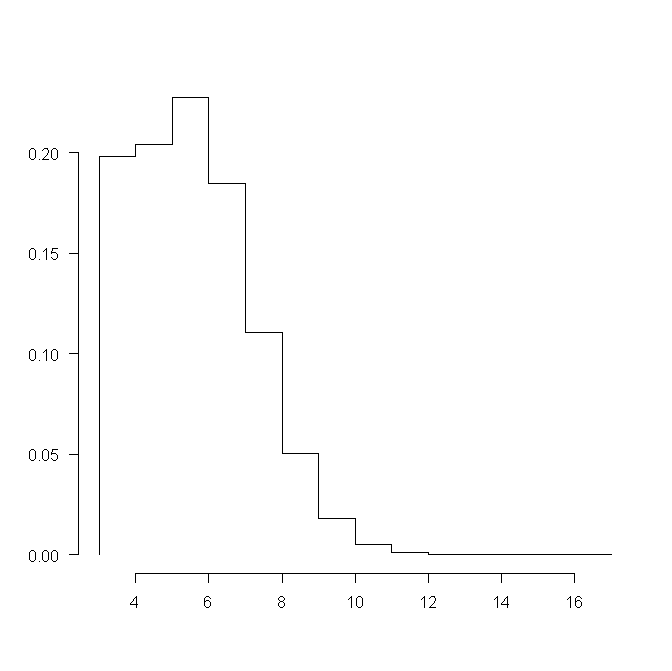}}
\subfloat[]{\includegraphics[width=7cm]{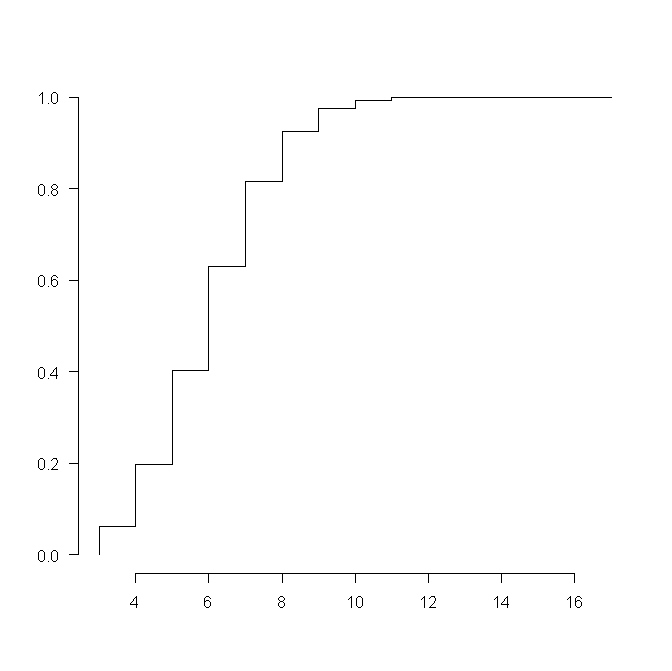}}
\caption{(a) Relative frequencies and (b) empirical cumulative distribution function of the number of edges of 36\,480\,600 2D sectional cells, $\lambda=1$}
\label{fig:edgesDensDistr}
\end{figure}

\begin{table}[!h]
\centering
\caption{Estimated moments of the geometrical features of 36\,480\,600 2D sectional cells, $\lambda$=1}
\subfloat[][\emph{Area}]
{\begin{tabular}{rr}
  \hline
 $\mu_1$ & 0.68524 \\
  $\sigma $& 0.47342 \\
 $ \mu_2 $& 0.69367 \\
  $\mu_3 $& 30.37169 \\
$  \mu_4$ & 40.94590 \\
   \hline
\end{tabular}}
\subfloat[][\emph{Perimeter}]
{\begin{tabular}{rr}
  \hline
$\mu_1$ & 3.13345 \\
$ \sigma$ & 1.60552 \\
$ \mu_2 $& 12.39622 \\
  $\mu_3 $& 2072.73503 \\
  $\mu_4$ & 10695.17596 \\
   \hline
\end{tabular}}
\subfloat[][\emph{Number of edges}]
{\begin{tabular}{rr}
  \hline
$\mu_1 $& 6.00000 \\
$  \sigma$ & 1.69195 \\
$  \mu_2 $& 38.86268 \\
  $\mu_3 $& 9818.30810 \\
 $ \mu_4 $& 72107.17324 \\
   \hline
\end{tabular}}
\label{tab:moments}
\end{table}

If it comes to the study of mechanical properties of metal, the grain size is known to be an important parameter. In 2D, grain area therefore represents  one of the most interesting features for real materials sections, especially for single-phase materials \cite{hermann1989}.
Therefore, in this paper we restrict ourselves to tests based on observed cell areas.
The first one, mentioned before and already used in \cite{lorzhahn93,stamm97}, is based on the coefficient of variation of the observed cell areas:

\begin{equation}
C=\frac{\sqrt{\frac{1}{n-1}\sum_{i=1}^n(a_i-\bar{a})^2}}{\bar{a}}.
\label{eq:cv}
\end{equation}

Here $a_i$ is the area of the $i$-th sectional cell and $\bar{a}$ is the mean cell area in the section.
As the coefficient of variation is scale invariant, one just needs to compute the coefficient of variation of the area of the cells of a real section applying periodic boundary conditions and compare it with the quantile of the distribution of this test statistic. In fact, the information contained in the 2D section is clearly related to the number of cells observed ($n$) and comparing the observed value of $C$ with a quantile of the conditional distribution of $C$ given $n$ will only depend on number of cells observed in the 2D section. 

The second test is based on the cumulative distribution function (CDF) of the area of the 2D sectional cells. More precisely it is a Kolmogorov-Smirnov type test given by the supremum distance between the CDF of the area of the cells of the section for which one wants to test the Poisson-Voronoi hypothesis and a function that reflects our expectation of the empirical distribution function under the Poisson-Voronoi assumption. For the latter we choose a very accurate simulation-based approximation of the CDF of the area of $36\,480\,600$ sectional Poisson-Voronoi cells.
Let $F_1$ be the cumulative distribution function of the areas of the 2D sectional cells with intensity parameter $\lambda=1$ approximated via simulation as described above and let $\hat G$ be the empirical distribution function of the area of $n$ cells of a 2D section from a 3D structure with intensity parameter $\lambda$. First, we use eq. \textbf{\ref{eq:estimators}} for estimating the intensity based on the considered section, $\hat{\lambda}_a$.
Furthermore, inspired by \textbf{Lemma 3} in \cite{vittorietti17}, we define the next test statistic as the supremum distance between the two functions:
\begin{equation}
D(F,\hat G)=\sup_{x\ge0}|F_1(x)-\hat G(\hat{\lambda}^{\frac{2}{3}}x)|.
\end{equation}
We will return to the issue of approximating the null distribution of this test statistic in Section \textbf{\ref{sec:quantile}}.

\subsection{Non Periodic Boundary Conditions}
\label{subsec:testnonper}
In most real situations, the data available are relative to a material section with completely visible as well as partially visible grains. In such situations it is not realistic to use periodic boundary conditions in the model. 
We fix the geometry of the 3D volume and 2D slice as in the periodic boundaries case (Section \textbf{\ref{subsec:testper}}). Then the procedure can be summarized in three main steps:\\
Repeat $1\,000\,000$ times:
 \begin{description}
\item[Step 1]: Generate a 3D Poisson-Voronoi diagram with intensity parameter $\lambda$ not applying periodic boundary conditions. In this paper, for reasons that will become clear later, $\lambda=0.2$ is chosen;
\item[Step 2]: Take a random 2D section of the 3D structure;
\item[Step 3]: Determine the geometrical characteristics of the completely visible and the partially visible cells in the 2D section.
\end{description}
In this setting we consider three different tests.

The first revokes exactly the one shown in the setting of Section \textbf{\ref{subsec:testper}} (eq.\textbf{ \ref{eq:cv}}) and it is based on the coefficient of variation of the area of the totally and partially visible cells. Obviously, the referring quantile of the distribution of the statistical test are different with respect to the previous case.
The second statistic is in line with the test based on the cumulative distribution function of the cell areas seen in Section \textbf{\ref{subsec:testper}}, but the formulation is slightly different. It is expressed by

\begin{equation}
D(\bar F_{\lambda \,n_{2D}},\hat G_{n_{2D}})=\sup_{x\ge0}|\bar F_{\lambda \,n_{2D}}(x)- \hat G_{n_{2D}}(x)|
\label{eq:testcdf}
\end{equation}
where $\bar F_{\lambda \,n_{2D}}$
is the expected CDF conditioned to the event of observing exactly $n_{2D}$ sectional cells with estimated parameter $\lambda$. In Section \textbf{\ref{sec:quantile}}, it will be explained in more detail how this can be computed.
$\hat G_{n_{2D}}$ is the CDF of the areas of the totally and partially visible cells of the section under study. 

The last test exploits tools coming from the emergent field of Topological Data Analysis. We will now explain the main concepts of persistence homology necessary for using our model test.

\subsubsection*{Test based on Persistence plots}
\label{subsubsec:perapproach}
Instead of giving rigid mathematical and topological definitions, the aim of this section is to guide the reader via intuitive concepts in the construction of \textit{persistence diagrams} and \textit{persistence landscapes} used for the last model test.
Looking at one 2D image, it is hard to identify the really `important' features that univocally characterize it. 
 Topological data analysis (TDA) is a relatively new discipline that has provided  new insight into the study of qualitative features of data. In particular, persistent homology is the branch of TDA that provides tools both for identifying qualitative features of data and to give a measure of the importance of those features.  Key topological features of a set include connected components, holes, voids \dots. The main aim of persistent homology is to record the evolution of those characteristics with respect to a scale parameter $r$ that usually can be interpreted as time.
 
For avoiding too long digressions that can drift away from the real scope of the paper, most of the main concepts belonging to homology and persistent homology field are just roughly mentioned. For readers that aim to come to a formal definition of the following procedure, in \cite{hatcher2002} more details are provided.



For illustrative reasons and because in this study 2D images are used, the 2D case is considered but generalization to higher dimensions is not complicated.
The input of the analysis typically takes the form of a point cloud $\textbf{\textrm{X}}$ (Fig. \textbf{\ref{fig:perexpl}(a)}).
Based on that, a special structure is built.
It provides information about the qualitative features above discussed.
This structure is based on  so called \textit{simplices}.
A \textit{geometric} $k$\textit{-simplex} is the convex hull of $k+1$ affinely independent points $v_0,v_1,\dots,v_k$.
More precisely the $0$-simplex identifies vertices, the $1$-simplex line segments and the $2$-simplex triangles.

\begin{figure}[!h]
\centering
\subfloat[]{\includegraphics[width=5cm]{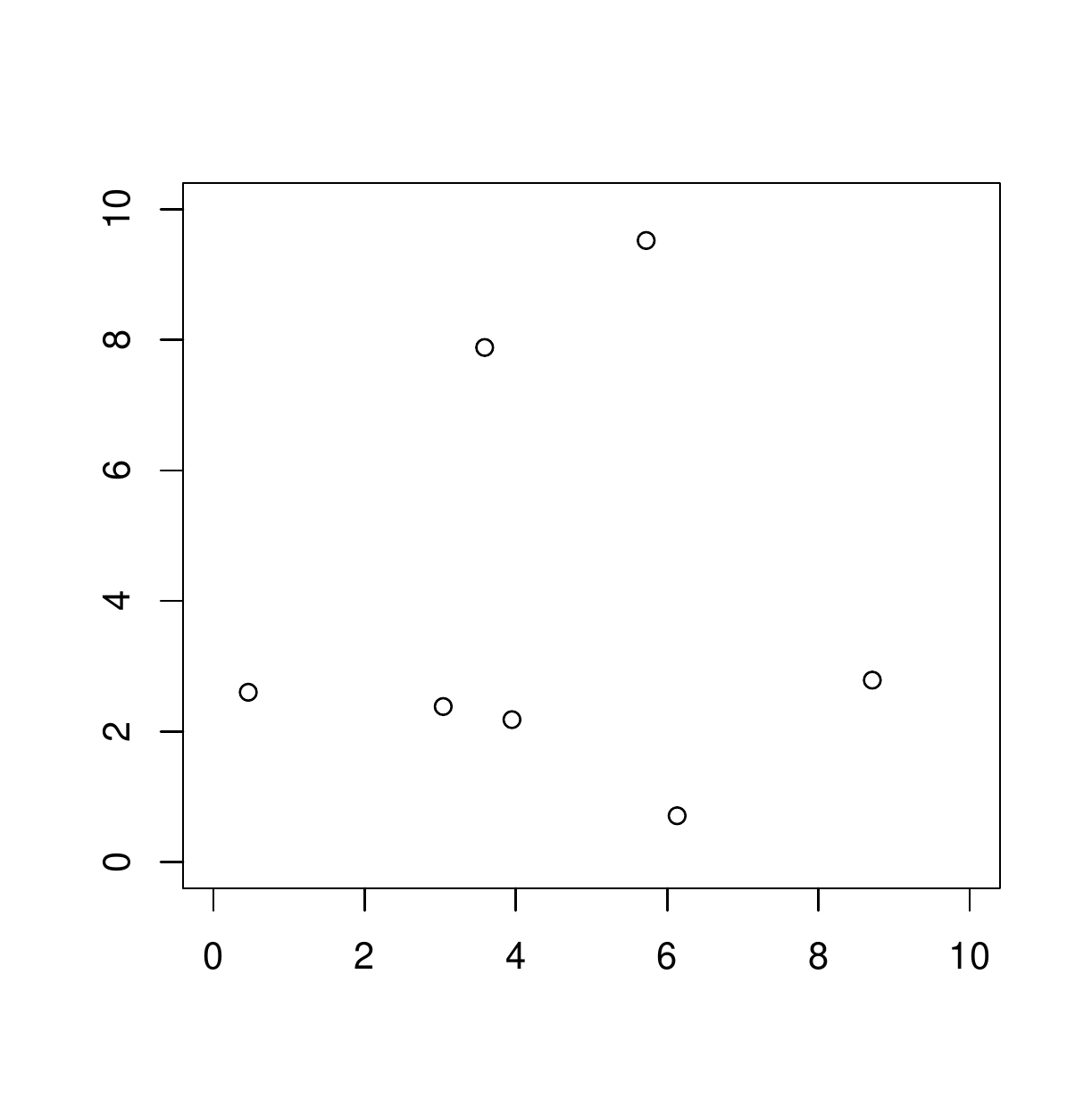}}
\subfloat[]{\includegraphics[width=5.5cm]{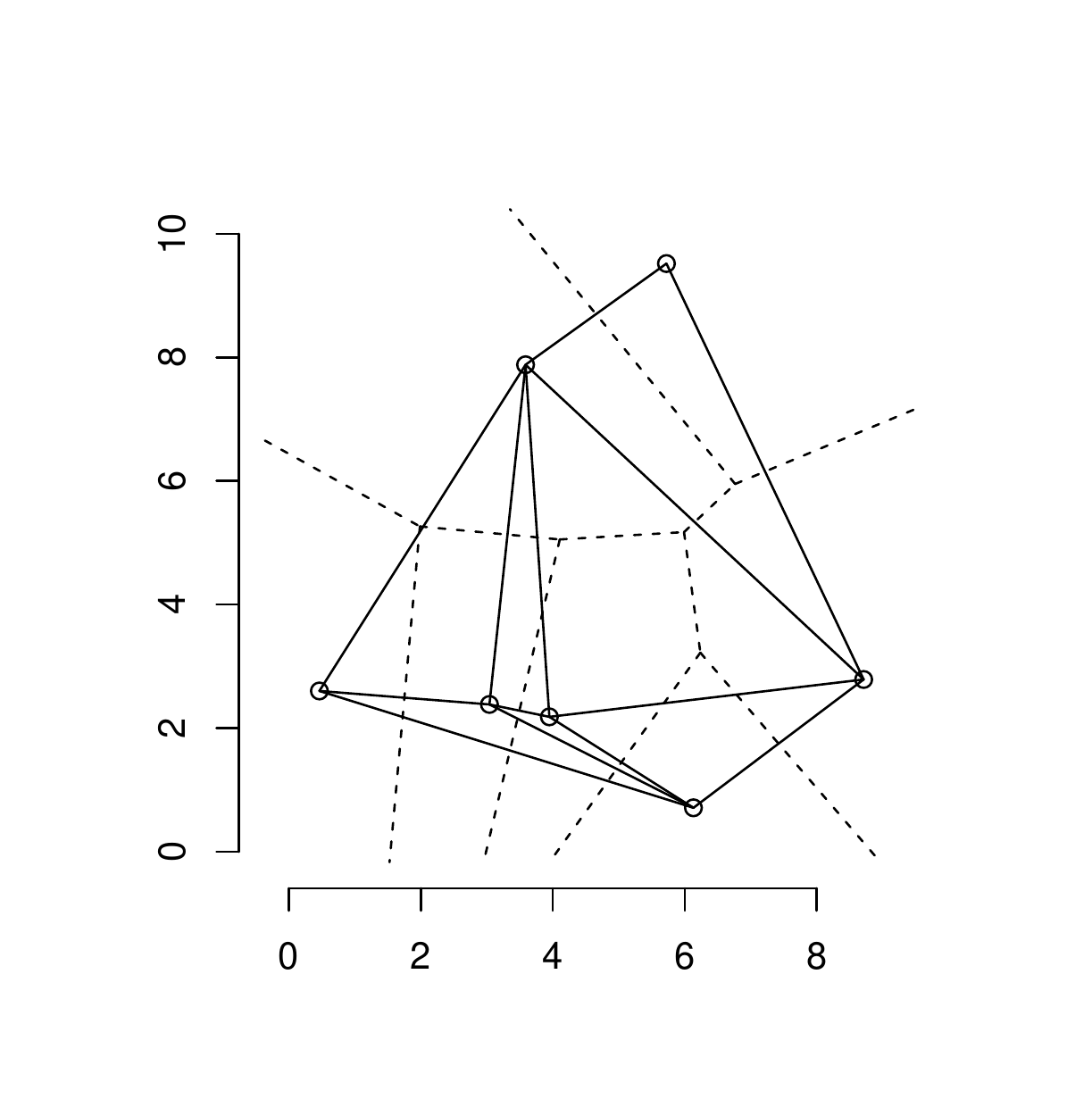}}
\subfloat[]{\includegraphics[width=5.5cm]{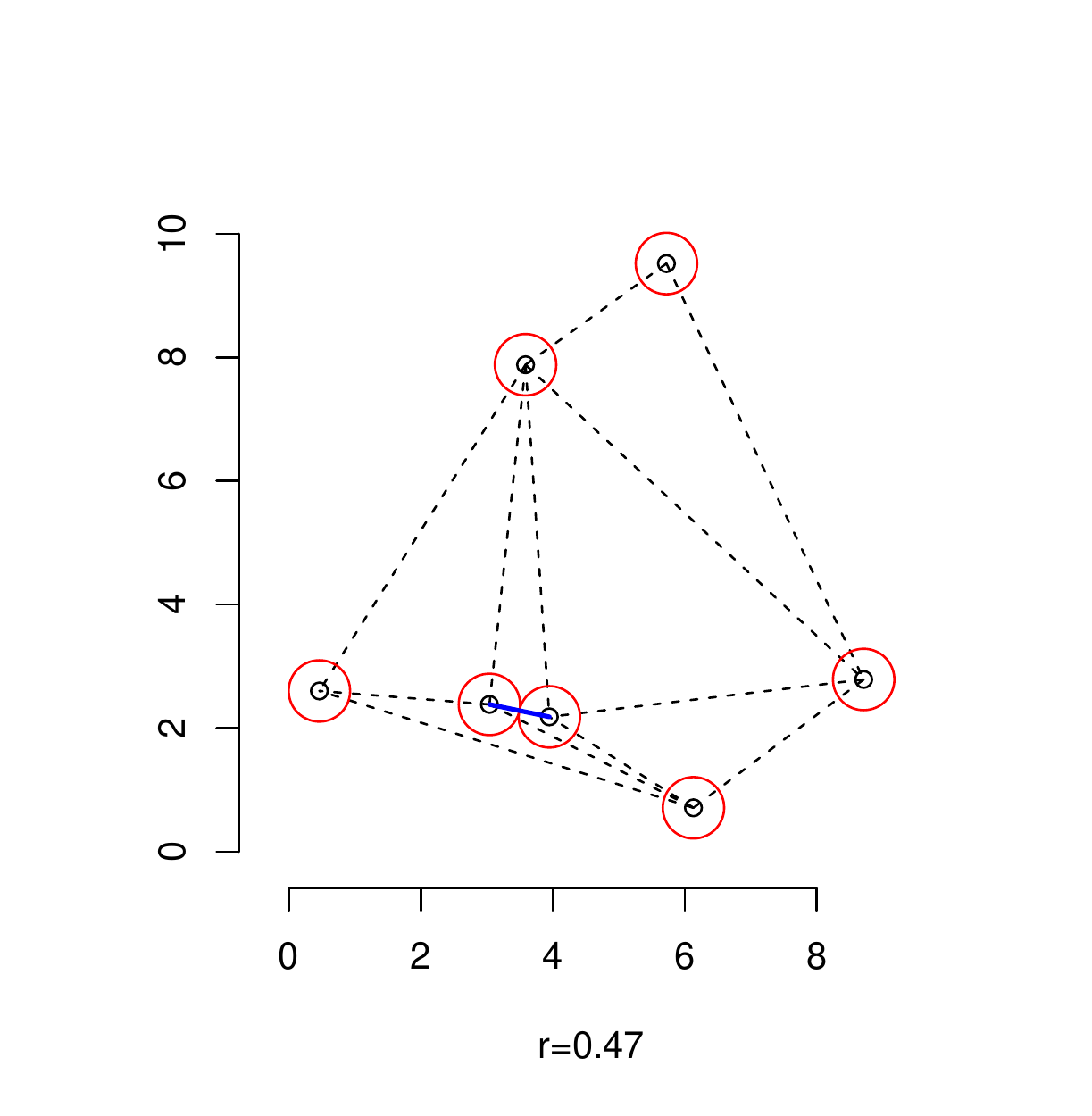}}

\subfloat[]{\includegraphics[width=5cm]{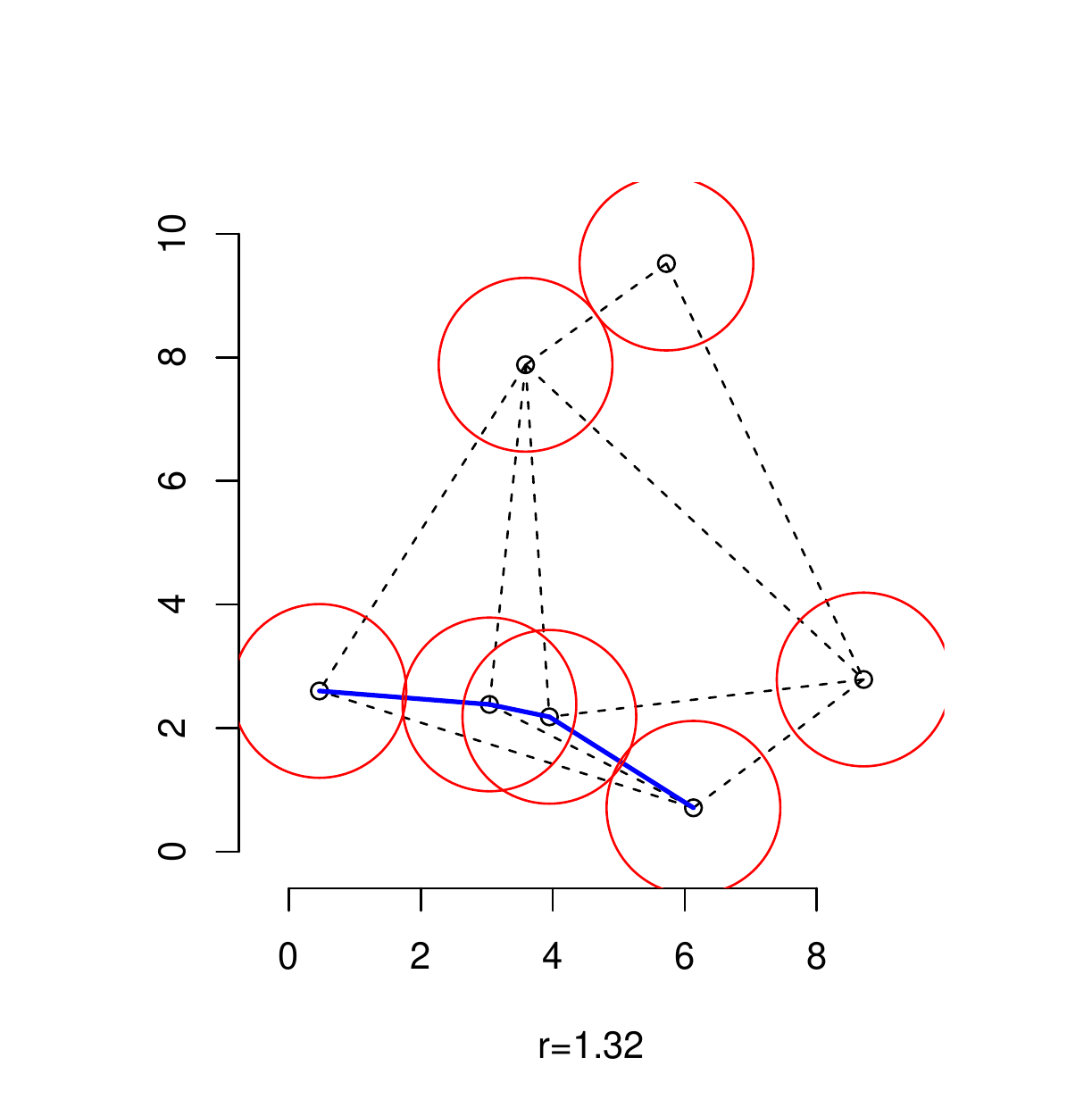}}
\subfloat[]{\includegraphics[width=5cm]{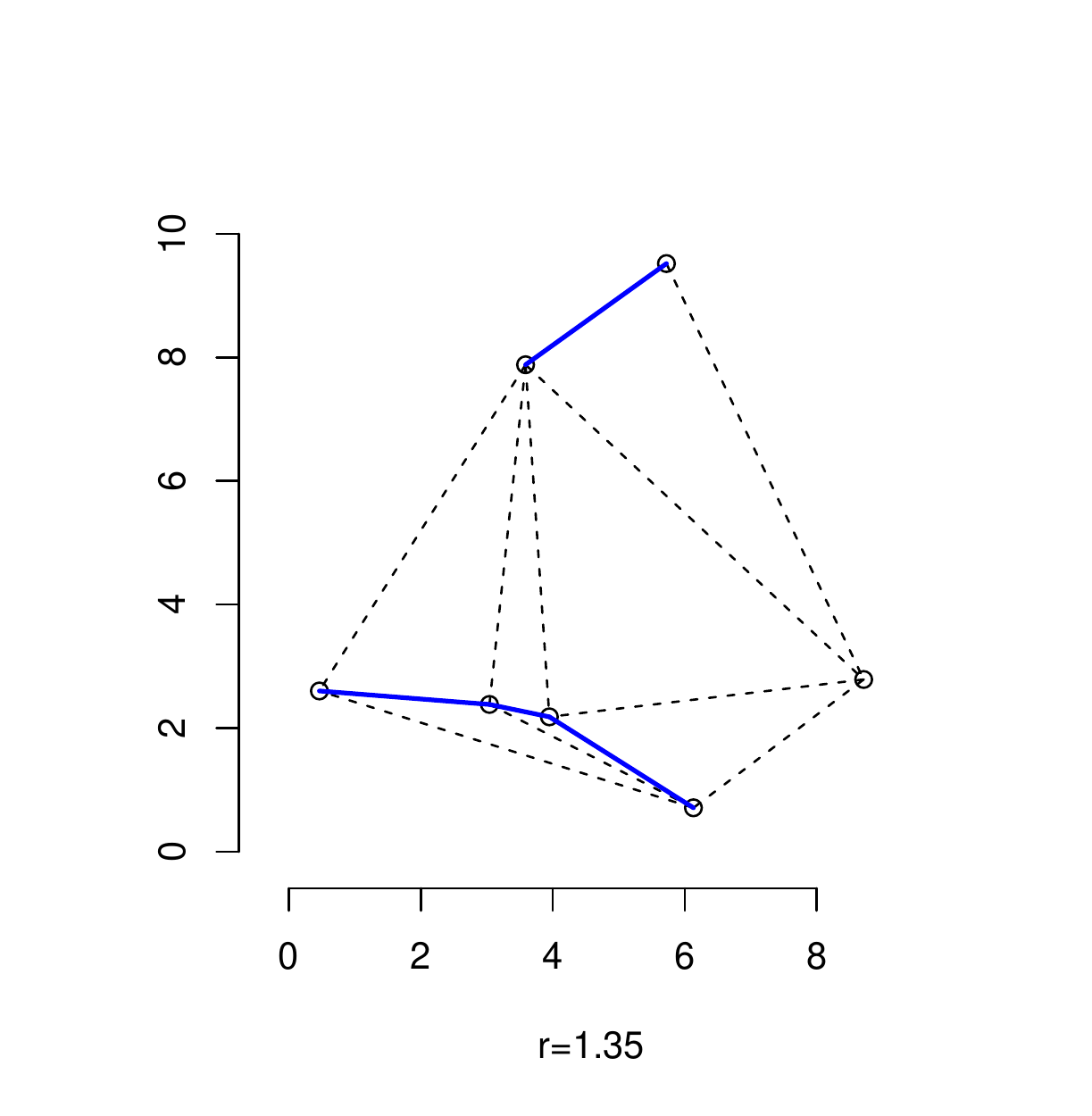}}
\subfloat[]{\includegraphics[width=5cm]{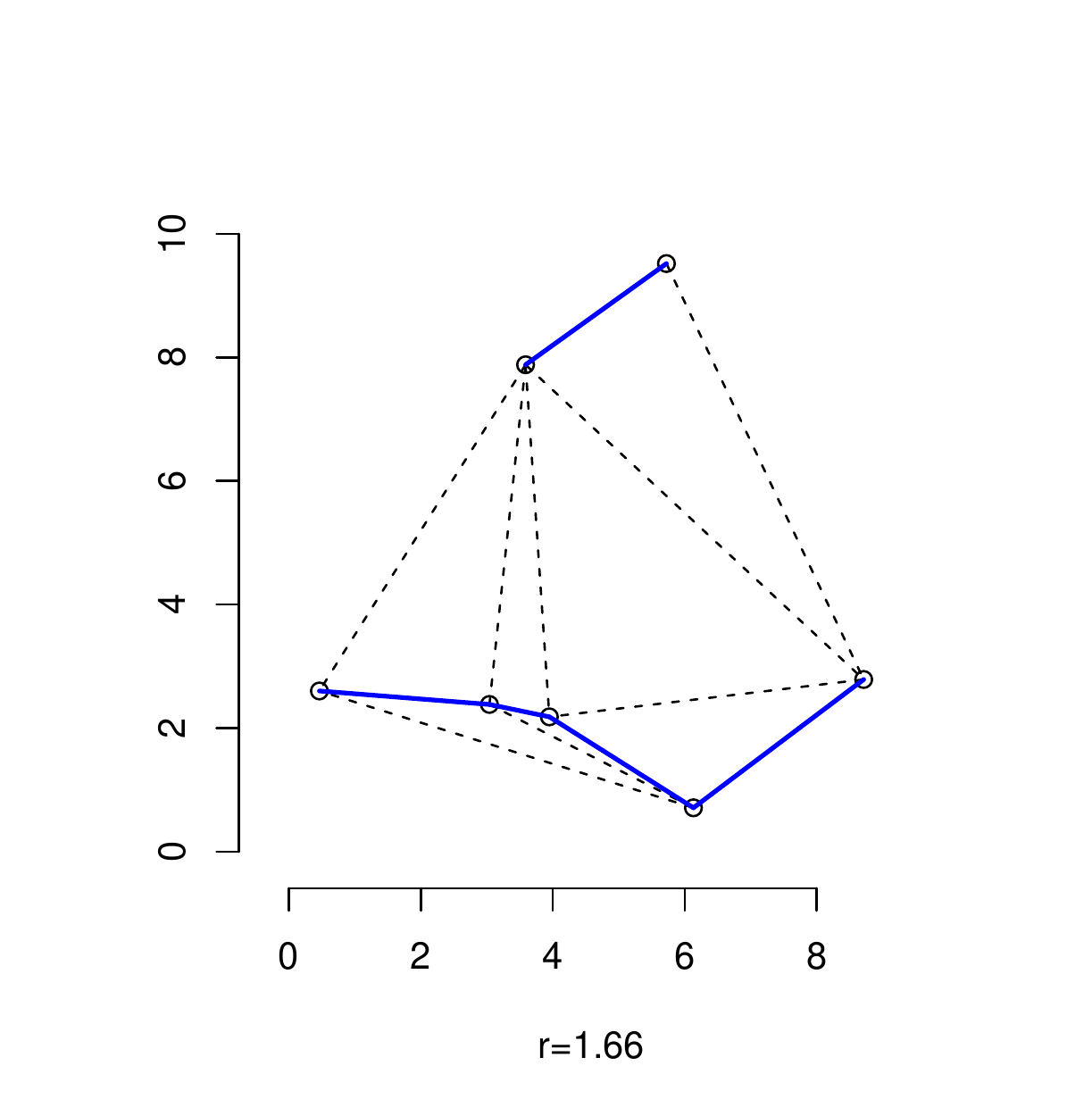}}

\subfloat[]{\includegraphics[width=5cm]{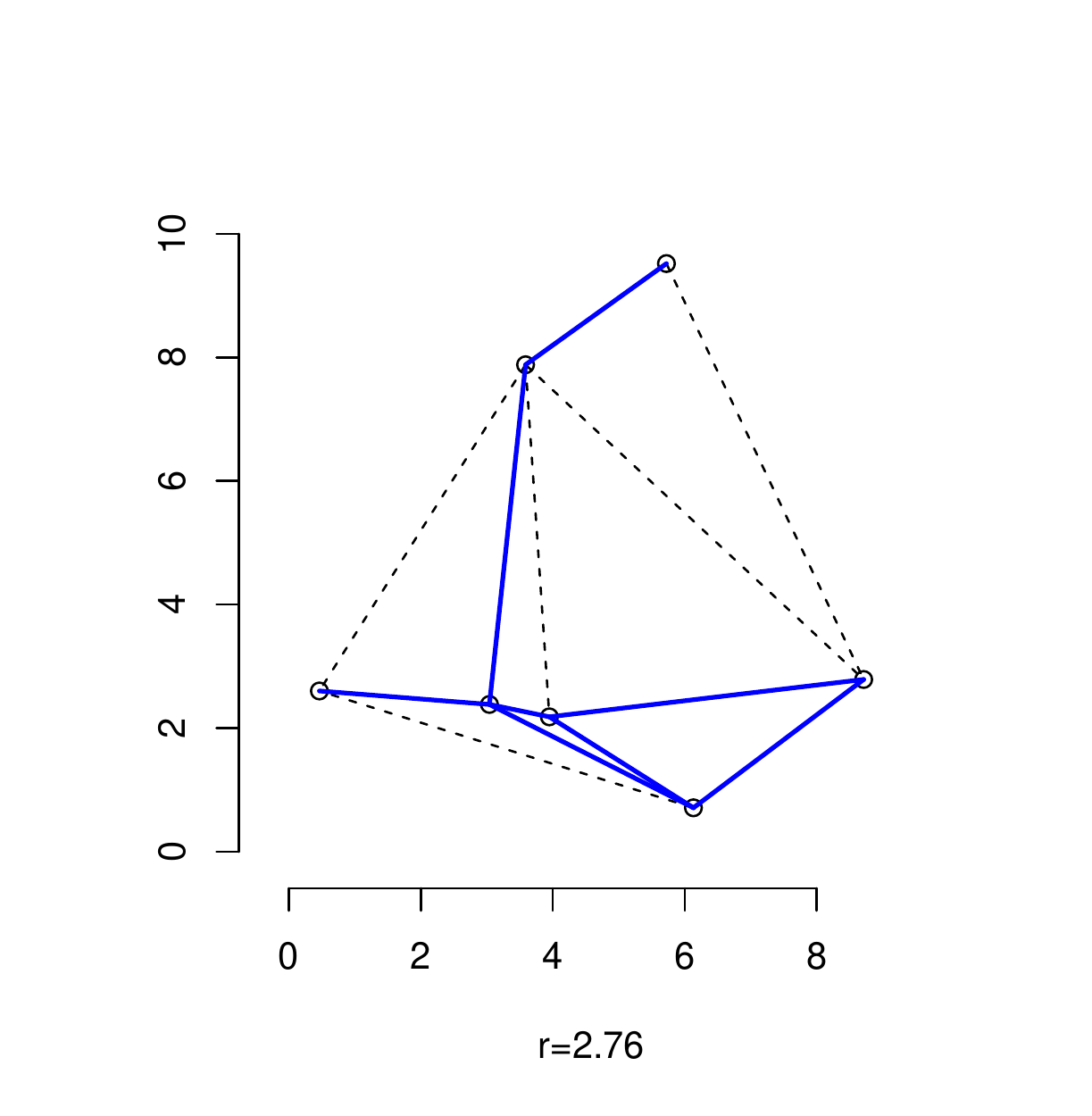}}
\subfloat[]{\includegraphics[width=5cm]{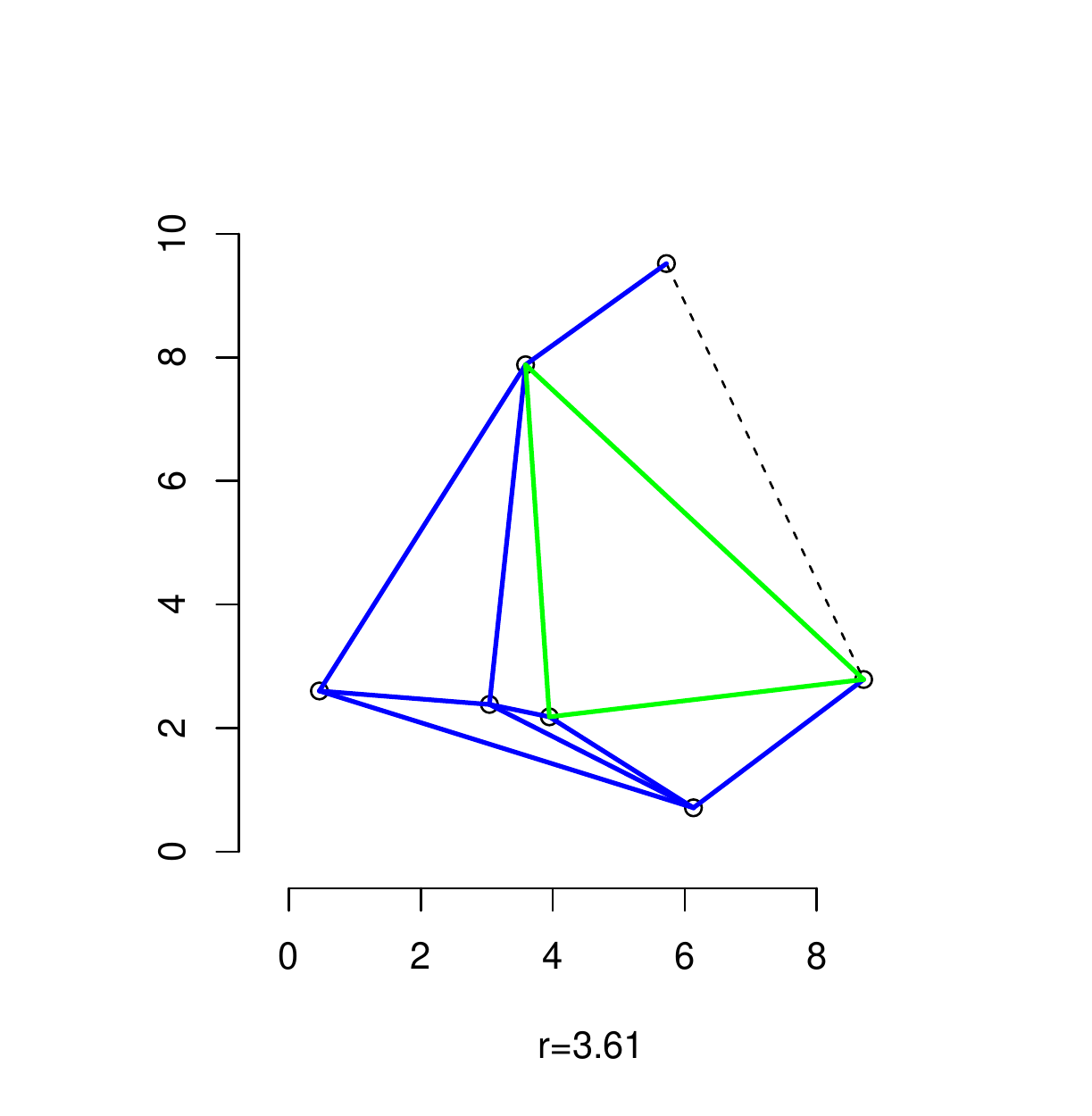}}
\subfloat[]{\includegraphics[width=5cm]{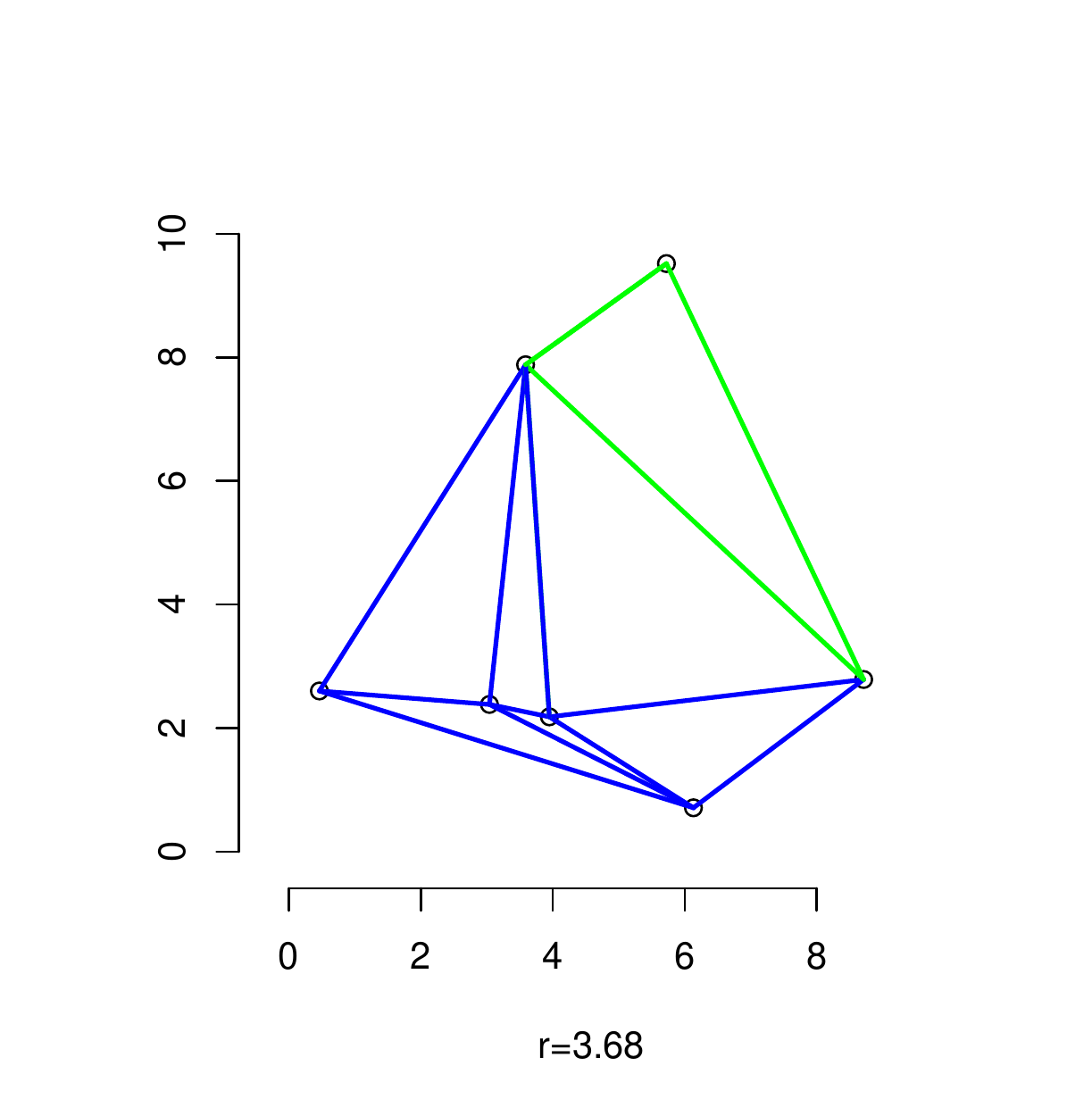}}

\caption{(a) Set of points $\textbf{\textrm{X}}$ (b) Voronoi Diagram (dashed) and Delaunay Triangulation (solid) (c) Circles with radius $0.47$ around the points of $\textbf{\textrm{X}}$; the Alpha complex $\alpha_r(\textbf{\textrm{X}})$ consists of the individual points of $\textbf{\textrm{X}}$ and the one edge corresponding to the two touching circles  (d)  Alpha complex for $r=1.32$ (e) Alpha complex, $r=1.35$ (f) Alpha complex, $r=1.66$ (g) Alpha complex, $r=2.76$ (h) Alpha complex, $r=3.61$ (i) Alpha complex, $r=3.68$; the entire Delaunay Triangulation.}
\label{fig:perexpl}
\end{figure}

One way for building this structure starts off with the 
so called \textit{Delaunay Triangulation}, $DT(\textbf{\textrm{X}})$ of $\textbf{\textrm{X}}$. Basically, this is a graph consisting of vertices in $\textbf{\textrm{X}}$ and edges between two points if and only if they share a Voronoi edge (Fig. \textbf{\ref{fig:perexpl}(b)}).  Then, circles are grown with increasing radius $r$, centered at the points in $\textbf{\textrm{X}}$.
The \textit{Alpha complex}\footnote{Other common choices of simplicial complexes are \v{C}ech  and Rips 
complexes; see \cite{hatcher2002,edelsbrunner2010}.} at radius $r$, $\alpha_r(\textbf{\textrm{X}})$, is a subcomplex of $DT(\textbf{\textrm{X}})$. In fact, for $r$ very small, the Alpha complex is nothing but the set $\textbf{\textrm{X}}$ of the generator points. Then $r$ grows and once two circles intersect, the edge of the underlying Delaunay triangulation between the two circle centers is added to $\alpha_r(\textbf{\textrm{X}})$. Eventually, for $r$ very big, the Alpha complex is the Delaunay Triangulation itself (Fig. \textbf{\ref{fig:perexpl}(c-i)}).

Now, rather than considering this structure for some fixed value of $r$, its evolution for growing $r>0$ is registered. In particular, we keep track of the birth time $b$ and a death time $d$ of connected components and holes\footnote{In three dimensions one could also consider other topological features, like loops or voids}, where the `time' is given by the radius of the circles corresponding to those events. One can think of the circles radii growing at constant rate. At time zero, the Alpha complex equals $\textbf{\textrm{X}}$. All individuals are connected components in themselves. These are born at time zero. After some time, when the first two points get connected because their circles touch, one can say two connected components merge or one connected component `dies'. In Figure \textbf{\ref{fig:perexpl}}, this happens for $r=0.47$; see subplot (c). For one connected component, we therefore have $(b,d)=(0,0.47)$. Increasing $r$ further, more connected components will `die' until only one remains for all $r$ large enough because all points are covered by the union of all large circles. During the same process, it is also possible that holes appear. This happens when a triangle appears in the picture, such that the $r$-circles around the three corner points of this triangle do not cover the whole triangle. At this time a hole is `born', yielding a birth time $b$ for this feature. It will also `die' again, when $r$ is further increased and the circles centered at the corners do cover the whole triangle. Note that not all triangles that appear correspond to the birth of a hole. For instance, in Figure \textbf{\ref{fig:perexpl}(g)} a triangle appears but the circles centered at the three corners immediately cover the whole triangle.

The points $(b,d)$ thus obtained can be used as coordinates and plotted on a plane, resulting in the so called \textit{persistence diagram}. Since the topological features (connected components, holes) can only die \textit{after} they are born ($d\ge b$), necessarily each point appears on or above the diagonal line $y = x$. The persistence diagram corresponding to the data in Figure \textbf{\ref{fig:perexpl}} is shown in Figure \textbf{\ref{fig:perexpl2}}. The black dots, $D_{0i}$, on the vertical axis represent the `deaths' of connected components; the lowest being the aforementioned $(b,d)=(0,0.47)$, the highest, $(b,d)=(0,2.67)$, corresponding to Figure \textbf{\ref{fig:perexpl} (g)}. The red triangles $D_{1i}$, represent the birth- and death times of the holes.

Based on persistence diagrams, several descriptive summarizing functions have been proposed in the literature. For example
rank functions \cite{robins2016}, landscapes and silhouettes \cite{bubenik2015, chazal2014} and accumulated persistence functions \cite{biscio2016}.
In this paper we follow the persistence landscapes approach, but any other summary statistic could also be used for testing.

We first describe in words how to construct a landscape from a persistence diagram. Then, the formal definition follows.
For each point $(b,d)$ in the persistence diagram, count the number of points to its left top (north-west). This is the \textit{rank} of the point $(b,d)$ and it can be interpreted as the number of features that are alive at time $b$ and that are still alive at time $d$. Then, draw horizontal and vertical lines from each point $(b,d)$ in the persistence diagram to the diagonal and `tip the diagram on its side'. Then take the contour of the projection of the points with the same rank. This results in the so-called landscape. This is done for connected components and holes separately, see Figure \textbf{\ref{fig:perrank}}.

\begin{figure}[!h]
\centering
\includegraphics[width=6cm]{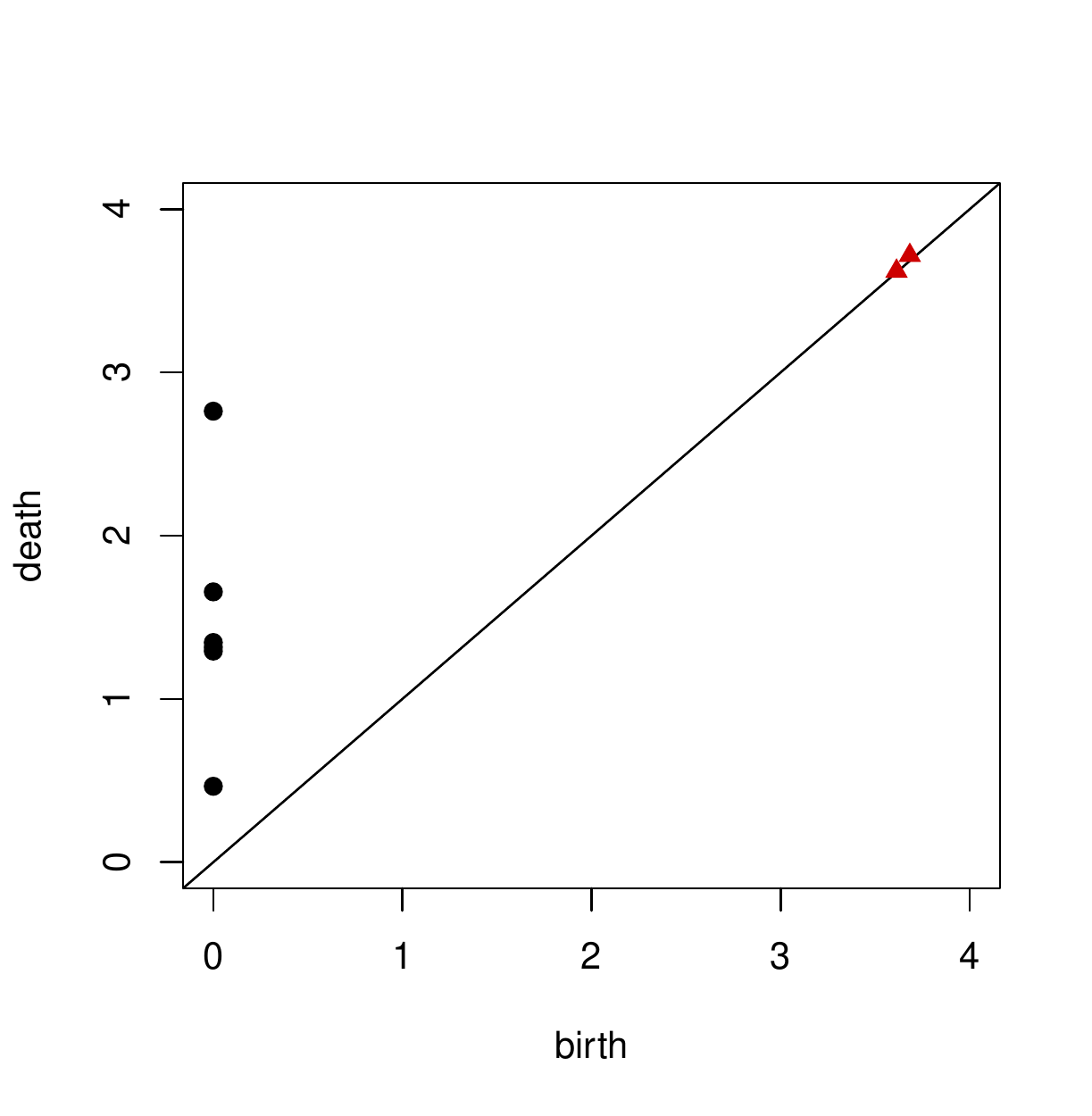}
\caption{Persistence Diagram. The black dots indicate the birth- and death time of connected components and the red triangles the birth- and death times of the holes. The data are the same as those used for Figure \textbf{\ref{fig:perexpl}}.}
\label{fig:perexpl2}
\end{figure}

\begin{figure}[!h]
\centering
\subfloat[]{\includegraphics[width=6cm,angle=-90]{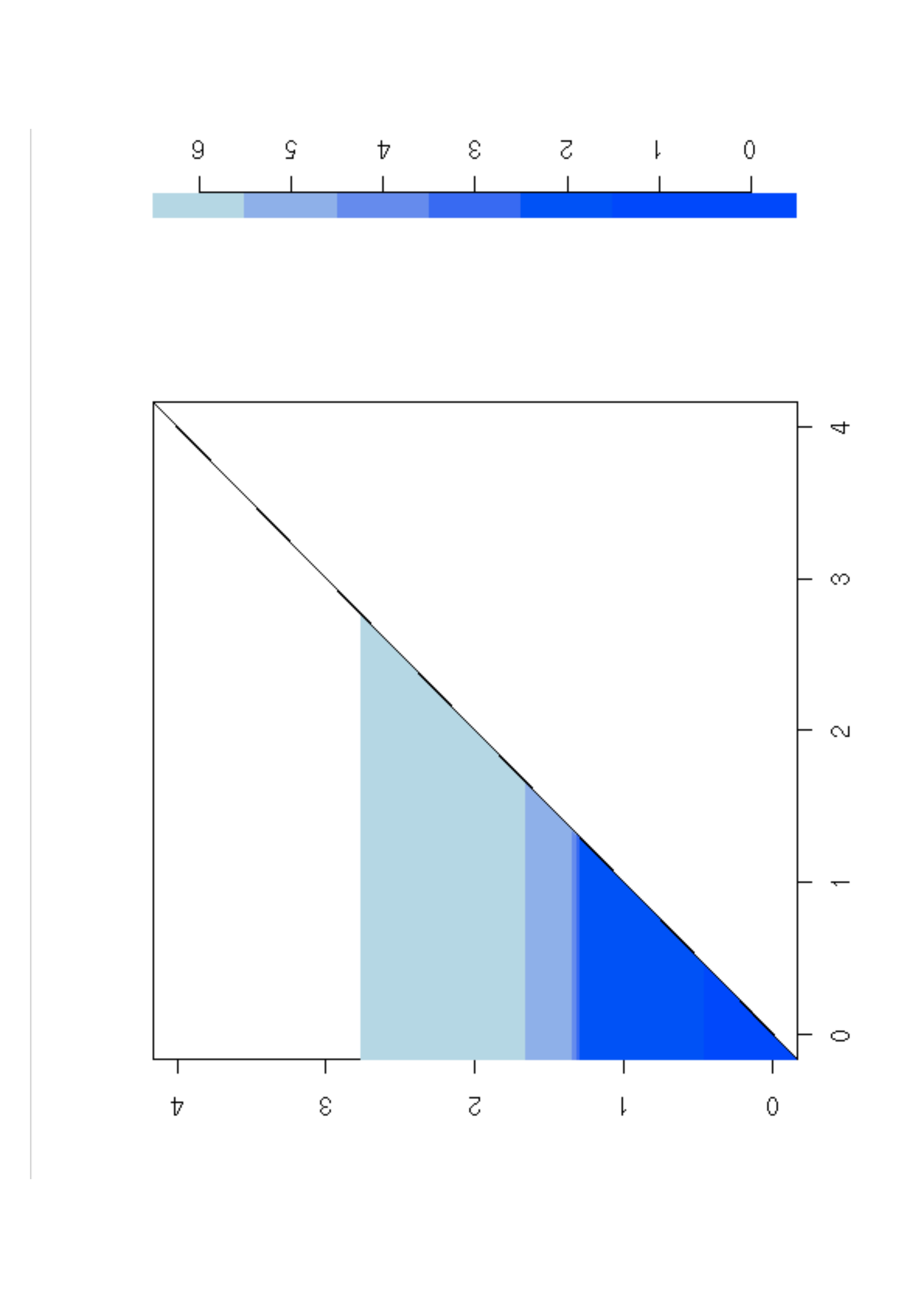}}
\subfloat[]{\includegraphics[width=6cm,angle=-90]{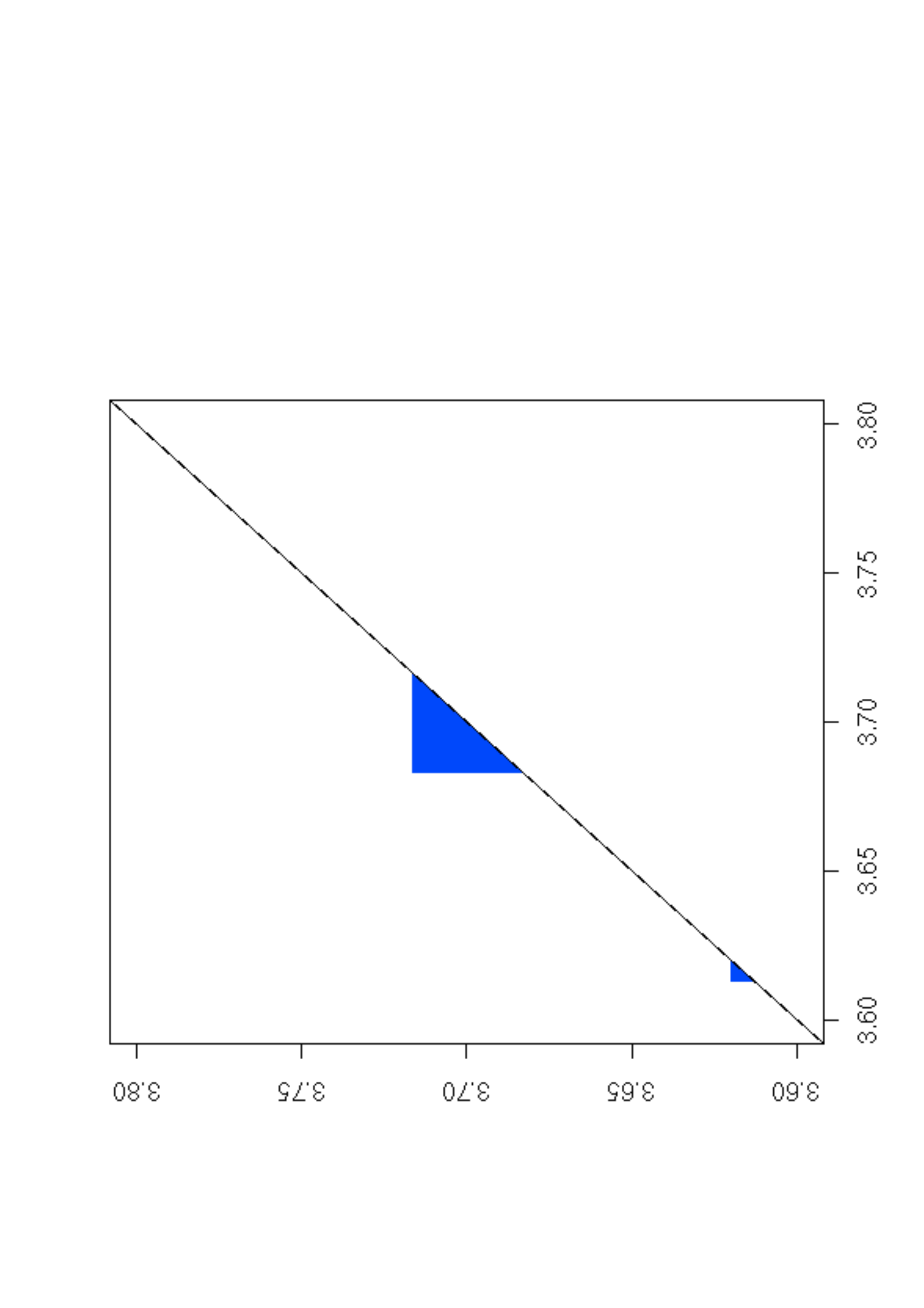}}

\subfloat[]{\includegraphics[width=7cm]{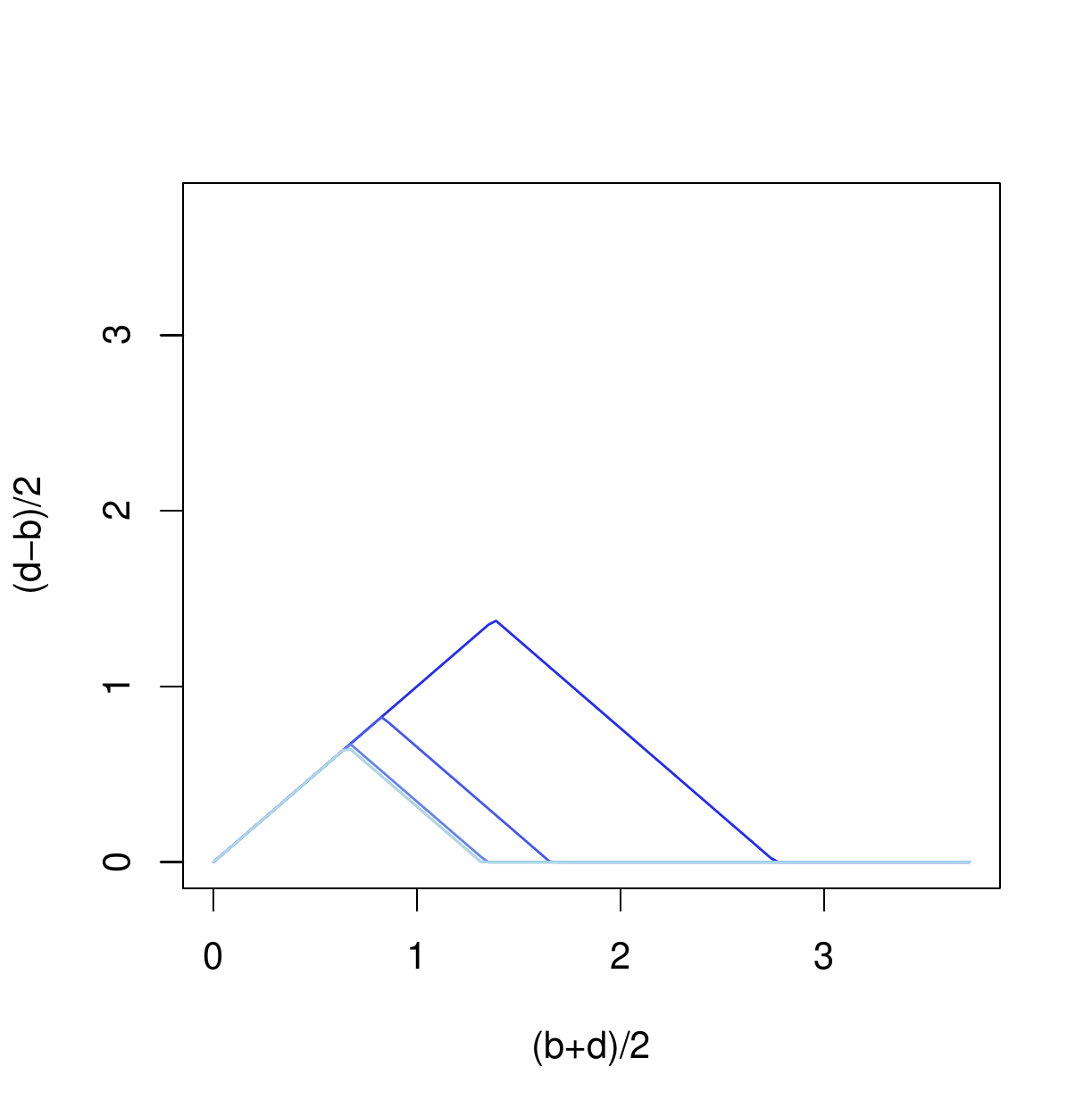}}
\subfloat[]{\includegraphics[width=7cm]{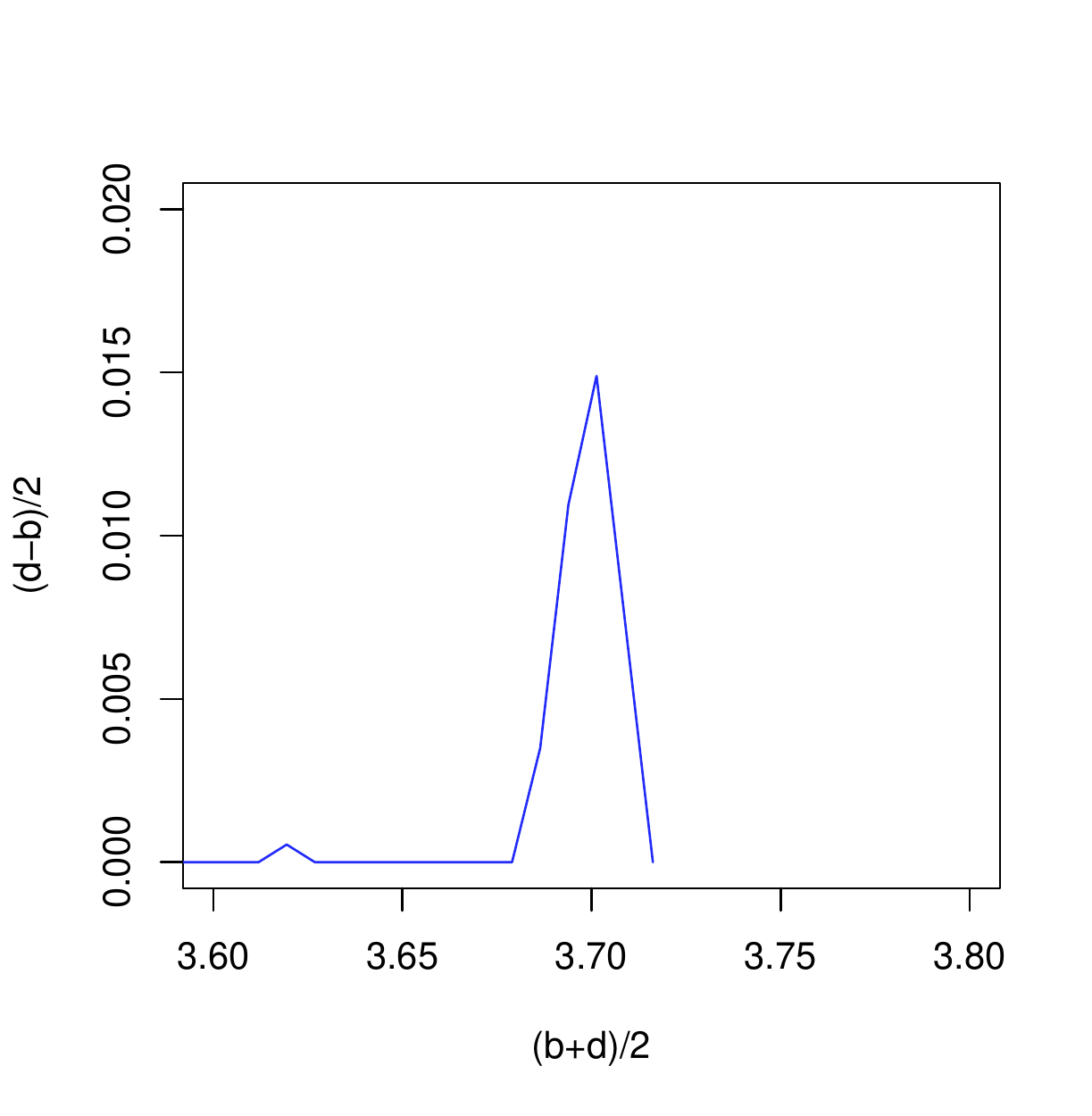}}
\caption{Rank function for connected components (a) and holes (b) Persistence Landscapes for connected components (c) and holes (d)}
\label{fig:perrank}
\end{figure}


More formally, a persistence landscape is a sequence of continuous, piecewise linear functions $\lambda(k,\cdot):\mathbb{R}^+\rightarrow\mathbb{R}^+, \,\,\,\, k=1,2,\dots$. Denote the set of `persistence points' in the persistence diagram by $D$. Then for each $p=(b,d)\in D$ define the triangular functions
\begin{equation*}
\Lambda_p(t)
=\Biggl\{
\begin{array}{rl}
t-b&t\in [b,\frac{b+d}{2}]\\
d-t&t\in (\frac{b+d}{2},d]\\
0&otherwise.\\
\end{array}
\label{eq:trianglefun}
\end{equation*}
Then, the persistence landscape of the persistence diagram is defined by
\begin{equation}
\lambda_D(k,t)=k\max_{p\in D}\Lambda_p(t), \,\,\,\,\,\,\,\, t\ge0, \,\, k\in \mathbb{N},
\end{equation}
where $k\max$ is the $k$th largest value in the set. 

Our test will be the contrast between the observed landscape and a landscape one would expect under the null hypothesis that the 3D structure is Poisson-Voronoi. For this \textit{mean landscape}, we use the conditional expectation of the landscape given that $N_{2D}=n_{2D}$ and approximate this using the simulation procedure described in Section \textbf{\ref{subsec:testper}}. To be more specific,
\begin{equation}
\bar \lambda_{D_j}(k,t)=\frac{1}{n}\sum_{i=1}^n \lambda_{D_j (i)}(k,t) \,\,\, j=0,1,\,\,\,\, t\ge0,
\end{equation}
where $n$ is the number of 2D Poisson-Voronoi sections generated with $N_{2D}=n_{2D}$.
Inspired by the approach proposed in \cite{robins2016}, the test statistics are then given by the distance between persistence landscapes and mean persistence landscapes using $L^2$ norm,
\begin{equation} \label{eq:testperland}
\begin{split}
 L_0&=\|\hat \lambda_{D_0}- \bar \lambda_{D_0}\|_2=\biggl[\sum_{k=1}^{n_{2D}-1}\int_{0}^T(\hat \lambda_{D_0}(k,t)- \bar \lambda_{D_0}(k,t))^2 \mathrm{dt}\biggr]^{\frac{1}{2}} \\
 L_1&=\|\hat \lambda_{D_1}- \bar \lambda_{D_1}\|_2=\biggl[\sum_{k=1}^{\infty}\int_{0}^T(\hat \lambda_{D_1}(k,t)- \bar \lambda_{D_1}(k,t))^2 \mathrm{dt} \biggr]^{\frac{1}{2}}.
\end{split}
\end{equation}
Here $\hat \lambda_{D_j}(k,\cdot), \,\,\,j=0,1$ is the $k$-th landscape for the connected components ($j=0$) and for the holes ($j=1$) for the 2D section under study.
If both $L_0$ and $L_1$ are less than the threshold quantiles, the Poisson-Voronoi hypothesis is not rejected.

\section{Bootstrap Confidence Interval for $\lambda$ and Quantiles of the model tests}
\label{sec:quantile}
In \cite{lorzhahn93}, the authors carry out a simulation for estimating the quantiles of the test statistics proposed there.
Cells of 3D spatial Poisson-Voronoi diagrams are generated with $\lambda=1$. Then, a random planar section of the 3D structure is taken and square observation windows are drawn in the section planes with an expected number of 50, 100, 150 and 200 cells, respectively.

We provide an expression for the distribution of any test statistic given the number of observed cells in the section, separating a part that depends on the parameter $\lambda$ and a part that does not.
We consider the situation where we see a window (with known shape and size) of a $2D$ planar section of a $3D$ Poisson-Voronoi diagram in a $3D$ object of known geometry.
As before, denote by $N_{3D}$ the number of cells in the $3D$ object and $N_{2D}$ the number of $2D$ cells visible in the $2D$ window. Lemma \textbf{\ref{lemma:lambdadep}} below  gives an expression of the null distribution of a test statistic $T$, given $n_{2D}$ cells are observed in the section. It separates a part that depends on the intensity parameter $\lambda$ and a part that does not.

\begin{lemma}
Let $T$ denote a general model test for the Poisson-Voronoi assumption validation. The conditional probability $P_\lambda(T\ge t|N_{2D}=n_{2D})$ can be expressed as
\begin{equation} \label{eq1lambda}
\small
\begin{split}
P_\lambda(T\ge t|N_{2D}=n_{2D}) & =\frac{\sum_{k=n_{2D}}^\infty P(T\ge t |N_{3D}=k,\, N_{2D}=n_{2D})\, P(N_{2D}=n_{2D}|N_{3D}=k)\frac{(\lambda \mathcal{V})^k}{k!}}{\sum_{j=n_{2D}}^\infty P(N_{2D}=n_{2D}|N_{3D}=j)\frac{(\lambda \mathcal{V})^j}{j!}}
\end{split}
\end{equation}
\label{lemma:lambdadep}
\end{lemma}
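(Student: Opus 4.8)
The plan is to start from the elementary definition of conditional probability and then decompose both the joint probability and the marginal by conditioning on the total number of three-dimensional cells $N_{3D}$, which is the only quantity in the model whose distribution directly carries $\lambda$. Writing
\begin{equation*}
P_\lambda(T \ge t \mid N_{2D} = n_{2D}) = \frac{P_\lambda(T \ge t,\, N_{2D} = n_{2D})}{P_\lambda(N_{2D} = n_{2D})},
\end{equation*}
I would expand the numerator by the law of total probability over the values $N_{3D} = k$ and factor each joint conditional probability through the multiplication rule as
\begin{equation*}
P(T \ge t,\, N_{2D} = n_{2D} \mid N_{3D} = k) = P(T \ge t \mid N_{3D} = k,\, N_{2D} = n_{2D})\, P(N_{2D} = n_{2D} \mid N_{3D} = k).
\end{equation*}
The denominator is treated analogously, giving a single sum over $j$ of $P(N_{2D} = n_{2D} \mid N_{3D} = j)\, P_\lambda(N_{3D} = j)$.

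Next I would substitute the Poisson law $P_\lambda(N_{3D} = k) = e^{-\lambda \mathcal{V}} (\lambda \mathcal{V})^k / k!$, valid because in a $3D$ object of volume $\mathcal{V}$ the number of generating points is Poisson with mean $\lambda \mathcal{V}$. The factor $e^{-\lambda \mathcal{V}}$ is common to every summand in both numerator and denominator and hence cancels, leaving only the $(\lambda \mathcal{V})^k / k!$ weights. To obtain the stated summation ranges I would observe that each visible $2D$ cell is the planar section of a distinct $3D$ cell, so $N_{2D} \le N_{3D}$ almost surely; consequently $P(N_{2D} = n_{2D} \mid N_{3D} = k) = 0$ for every $k < n_{2D}$, so the sums may be restarted at $k = n_{2D}$ and $j = n_{2D}$. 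Collecting these pieces reproduces formula \eqref{eq1lambda} verbatim.

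The substantive step — and the one I expect to be the main obstacle — is justifying that the two conditional factors $P(T \ge t \mid N_{3D} = k,\, N_{2D} = n_{2D})$ and $P(N_{2D} = n_{2D} \mid N_{3D} = k)$ carry no dependence on $\lambda$, which is precisely what makes the displayed separation meaningful. This rests on the conditional-uniformity property of the homogeneous Poisson process: given $N_{3D} = k$, the $k$ generator points are independent and uniformly distributed on the $3D$ object irrespective of $\lambda$. Since the Voronoi tessellation, its fixed planar section, the induced count $N_{2D}$, and any geometric test statistic $T$ are all deterministic functionals of these $k$ points together with the fixed window geometry, their joint conditional law given $N_{3D} = k$ is $\lambda$-free. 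I would write this argument out carefully, as it is the place where the Poisson assumption is genuinely used; the remaining manipulations are purely formal.
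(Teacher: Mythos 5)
Your proposal is correct and takes essentially the same route as the paper's own proof: both decompose over the values of $N_{3D}$, factor via the multiplication rule, invoke the $\lambda$-independence of the conditional laws given $N_{3D}=k$, substitute the Poisson probabilities $e^{-\lambda\mathcal{V}}(\lambda\mathcal{V})^k/k!$ with the exponential cancelling in the ratio, and truncate the sums at $k=n_{2D}$ since $N_{2D}\le N_{3D}$. Your explicit justification of the $\lambda$-free conditional factors through the conditional uniformity of the homogeneous Poisson process is a point the paper merely asserts, so it strengthens rather than alters the argument.
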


\begin{proof}
\begin{equation} \label{eq2lambda}
\small
\begin{split}
P_\lambda(T\ge t|N_{2D}=n_{2D}) & = \sum_{k=0}^\infty P_\lambda(T\ge t,\, N_{3D}=k| N_{2D}=n_{2D}) \\
& = \sum_{k=n_{2D}}^\infty P_\lambda(T\ge t,\, N_{3D}=k| N_{2D}=n_{2D}) \\
 & =  \sum_{k=n_{2D}}^\infty P_\lambda(T\ge t |N_{3D}=k,\, N_{2D}=n_{2D})\, P_\lambda(N_{3D}=k|N_{2D}=n_{2D})\\
  & =  \sum_{k=n_{2D}}^\infty P(T\ge t |N_{3D}=k,\, N_{2D}=n_{2D})\, P_\lambda(N_{3D}=k|N_{2D}=n_{2D}).
\end{split}
\end{equation}
In the last equality the $\lambda$-dependence disappears from the first factor, because, conditionally on $N_{3D}$, the distribution of $T$ does not depend on $\lambda$.
The $\lambda$-dependent part in eq. \textbf{\ref{eq2lambda}} can be made more explicit also using that conditionally on $N_{3D}$, the distribution of $N_{2D}$ does not depend on $\lambda$:
\begin{equation} \label{eq3lambda}
\small
\begin{split}
 P_\lambda(N_{3D}=k|N_{2D}=n_{2D})&=\frac{P(N_{2D}=n_{2D}|N_{3D}=k)P_\lambda(N_{3D}=k)}{P_\lambda(N_{2D}=n_{2D})}\\
& = \frac{P(N_{2D}=n_{2D}|N_{3D}=k)P_\lambda(N_{3D}=k)}{\sum_{j=n_{2D}}^\infty P(N_{2D}=n_{2D}|N_{3D}=j)P_\lambda(N_{3D}=j)}\\
& = \frac{P(N_{2D}=n_{2D}|N_{3D}=k)\frac{(\lambda \mathcal{V})^k}{k!}}{\sum_{j=n_{2D}}^\infty P(N_{2D}=n_{2D}|N_{3D}=j)\frac{(\lambda \mathcal{V})^j}{j!}}.
\end{split}
\end{equation}

Combining eqs. \textbf{\ref{eq2lambda}} and \textbf{\ref{eq3lambda}} yields eq. \textbf{\ref{eq1lambda}}.
\end{proof}

For computing p-values in practice, the value of $\lambda$ is needed. 
In order to take into account the uncertainty in the estimate of $\lambda$ while computing p-values for model tests, we compute a 90\% confidence interval for $\lambda$.
To do this, as this value is not known, we propose a bootstrap approach.  
More precisely, we want to compute 
\begin{equation} \label{eqboot}
\small
\begin{split}
 P_\lambda(\sqrt{\hat{\lambda}}-\sqrt{\lambda}\le u)&=\sum_{k=0}^\infty P_\lambda(\sqrt{\hat{\lambda}}-\sqrt{\lambda}\le u, N_{3D}=k)\\
& = \sum_{k=0}^\infty P_\lambda(\sqrt{\hat{\lambda}}-\sqrt{\lambda}\le u|N_{3D}=k) P_\lambda(N_{3D}=k)
\end{split}
\end{equation}

The procedure can be summarized as follows: first we estimate $\lambda$ from a real 2D image, using $\hat \lambda_a$ (eq. \textbf{\ref{eq:estimators}}). 
For computing $P_\lambda(N_{3D}=k)$, $P_{\hat \lambda_a}(N_{3D}=k)$ is then used.
Secondly, for computing $P_\lambda(\sqrt{\hat{\lambda}}-\sqrt{\lambda}\le u|N_{3D}=k)$,  $10\,000$ Poisson-Voronoi diagrams for each realization of a Poisson process with $\hat \lambda_a$ in a cube are generated. Then a 2D section from each 3D diagram is randomly taken and the number of cells in the section is used for estimating $\lambda$. Next, the probability of having exactly $k$ cells in 3D, $P(N_{3D}=k)$, is used as weight for computing a weighted mean cumulative distribution function. Finally, a square root transformation for normalizing and stabilizing the variance is used for computing the confidence set \cite{sahai1993}:

\begin{equation}
P[\sqrt{\hat{\lambda}}-l_{0.95}\le \sqrt{\lambda} \le \sqrt{\hat{\lambda}}-l_{0.05}]\approx 0.90
\end{equation}
As an example, if $\hat \lambda_a=0.2$ (as in the application shown in Section \textbf{\ref{sec:application}} $n_{2D}=50$ and window size$=10\times10$), the resulting $90\%$-confidence set is given by:

\begin{equation}
[0.1498;0.2439]
\label{eq:confintlambda}
\end{equation}

Having a confidence set for $\lambda$ at hand, the next step is to compute the null distribution described in \textbf{Lemma \ref{lemma:lambdadep}} for the various test statistics. These p-values depend on $\lambda$, but we can consider these for all $\lambda$ in the confidence set constructed. We start with the coefficient of variation as test statistic (eq.\textbf{ \ref{eq:cv}}). 
In Figure \textbf{\ref{fig:cdfcvcomp}} it is possible to see the difference between the cumulative distribution functions of the coefficient of variation of the 2D sectional cells area unconditioned and conditioned on seeing exactly $50$ cells in the 2D section. Moreover, the green dotted lines represent the cumulative distribution function of the coefficient of variation for the lower and upper bounds of the $\lambda$ confidence set. Note that the distance between the two cdfs is small, showing that the approach of \cite{lorzhahn93} to use an unconditional distribution in this particular setting leads to comparable results. In table \textbf{\ref{tab:quantilecv}}, quantiles for the conditional distribution of the CV of the cells area are shown ($\hat \lambda=0.2$).

\begin{figure}[!h]
\centering
\includegraphics[width=8cm]{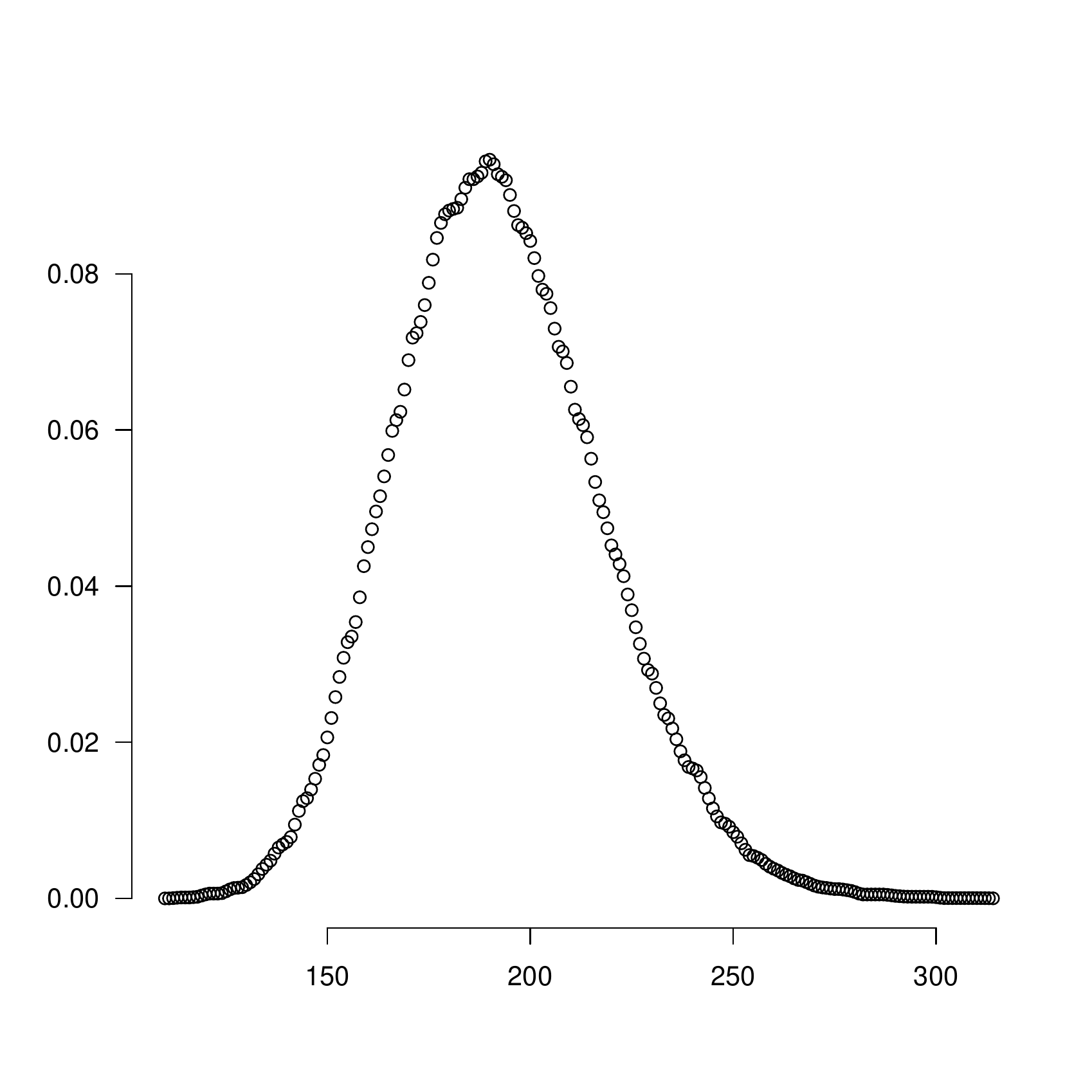}
\caption{Monte Carlo approximation of $P(N_{2D}=50|N_{3D}=k)$ }
\label{fig:p50n3d}
\end{figure}

\begin{figure}[!h]
\centering
\includegraphics[width=10cm]{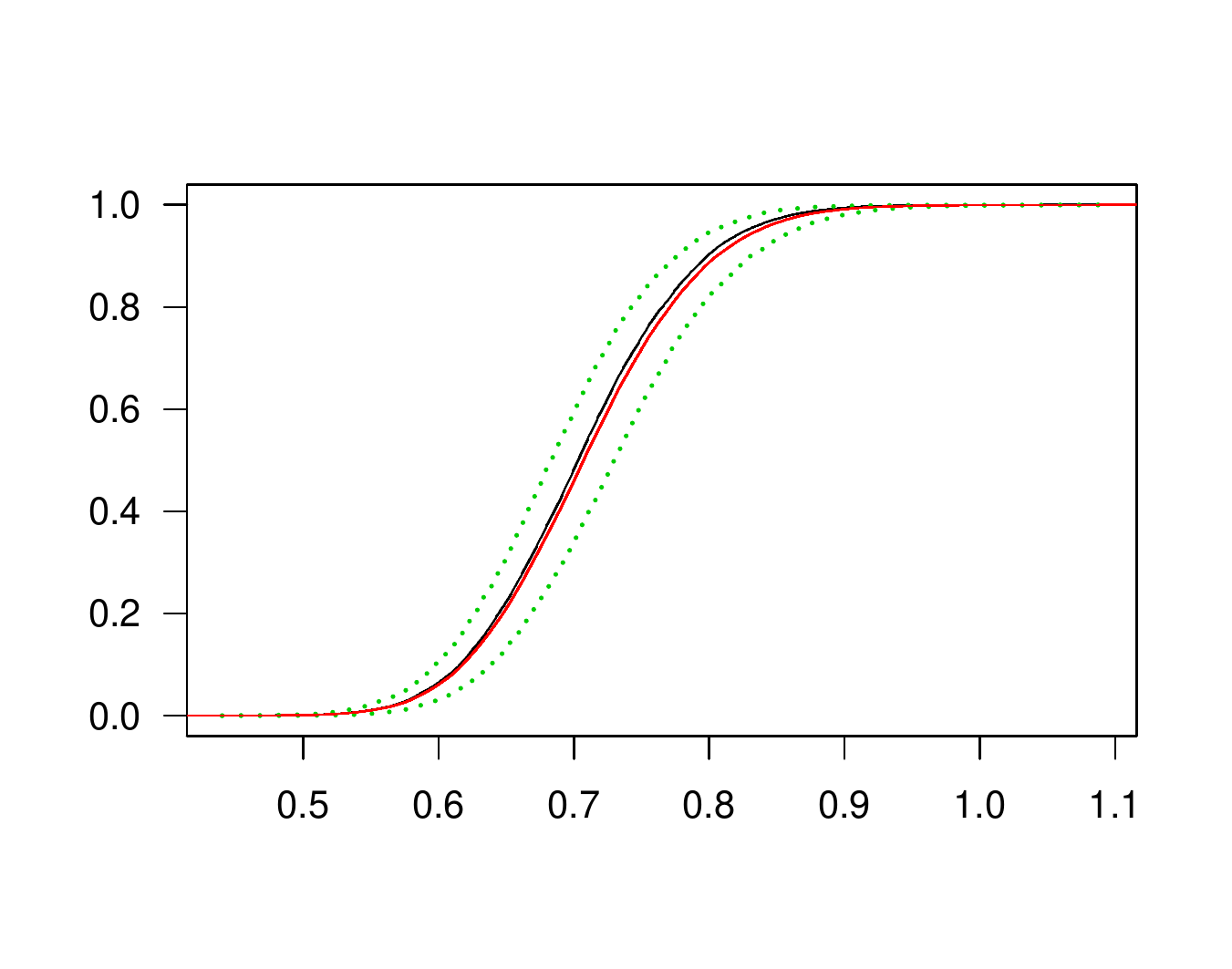}
\caption{Cumulative distribution function of the coefficient of variation of the 2D sectional cells area conditioned on $N_{2D}=50$ (black line; green dotted lines are obtained using the upper and lower limit of the confidence set for $\lambda$ (eq.\textbf{\ref{eq:confintlambda}})) and unconditioned (red line)}
\label{fig:cdfcvcomp}
\end{figure}

\begin{table}[!h]
\centering
\caption{Quantiles of the conditional distribution of the coefficient of variation of the 2D sectional cells area given that $N_{2D}=50$, ($\lambda=0.2$)}
\small
\begin{tabular}{rrrrrrrrrrrrr}
  \hline
$\alpha$& 0.005& 0.01& 0.0125&0.025&0.05& 0.1&0.9&0.95&0.975&0.9875&0.99&0.995 \\
  \hline
$c_\alpha$&0.531& 0.547& 0.553 &0.571& 0.591 &0.615 &0.798& 0.826& 0.853 &0.875& 0.883& 0.903\\
\end{tabular}
\label{tab:quantilecv}
\end{table}


In Figure \textbf{\ref{fig:cdf50c}}, the conditional weighted mean CDF for cells area (black line), its confidence bands (green dotted lines) and the unconditional mean are shown. More precisely we define
\begin{equation}
\begin{split}
\bar F_{\lambda \,n_{2D}}(x)&=\mathbb{E}_\lambda\{F_{N_{2D}}(x)|N_{2D}=n_{2D}\}=\mathbb{E}_\lambda\{\mathbb{E}(F_{N_{2D}}(x)|N_{2D}=n_{2D},\,N_{3D})\}\\
&=\sum_{k=n_{2D}}^\infty P_\lambda(N_{3D}=k|N_{2D}=n_{2D})\cdot\mathbb{E}(F_{N_{2D}=n_{2D},N_{3D}=k}(x)).
\end{split}
\end{equation}
Where $F_{N_{2D}=n_{2D},N_{3D}=k}(x)$ is the empirical distribution function of the areas given $k$ cells in $3D$ structure and $n_{2D}$ visible on the slice.
The same type of expression is used also for $\bar \lambda_{D_0}(1,t)$ and $\bar \lambda_{D_1}(1,t)$.

In Figure \textbf{\ref{fig:cdfecdfcomp}} the CDF of the test based on the supremum distance between ecdfs of the 2D sectional cells area is shown.
As for the test based on coefficient of variation, the difference between conditional and unconditional approach is relatively small.

Switching to the test based on persistence landscapes, Fig. \textbf{\ref{mlan0cond}-\ref{mland1cond}} are visualizations of $k$ mean persistence landscapes conditioned on $N_{2D}=50$ for connected components and holes respectively, when $\hat \lambda=0.2$.
Instead, Fig. \textbf{\ref{maxmeanland0cb2}-\ref{maxmeanland1cb2}} are the conditional maximum weighted means (black lines) and their confidence bands (green dotted lines). 

In Figures \textbf{\ref{fig:cdfl0comp}-\ref{fig:cdfl1comp}}  the CDF of the test based on the $L_2$ distance between persistence landscapes (connected components and holes) are shown.
Also in this case, the difference between conditional and unconditional approach seems to be irrilevant.

\begin{figure}[!h]
\centering
\includegraphics[width=8cm]{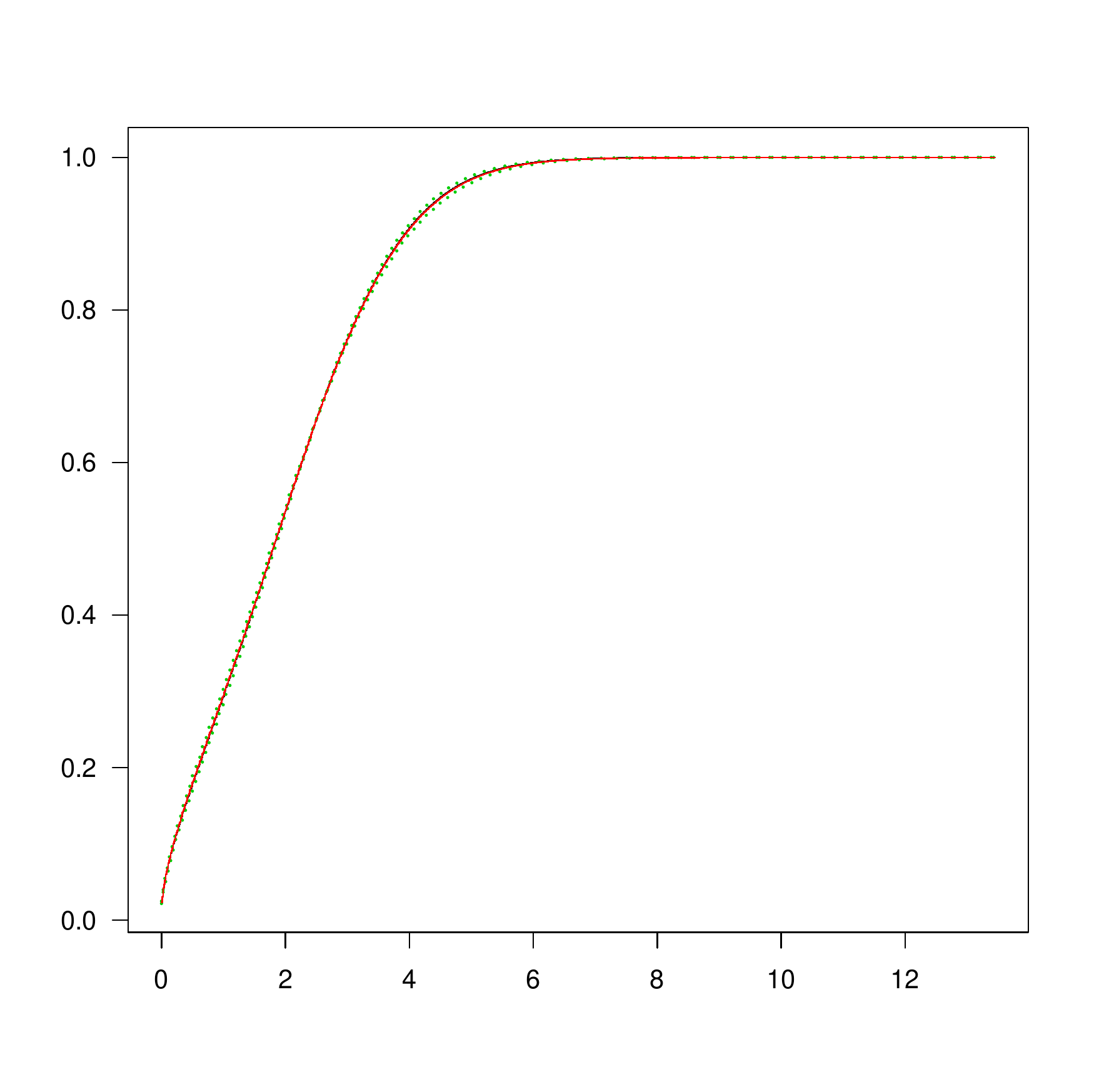}
\caption{Cumulative distribution function of the 2D sectional cells area conditioned on $N_{2D}=50$ (black line; green dotted lines are obtained using the upper and lower limit of the confidence set for $\lambda$ (eq.\textbf{\ref{eq:confintlambda}})) and unconditioned (red line)}
\label{fig:cdf50c}
\end{figure}

\begin{figure}[!h]
\centering
\includegraphics[width=10cm]{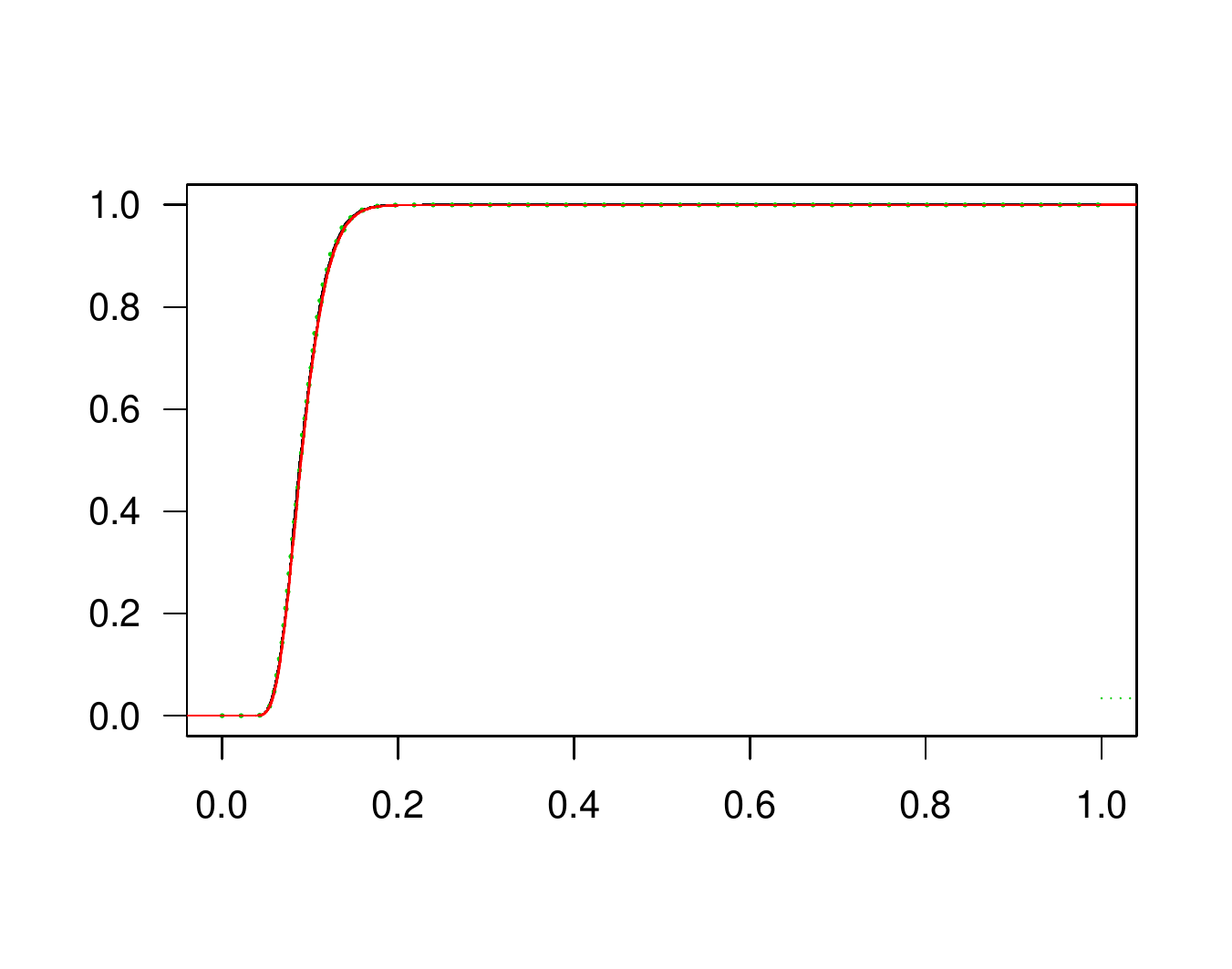}
\caption{Cumulative distribution function of the ecdf test of the 2D sectional cells area conditioned on $N_{2D}=50$ (black line; green dotted lines are obtained using the upper and lower limit of the confidence set for $\lambda$ (eq.\textbf{\ref{eq:confintlambda}})) and unconditioned (red line)}
\label{fig:cdfecdfcomp}
\end{figure}

\begin{figure}[!h]
\centering
\includegraphics[width=9cm]{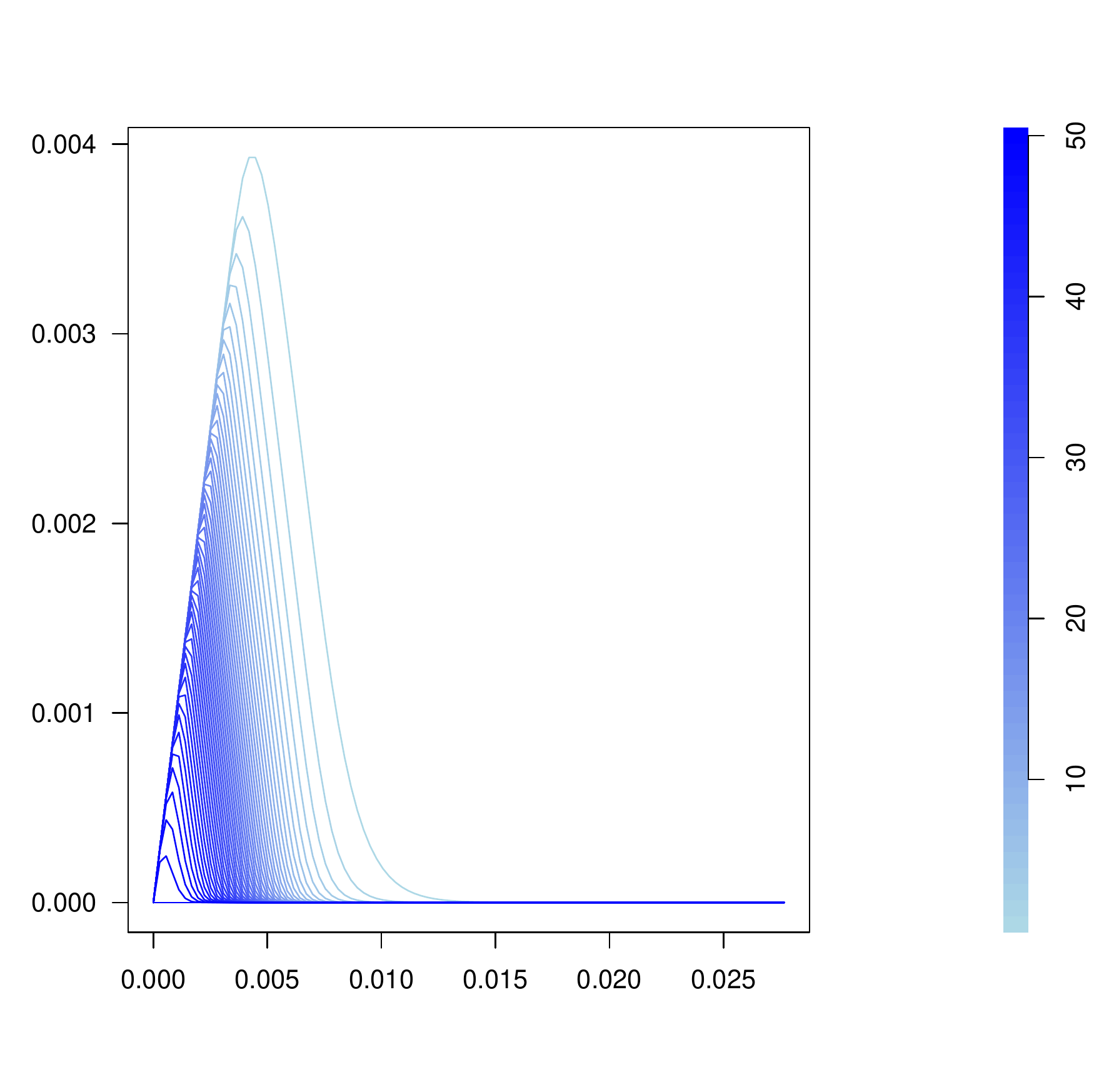}
\caption{$k$ mean landscapes conditioned on $N_{2D}=50$ (connected components) ($\hat \lambda=0.2$)}
\label{mlan0cond}
\end{figure}

\begin{figure}[!h]
\centering
\includegraphics[width=9cm]{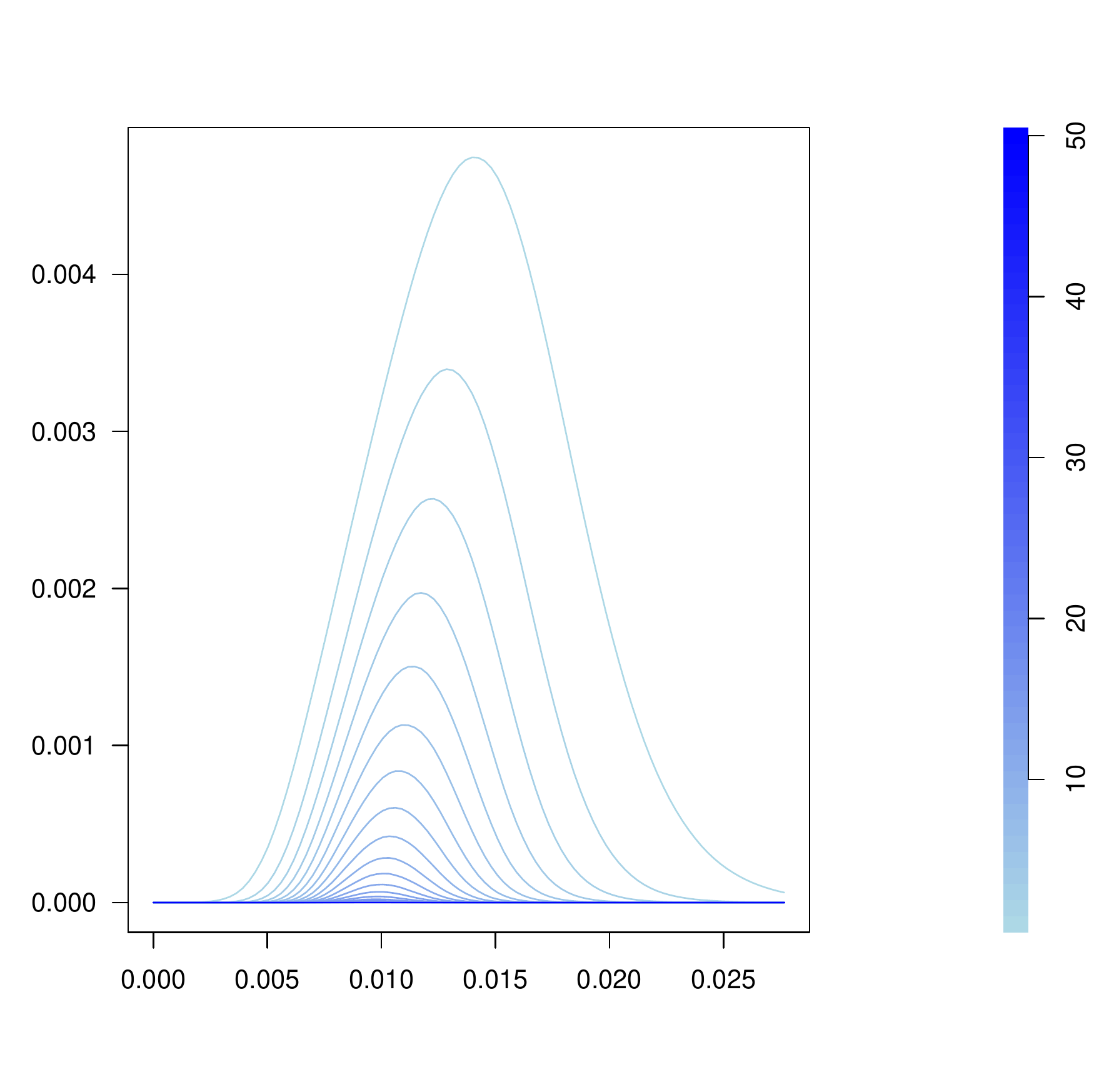}
\caption{$k$ mean landscapes conditioned on $N_{2D}=50$  (holes) ($\hat \lambda=0.2$)}
\label{mland1cond}
\end{figure}

\begin{figure}[!h]
\centering
\includegraphics[width=8cm]{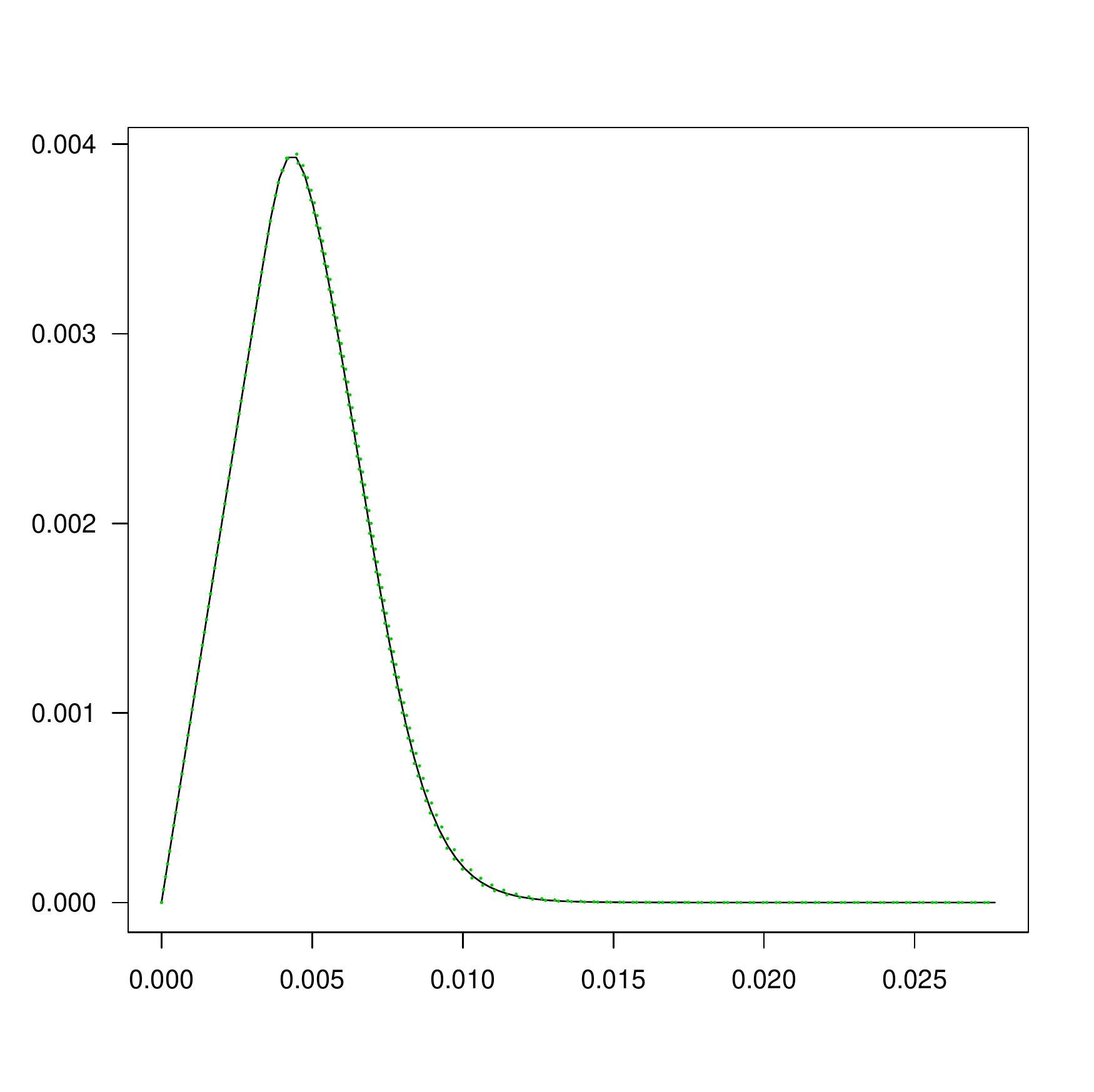}
\caption{Max weighted mean landscape (connected components) for sections with exactly $50$ 2D sectional cells (black line; green dotted lines are obtained using the upper and lower limit of the confidence set for $\lambda$ (eq.\textbf{\ref{eq:confintlambda}}))}
\label{maxmeanland0cb2}
\end{figure}

\begin{figure}[!h]
\centering
\includegraphics[width=8cm]{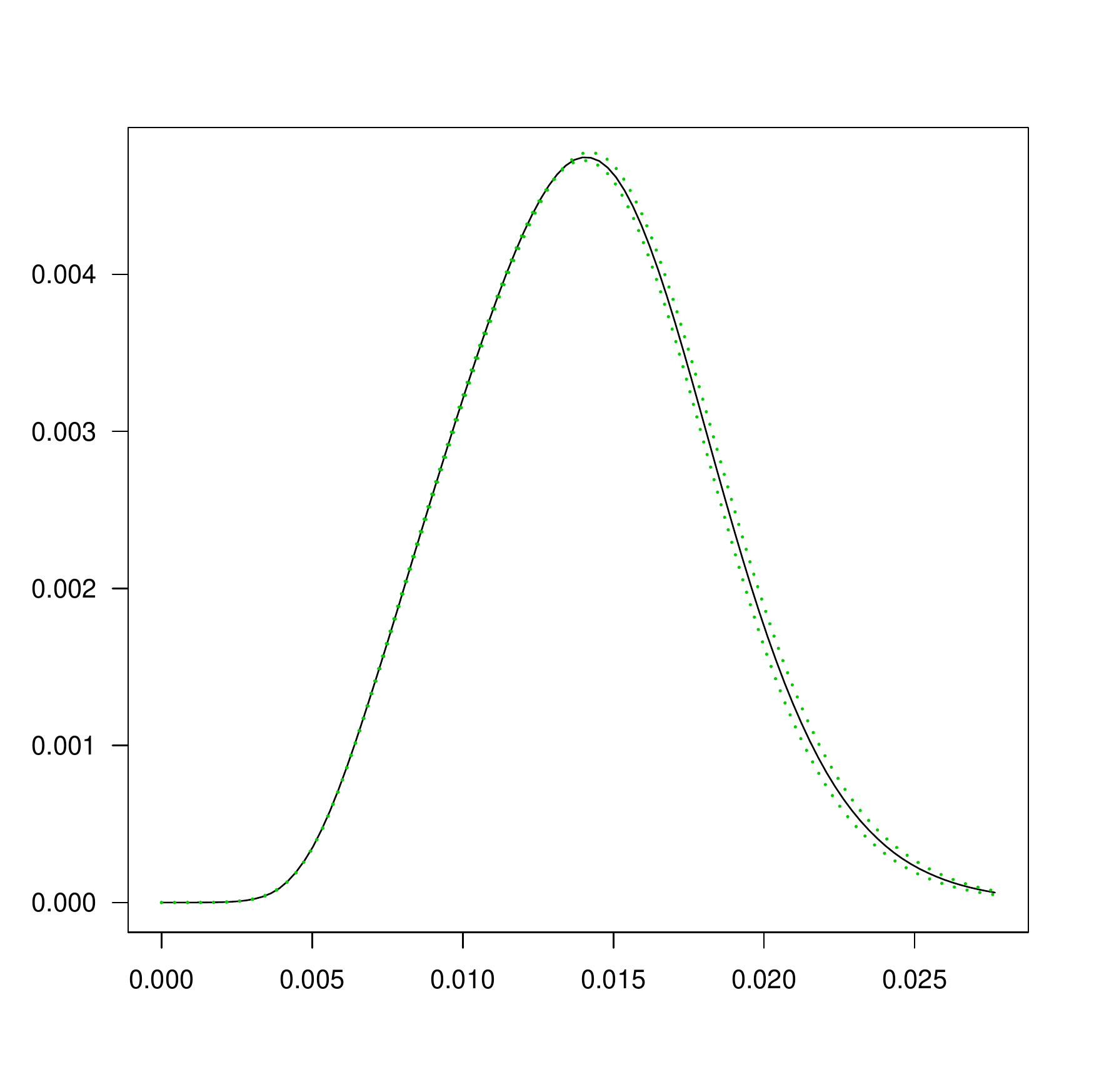}
\caption{Max weighted mean landscape (holes) for sections with exactly $50$ 2D sectional cells (black line; green dotted lines are obtained using the upper and lower limit of the confidence set for $\lambda$ (eq.\textbf{\ref{eq:confintlambda}}))}
\label{maxmeanland1cb2}
\end{figure}

\begin{figure}[!h]
\centering
\includegraphics[width=8cm]{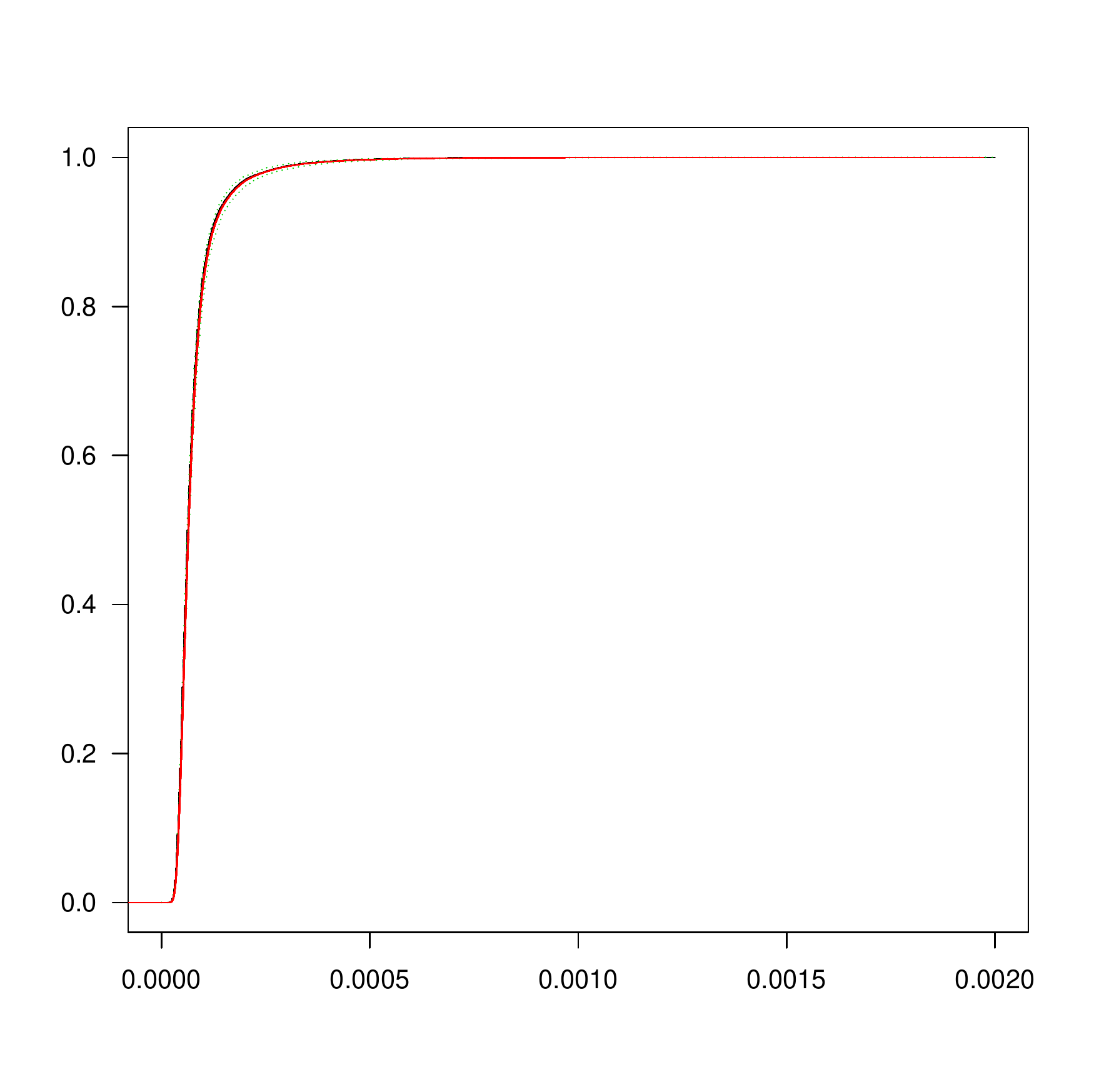}
\caption{Cumulative distribution function of the test statistic based on the $L_2$ distance between persistence landscapes $L_0$, (\textbf{\ref{eq:testperland}}), of the 2D sectional cells area conditioned on $N_{2D}=50$ (black line; green dotted lines are obtained using the upper and lower limit of the confidence set for $\lambda$ (eq.\textbf{\ref{eq:confintlambda}})) and unconditioned (red line)}
\label{fig:cdfl0comp}
\end{figure}

\begin{figure}[!h]
\centering
\includegraphics[width=8cm]{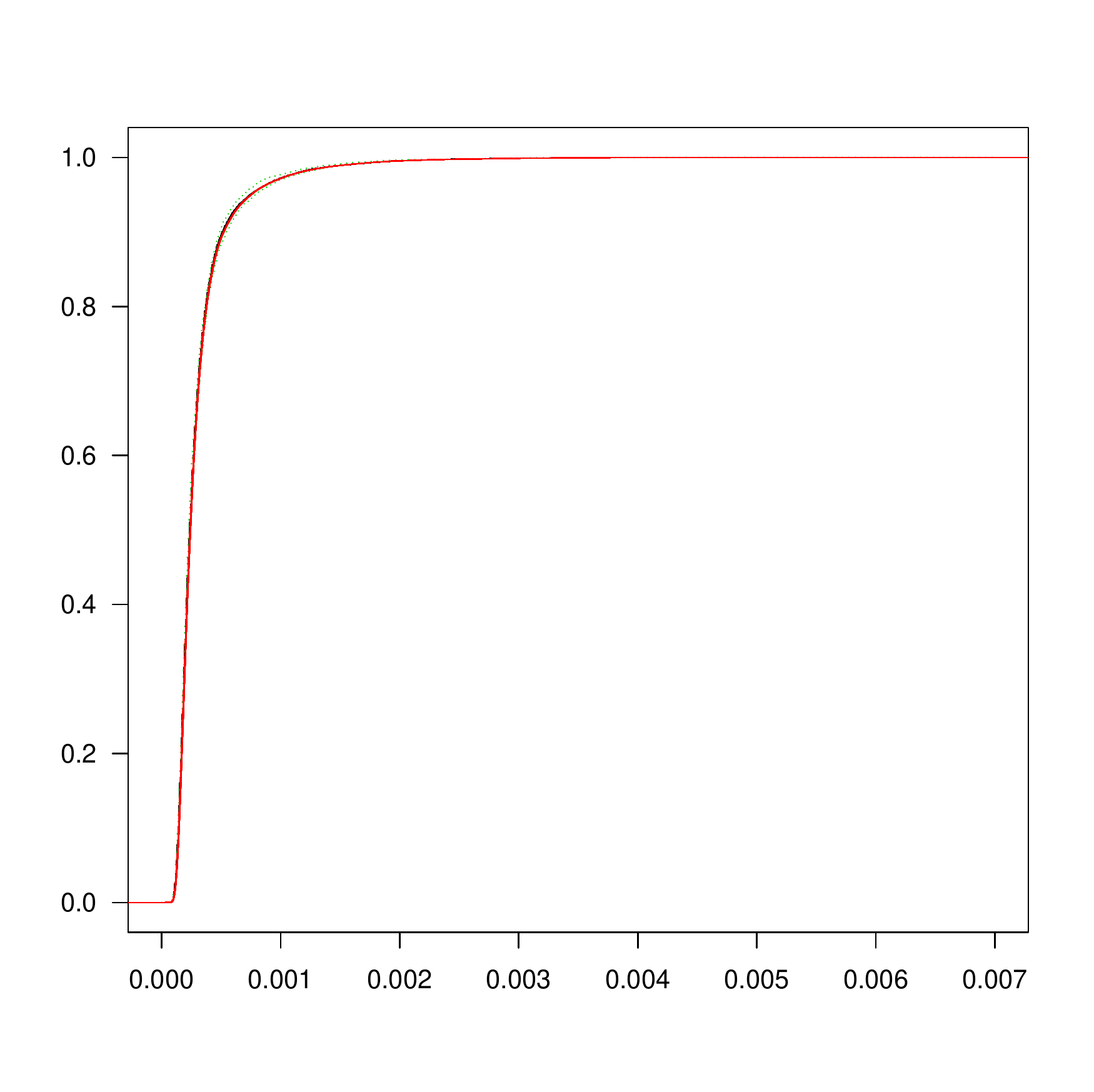}
\caption{Cumulative distribution function of the test statistic based on the $L_2$ distance between persistence landscapes $L_1$, (\textbf{\ref{eq:testperland}}), of the 2D sectional cells area conditioned on $N_{2D}=50$ (black line; green dotted lines are obtained using the upper and lower limit of the confidence set for $\lambda$(eq.\textbf{\ref{eq:confintlambda}})) and unconditioned (red line)}
\label{fig:cdfl1comp}
\end{figure}

For computing the quantiles of the distribution of the model tests based on CDF and on persistence landscape, we use a `leave one out' procedure. Here we use the $B$ 2D slices generated as follows:
\begin{itemize}
\item For the test based on CDFs difference (\textbf{\ref{eq:testcdf}}):
\begin{equation}\label{eq:cdfquant}
d_i=\sup_{x\in \mathbb{R}}|\bar F_{\lambda \,n_{2D}(-i)}(x)-\hat F_{n_{2D}(i)}(x)|, \,\,\,\,\,\, 1\le i\le B
\end{equation}
\item For the test based on persistence landscapes difference (\textbf{\ref{eq:testperland}}):
\begin{equation} \label{eq:perlandquant}
\begin{split}
 l_{0(i)}&=\biggl[\sum_{k=1}^{n_{2D}-1}\int_{0}^T(\hat \lambda_{D_0(i)}(k,t)- \bar \lambda_{D_0(-i)}(k,t))^2 \mathrm{dt}\biggr]^{\frac{1}{2}} \,\,\,\,\,\, 1\le i\le B\\
 l_{1(i)}&=\biggl[\sum_{k=1}^\infty\int_{0}^T(\hat \lambda_{D_1(i)}(k,t)- \bar \lambda_{D_1(-i)}(k,t))^2 \mathrm{dt}\biggr]^{\frac{1}{2}}  \,\,\,\,\,\, 1\le i\le B
\end{split}
\end{equation}
\end{itemize}
Here $\hat F_{n_{2D}(i)}$, $\hat \lambda_{D_0(i)}$ and $\hat \lambda_{D_1(i)}$ are the empirical results for the section $i$ and $\bar F_{n_{2D}(-i)}$, $\bar \lambda_{D_0(-i)}$ and $\bar \lambda_{D_1(-i)}$ are the mean result computed for all the $B$ sections leaving out the $i$-th.

\begin{table}[!h]
\centering
\caption{Quantiles of the conditional distribution of the test based on the difference between cumulative distribution functions of the 2D sectional cells area given that $N_{2D}=50$, ($\hat \lambda=0.2$)}
\footnotesize
\begin{tabular}{rrrrrrrrrrrrr}
  \hline
$\alpha$& 0.005& 0.01& 0.0125&0.025&0.05& 0.1&0.9&0.95&0.975&0.9875&0.99&0.995 \\
  \hline
$d_\alpha$&0.047& 0.050& 0.051&0.054&0.058& 0.064 &0.123 &0.135& 0.146&0.155&0.159& 0.168\\
\end{tabular}
\label{tab:quantilet0}
\end{table}

\begin{table}[!h]
\centering
\caption{Quantiles of the conditional distribution of the test based on the difference between the observed landscapes and the conditional mean landscapes (connected components) of the 2D sectional cells area given that $N_{2D}=50$, ($\hat \lambda=0.2$)}
\footnotesize
\begin{tabular}{lrrrrrrrrrrrr}
  \hline
$\alpha$& 0.005& 0.01& 0.0125&0.025&0.05& 0.1&0.9&0.95&0.975&0.9875&0.99&0.995 \\
  \hline
$l_{0\alpha}\times 10^{-5}$&$2.402$&$ 2.602$& $2.802 $&$3.003$&$3.403$& $3.803 $&10 &20& 20&30&30&40\\
\end{tabular}
\label{tab:quantilet0}
\end{table}

\begin{table}[!h]
\centering
\caption{Quantiles of the conditional distribution of the test based on the difference between the observed landscapes and the conditional mean landscapes (holes) of the 2D sectional cells area given that $N_{2D}=50$, ($\hat \lambda=0.2$)}
\footnotesize
\begin{tabular}{rrrrrrrrrrrrr}
  \hline
$\alpha$& 0.005& 0.01& 0.0125&0.025&0.05& 0.1&0.9&0.95&0.975&0.9875&0.99&0.995 \\
  \hline
$l_{1\alpha}\times 10^{-5}$&$9.109$&$ 9.810$& $9.810$ &$10$&$10$& $10 $&50 &70& 100&140&150& 190\\
\end{tabular}
\label{tab:quantilet1}
\end{table}

\clearpage
\section{Application to single-phase alumina ceramics}
In \cite{lorzhahn93}, it is stated that single-phase microstructures, e.g. alumina ceramics, can be well approximated by Poisson-Voronoi diagrams.
Using the same images shown in \cite{lorzhahn93}, the tests proposed in the previous section (Sec.\textbf{ \ref{sec:test}}) are performed.

First all the cells in the images (Fig. \textbf{\ref{fig:lorzhahn93}} (a)) are involved in tests computation. Hereafter, for illustrative purposes and for a better comparison with the theoretical results shown in the previous section, we decide to consider just part of the images used in \cite{lorzhahn93}. In fact, the original window size is reduced until exactly $50$ cells are visible or partially visible (Fig.\textbf{ \ref{fig:lorzhahn93}} (b)). In Tables\textbf{ \ref{tab:lorztestall}-\ref{tab:lorztest}}, the test statistics and the p values (values in brackets) are shown for the four model tests and following the two different approaches.
Figures (\textbf{\ref{fig:cdf50ctestall}-\ref{fig:landtest1all}}) and (\textbf{\ref{fig:cdf50ctest}-\ref{fig:landtest1}}) are graphical representations of the cumulative distribution function test and the persistence approach steps.
In particular, for applying the test based on the difference between persistence landscapes, we take the center of mass of the cells in the images (Fig. \textbf{\ref{fig:lorzpp50all}, \ref{fig:lorzpp50}}), then compute the persistence diagrams (Fig. \textbf{\ref{fig:perdiaglorzall},\ref{fig:perdiaglorz}}) and finally the persistence landscapes (Fig. \textbf{\ref{fig:landtest0all}-\ref{fig:landtest1all}, \ref{fig:landtest0}-\ref{fig:landtest1}}) as explained in Section \textbf{\ref{subsubsec:perapproach}}.

Results using the two different approaches lead to slightly different results regarding the first two images (Fig. \textbf{\ref{fig:lorzhahn93}} 1(a), 1(b), 2(a), 2(b)).
For the first image, considering all the cells, the coefficient of variation test and the test based on the cdf of cells area suggest that the Poisson-Voronoi model could be reasonably used for approximating alumina ceramics; instead, looking at the cuts, the hypothesis is rejected by both tests.
For the second image, the coefficient of variation test based on all the cells is in agreement with the results obtained for the reduced sections; just the test based on the cdf considering all the cells does not reject the Poisson-Voronoi hypothesis.
Using tests from persistence approach instead, the use of Poisson-Voronoi model is discouraged in both cases.

\label{sec:application}
\begin{figure}[!h]
\centering
\captionsetup[subfigure]{labelformat=empty}
\subfloat[1a]{\includegraphics[width=5cm]{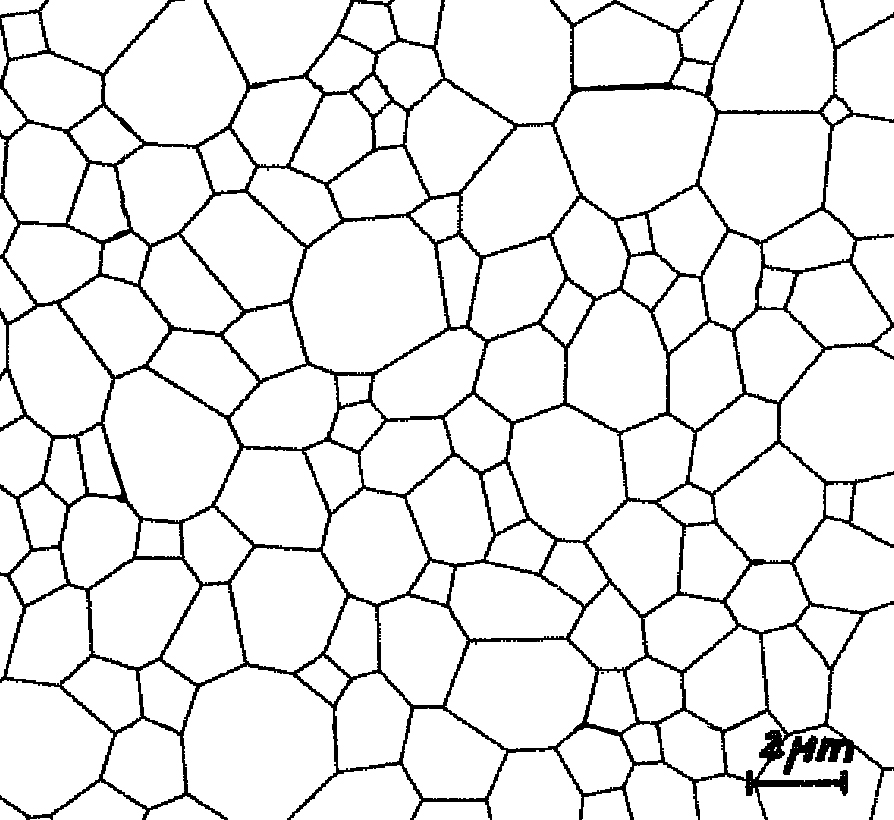}}
\subfloat[1b]{\includegraphics[width=4cm]{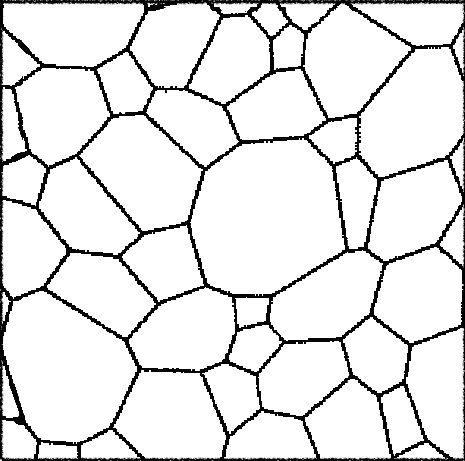}}

\subfloat[2a]{\includegraphics[width=5cm]{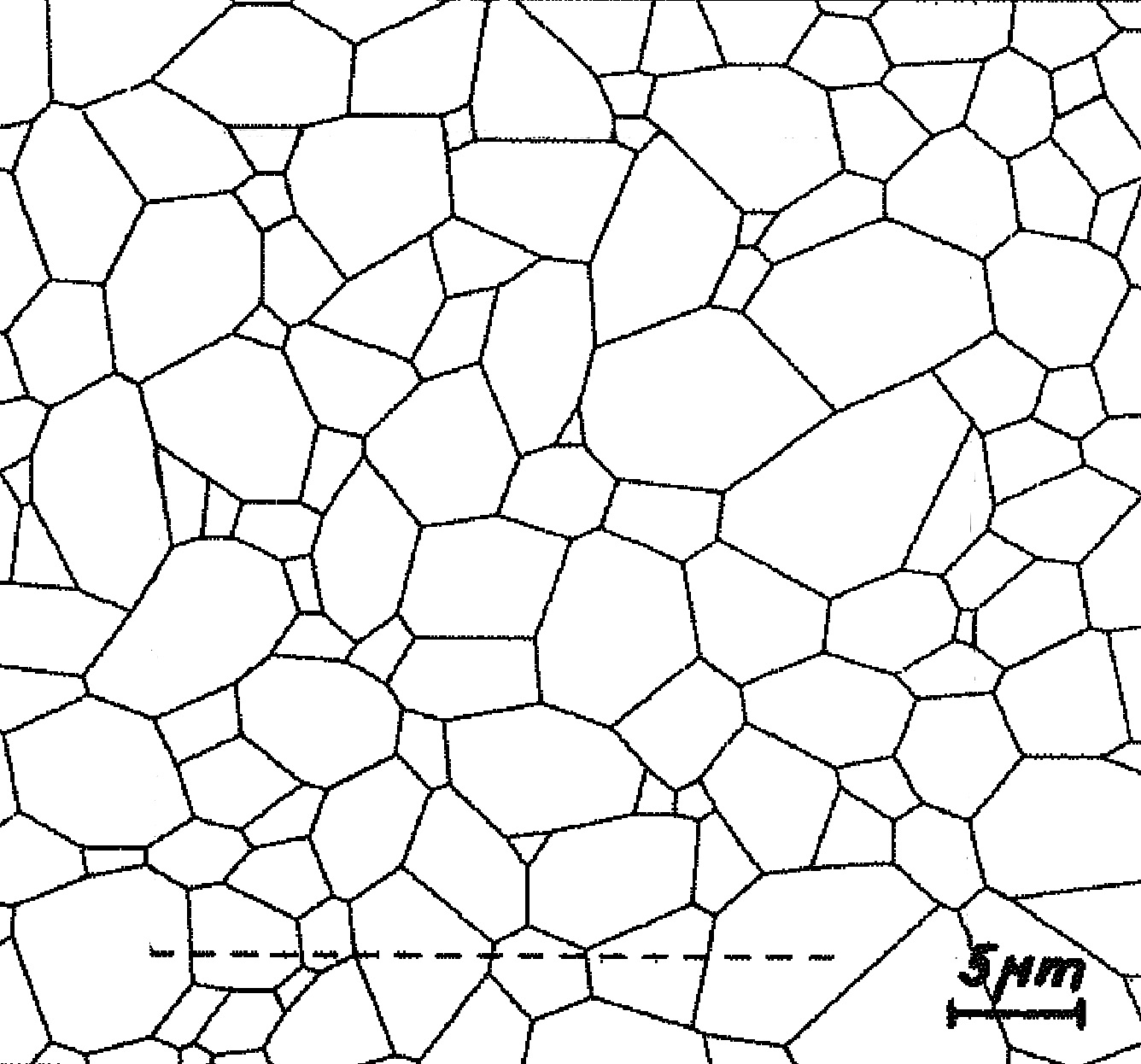}}
\subfloat[2b]{\includegraphics[width=4cm]{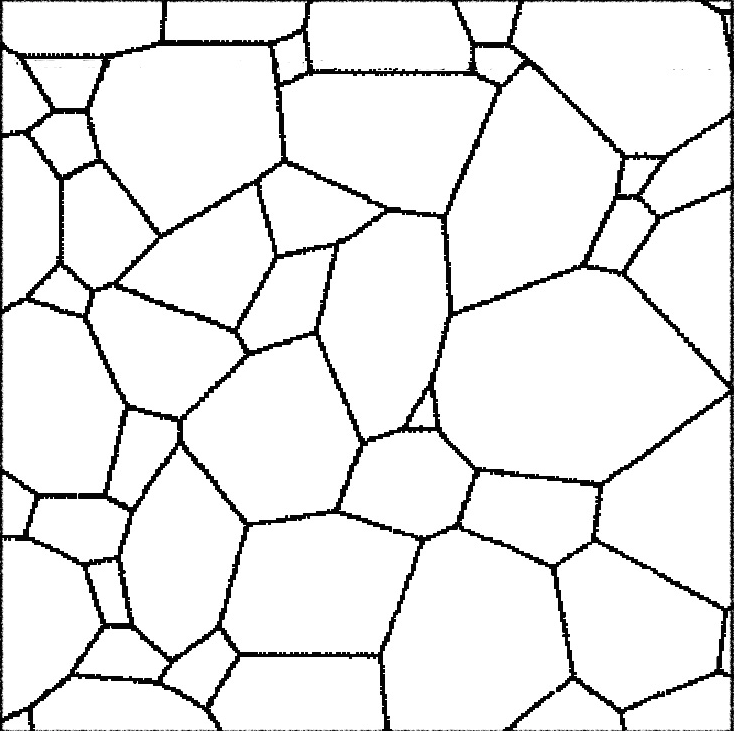}}

\subfloat[3a]{\includegraphics[width=5cm]{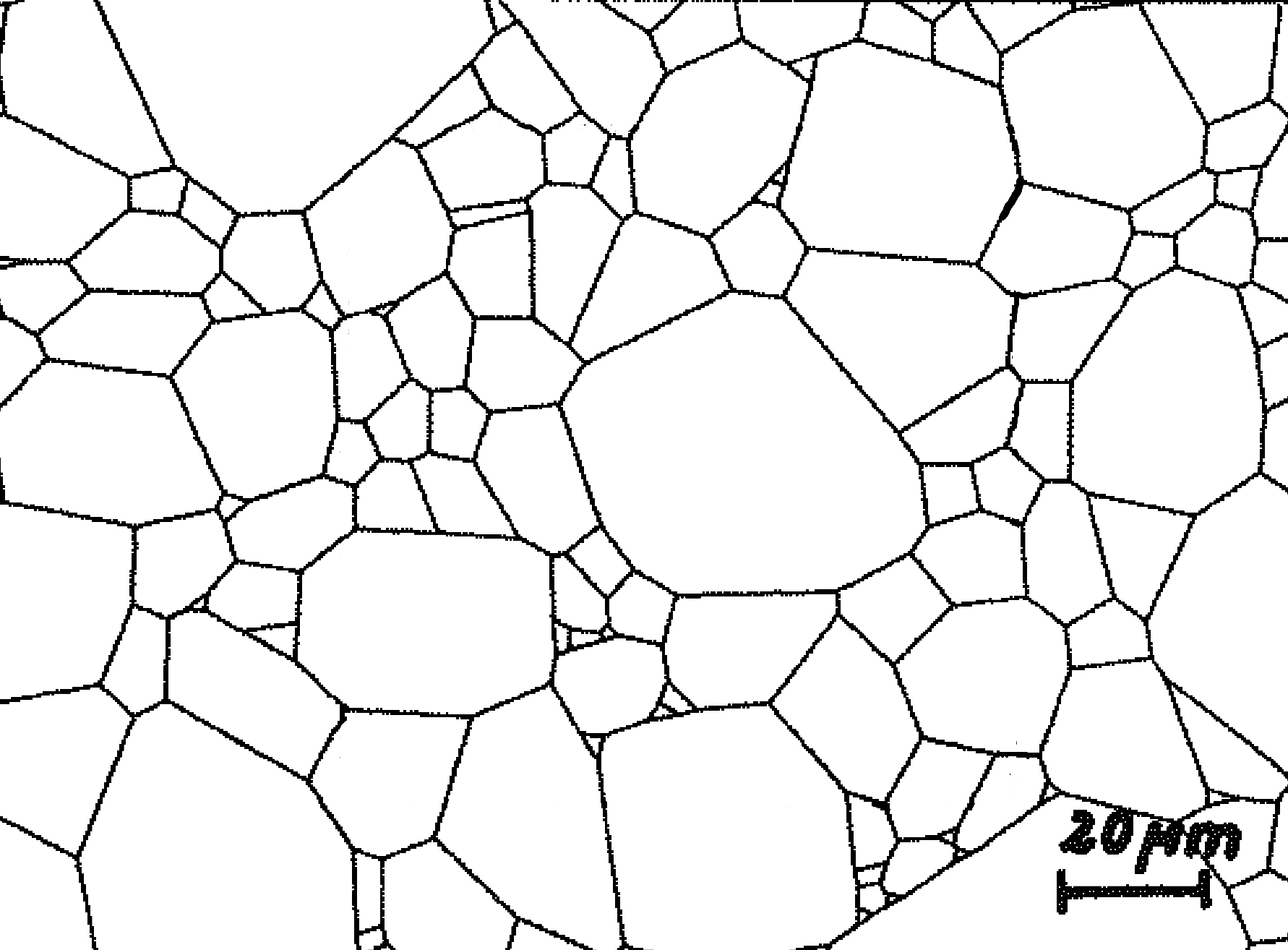}}
\subfloat[3b]{\includegraphics[width=4cm]{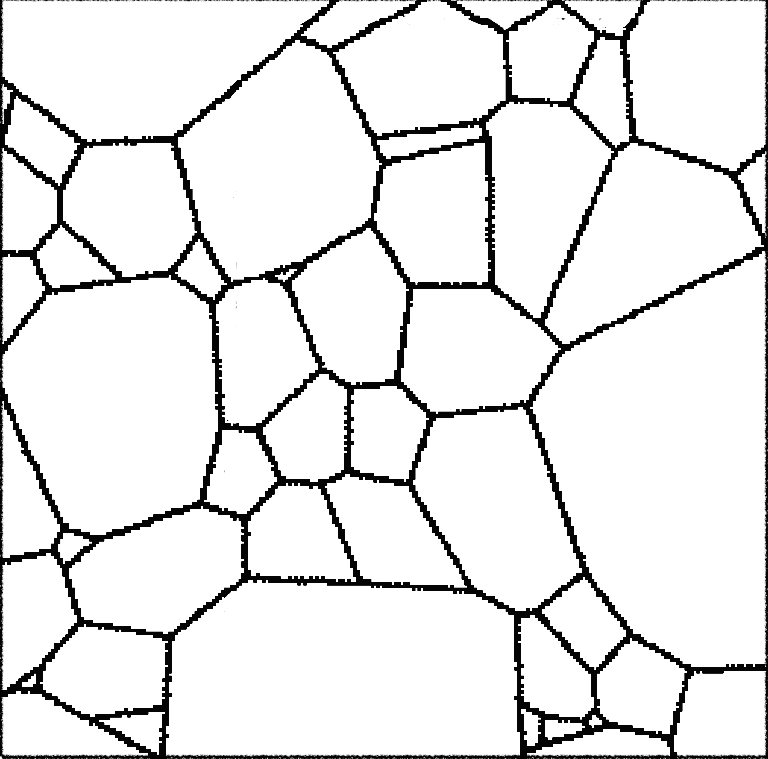}}
\caption{Schemes as planar tessellations of plane sections of alumina ceramics: preprocessing (a) Hahn\&Lorz (\cite{lorzhahn93}), (b) Cut of the plane sections with exactly 50 cells }
\label{fig:lorzhahn93}
\end{figure}

\begin{figure}[!h]
\centering
\includegraphics[width=8cm]{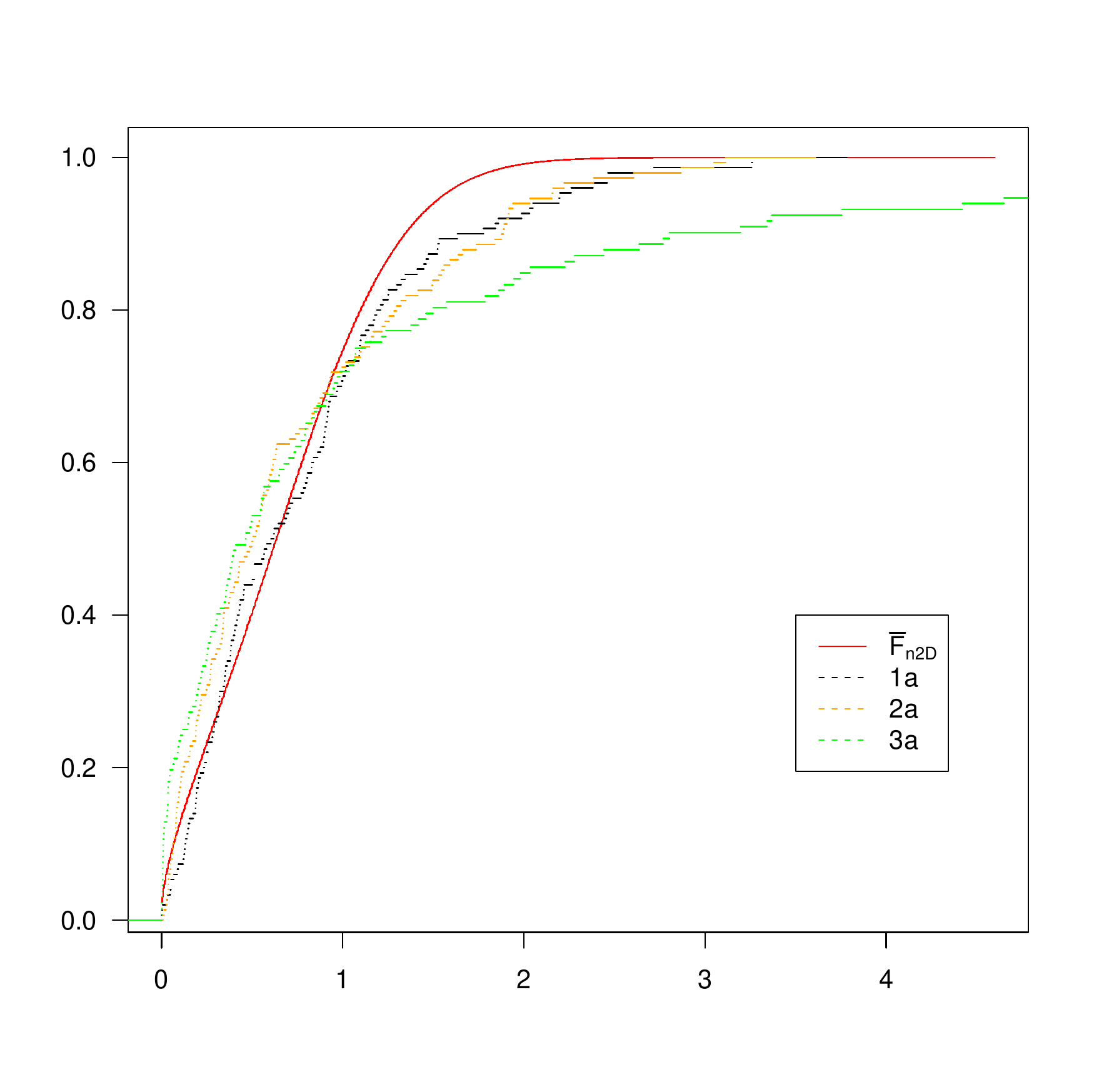}
\caption{Cumulative distribution function comparison of the cells area of the schemes of plane sections of alumina ceramics (Fig.\textbf{\ref{fig:lorzhahn93}} 1(a) black line, 2(a) yellow line, 3(a) green line) and of the 2D sectional Poisson-Voronoi cells area (red line)}
\label{fig:cdf50ctestall}
\end{figure}

\begin{figure}[!h]
\centering
\includegraphics[width=16cm]{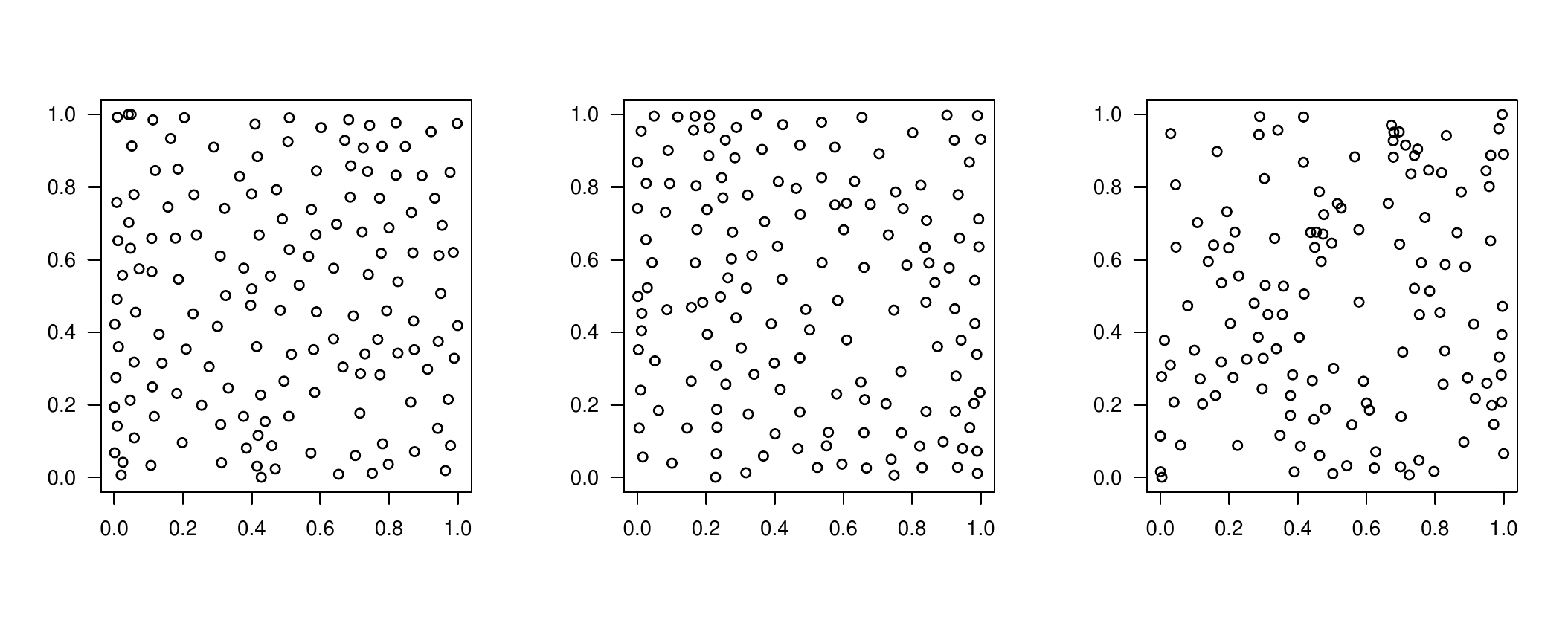}
\caption{From left to right centers of mass of the schemes of plane sections of alumina ceramics (Fig.\textbf{\ref{fig:lorzhahn93}} 1(a), 2(a), 3(a))  }
\label{fig:lorzpp50all}
\end{figure}

\begin{figure}[!h]
\centering
\includegraphics[width=16cm]{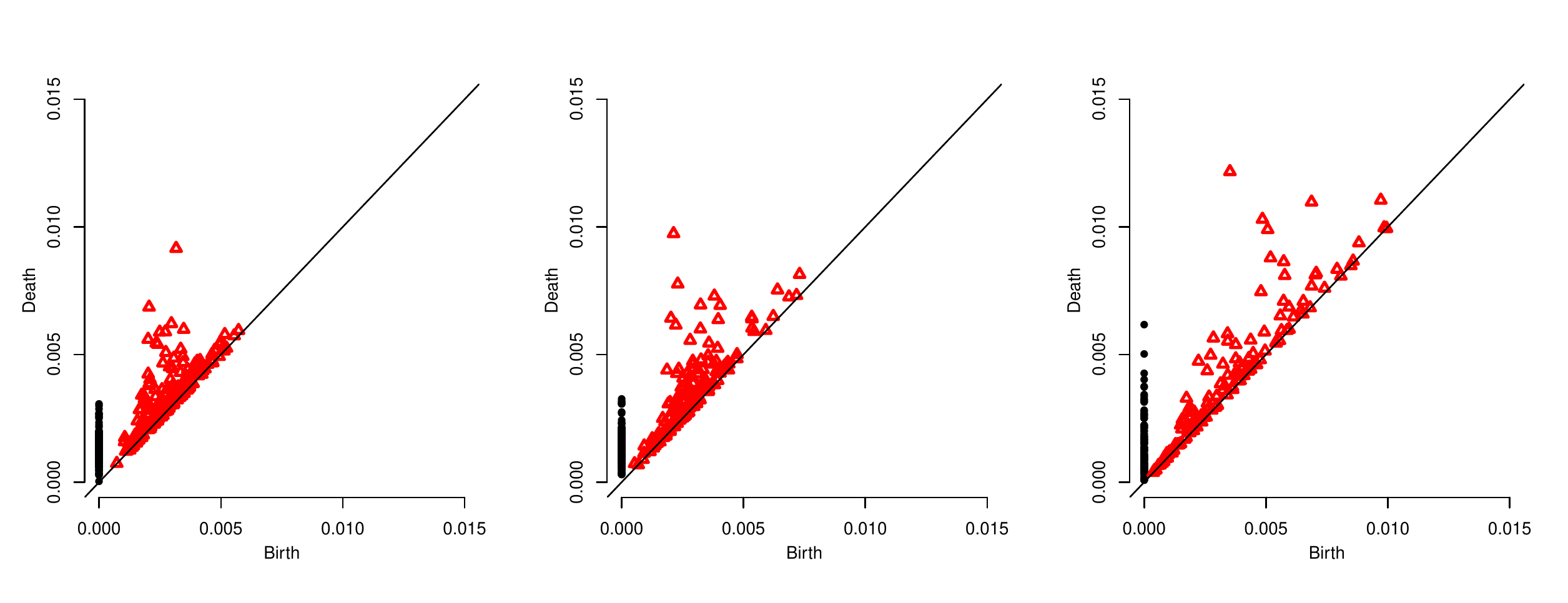}
\caption{From left to right persistence diagrams of the centers of mass of the schemes of plane sections of alumina ceramics (Fig.\textbf{\ref{fig:lorzhahn93}} 1(a), 2(a), 3(a)) }
\label{fig:perdiaglorzall}
\end{figure}

\begin{figure}[!h]
\centering
\includegraphics[width=16cm]{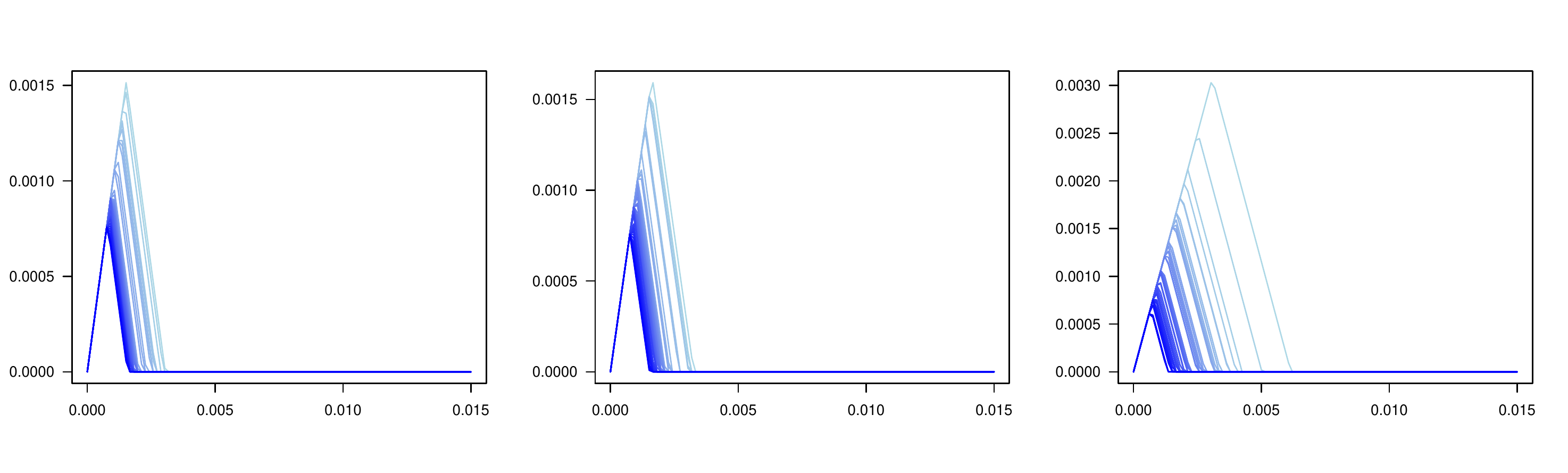}
\caption{From left to right persistence landscapes (connected components) of the schemes of plane sections of alumina ceramics (Fig.\textbf{\ref{fig:lorzhahn93}} 1(a), 2(a), 3(a)) }
\label{fig:landtest0all}
\end{figure}

\begin{figure}[!h]
\centering
\includegraphics[width=16cm]{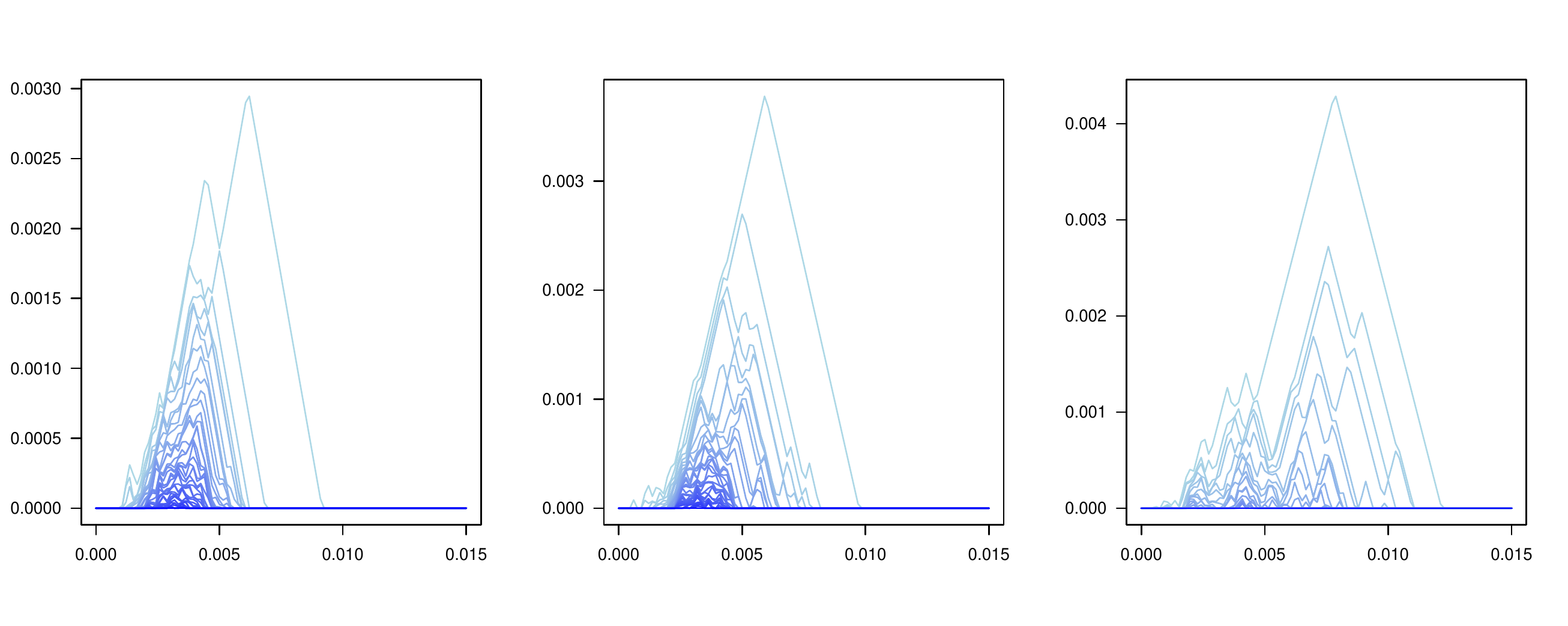}
\caption{From left to right persistence landscapes (holes) of the schemes of plane sections of alumina ceramics (Fig.\textbf{\ref{fig:lorzhahn93}} 1(a), 2(a), 3(a)) }\label{fig:landtest1all}
\end{figure}

\begin{figure}[!h]
\centering
\includegraphics[width=8cm]{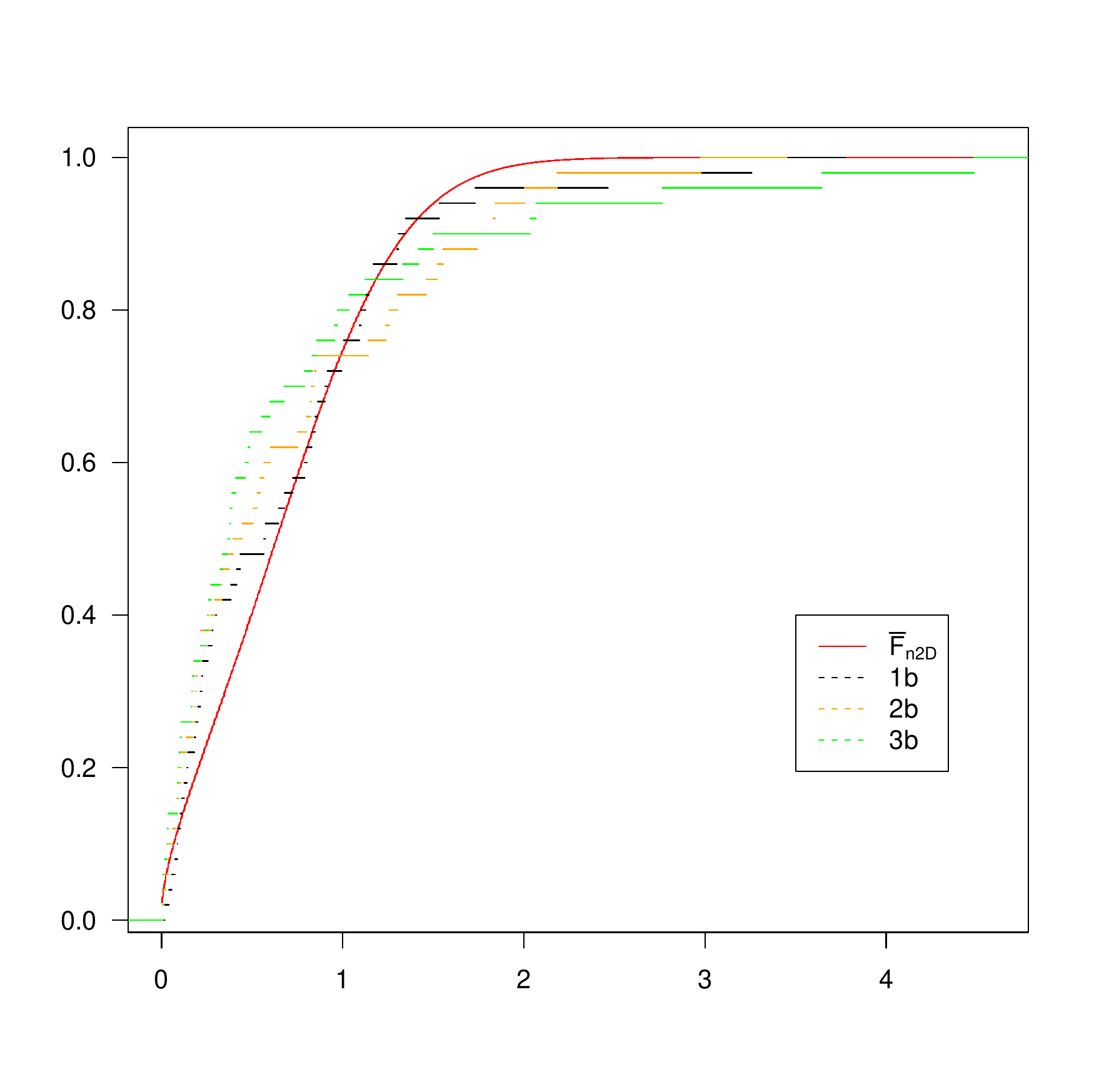}
\caption{Cumulative distribution function comparison of the cuts of the sections of alumina ceramics with exactly 50 cells (Fig.\textbf{\ref{fig:lorzhahn93}} 1(b) black line, 2(b) yellow line, 3(b) green line) and of the 2D sectional Poisson-Voronoi cells area conditioned on $N_{2D}=50$ (red line)}\label{fig:cdf50ctest}
\end{figure}

\begin{figure}[!h]
\centering
\includegraphics[width=16cm]{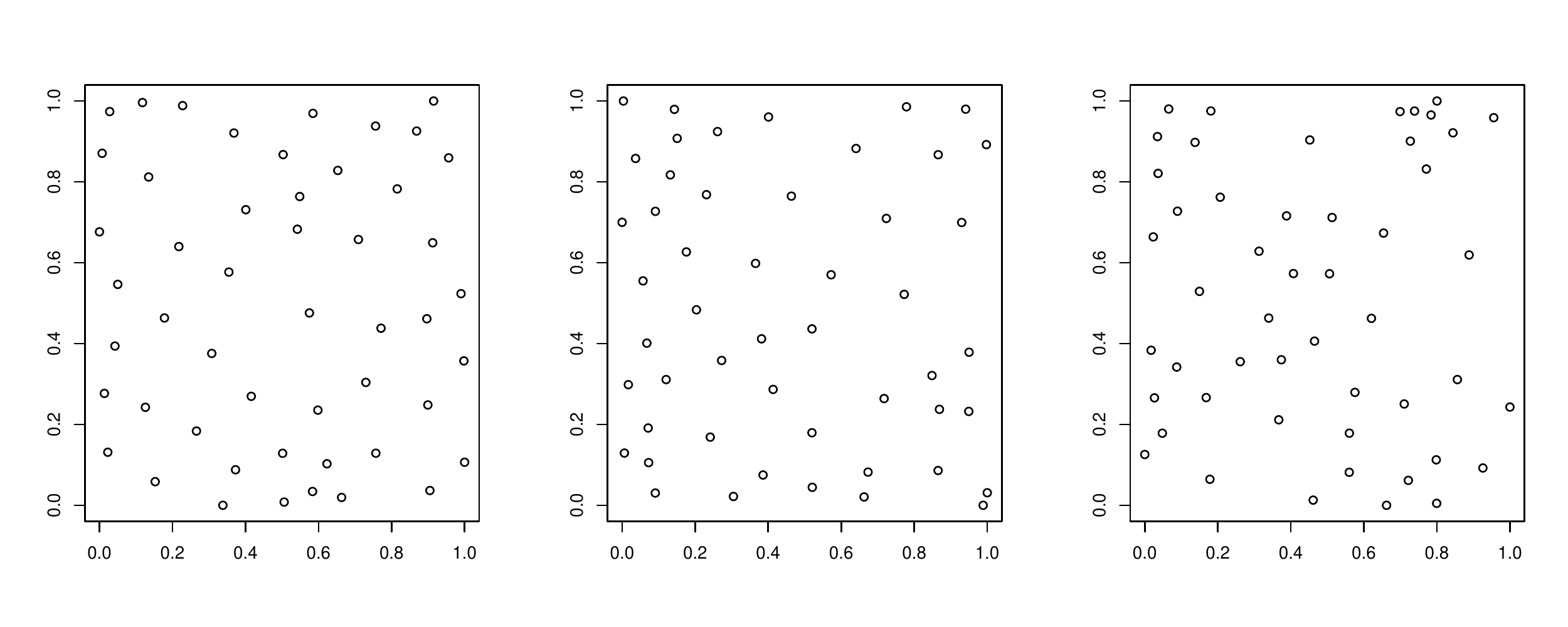}
\caption{From left to right centers of mass of the cuts of the sections of alumina ceramics with exactly 50 cells (Fig.\textbf{\ref{fig:lorzhahn93}} 1(b), 2(b), 3(b)) }\label{fig:lorzpp50}
\end{figure}

\begin{figure}[!h]
\centering
\includegraphics[width=16cm]{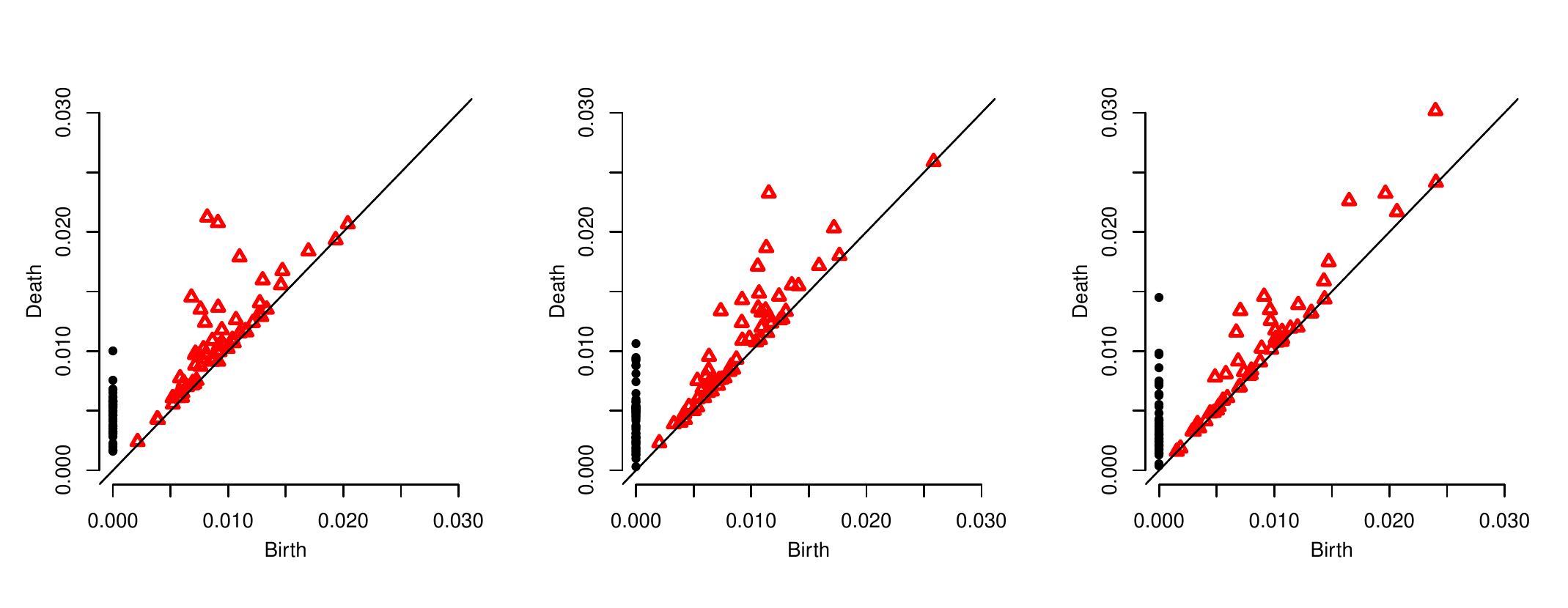}
\caption{From left to right persistence diagrams of the centers of mass of the cuts of the sections of alumina ceramics with exactly 50 cells (Fig.\textbf{\ref{fig:lorzhahn93}} 1(b), 2(b), 3(b))}
\label{fig:perdiaglorz}
\end{figure}

\begin{figure}[!h]
\centering
\includegraphics[width=16cm]{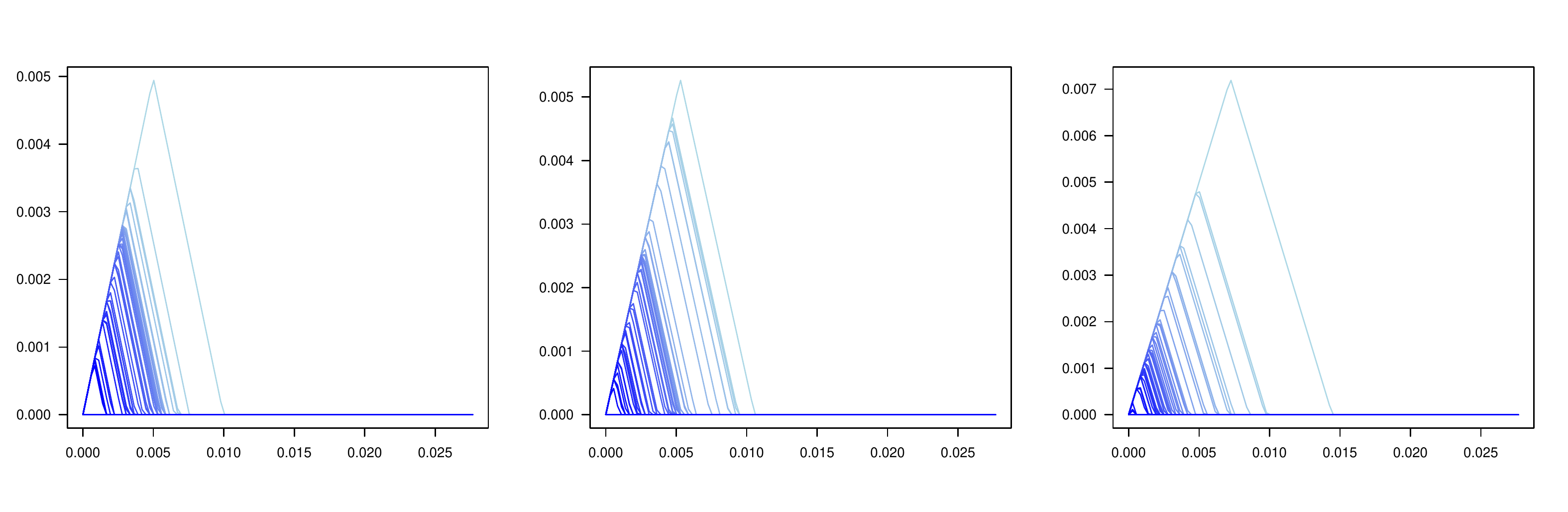}
 \caption{From left to right persistence landscapes (connected components) of the cuts of the sections of alumina ceramics with exactly 50 cells (Fig.\textbf{\ref{fig:lorzhahn93}} 1(b), 2(b), 3(b)) }
\label{fig:landtest0}
\end{figure}

\begin{figure}[!h]
\centering
\includegraphics[width=16cm]{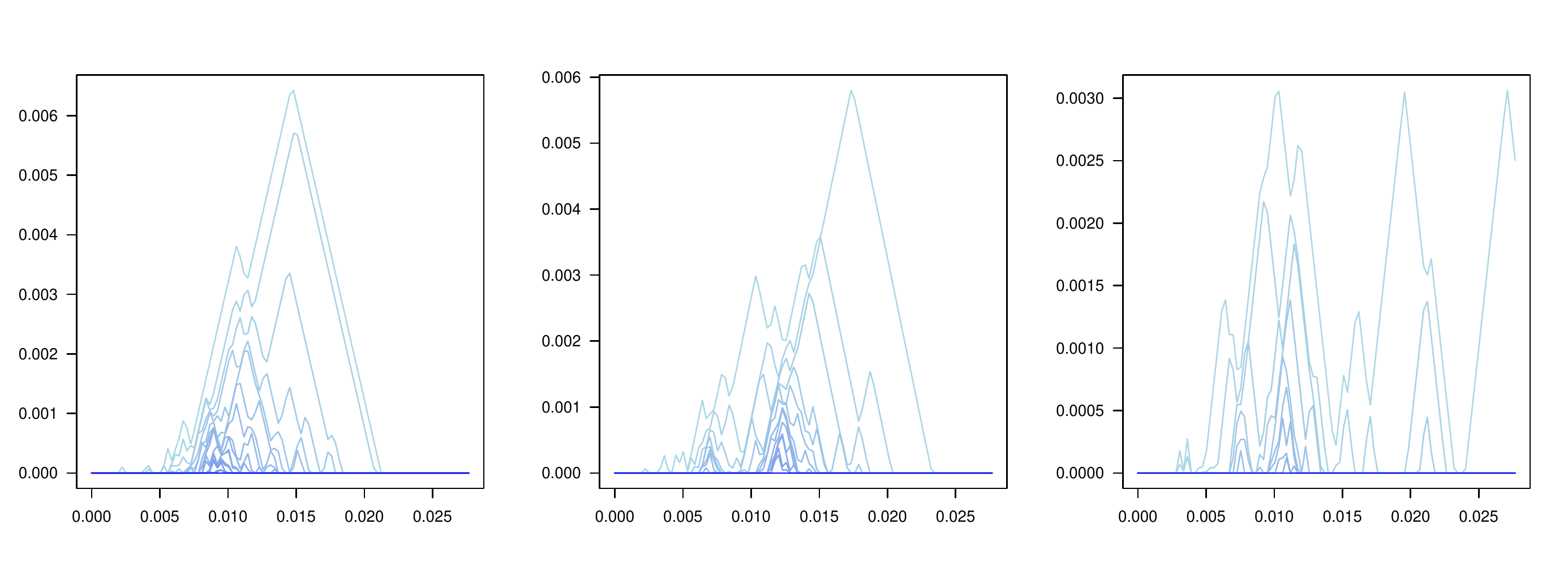}
 \caption{From left to right persistence landscapes (holes) of the cuts of the sections of alumina ceramics with exactly 50 cells (Fig.\textbf{\ref{fig:lorzhahn93}} 1(b), 2(b), 3(b)) }

\label{fig:landtest1}
\end{figure}

\begin{table}[!h]
\centering
\caption{Values of the different model tests for the schemes of plane sections of alumina ceramics (Fig.\textbf{\ref{fig:lorzhahn93}} 1(a), 2(a), 3(a)); p-value between brackets}
\subfloat[][\emph{1b}]
{\begin{tabular}{cl}
  \hline
$c$ & 0.848 (0.073)  \\
  $d$& 0.078 (0.710)\\
 $l_0$& 0.058 (0)\\
  $l_1 $& 0.019 (0) \\
   \hline
\end{tabular}}
\subfloat[][\emph{2b}]
{\begin{tabular}{cl}
  \hline
$c$ & 0.959 (0.002) \\
  $d$&  0.121 (0.172)\\
 $l_0$& 0.057 (0)\\
  $l_1 $& 0.028 (0)\\
   \hline
\end{tabular}}
\subfloat[][\emph{3b}]
{\begin{tabular}{cl}
  \hline
$c$ & 1.492 (0) \\
  $d$& 0.168 (0.006)\\
 $l_0$&  0.062 (0) \\
  $l_1 $& 0.018 (0)\\
   \hline
\end{tabular}}
\label{tab:lorztestall}
\end{table}

\begin{table}[!h]
\centering
\caption{Values of the different model tests for the cuts of the sections of alumina ceramics with exactly 50 cells (Fig.\textbf{\ref{fig:lorzhahn93}} 1(b), 2(b), 3(b)); p-value between brackets}
\subfloat[][\emph{1b}]
{\begin{tabular}{cl}
  \hline
$c$ & 0.931 (0.004)  \\
  $d$& 0.154 (0.014)\\
 $l_0$& 0.077 (0)\\
  $l_1 $& 0.041 (0) \\
   \hline
\end{tabular}}
\subfloat[][\emph{2b}]
{\begin{tabular}{cl}
  \hline
$c$ & 1.002 (0.0002) \\
  $d$&  0.172 (0.004)\\
 $l_0$& 0.116 (0)\\
  $l_1 $& 0.024 (0)\\
   \hline
\end{tabular}}
\subfloat[][\emph{3b}]
{\begin{tabular}{cl}
  \hline
$c$ & 1.328 (0) \\
  $d$& 0.248 (0)\\
 $l_0$&  0.137 (0) \\
  $l_1 $& 0.009 (0)\\
   \hline
\end{tabular}}
\label{tab:lorztest}
\end{table}

\clearpage
\section{Results and Discussion}
\label{sec:discussion}
This work provides a general setting for testing whether a microstructure is generated by a Poisson-Voronoi diagram, based on a cross section of the microstructure. Taking inspiration from previous work in this field, \cite{lorzkrawietz91,lorzhahn93}, we widen the testing framework proposing new model tests. In particular, we introduce test statistics using tools coming from different statistical branches like Goodness of Fit and Topological Data Analysis. We consider the situation with periodic boundary conditions, which is popular in materials science applications and without these conditions. Our approach is very general and can also be extended to test hypothesis for more complicated models describing the 3D structure based on a 2D section.

Being able to accept the Poisson-Voronoi model on the basis of 2D real metal sections means having complete probabilistic information on the underlying 3D structure. Based on this model, one can then perform mechanical experiments using virtual microstructures avoiding waste of material and discovering new interesting relations between microstructure features and mechanical properties much faster than possible using physical experiments.

Future developments involve testing of more general and less understood Voronoi structures for more complicated microstructures, such as Multi-level Voronoi diagrams. Another interesting direction is to consider fully data based approaches for analyzing 2D sections.  For instance, analyzing such a section using a persistence landscape  does not need the rigid restrictions on the geometry of the cells as present in the Poisson-Voronoi model. 
\section*{Acknowledgements}
This research was carried out under project number S41.5.14547b in the framework of the Partnership Program of the Materials innovation institute M2i (www.m2i.nl) and (partly) financed by the Netherlands Organisation for Scientific Research (NWO).
We also thank Vanessa Robins and Jeroen Spandaw for the inspiring discussion about Persistent Homology.


\begin{thebibliography}{99}
\bibitem{okabe09}
Okabe~A., Boots~B., Sugihara~K., Chiu~S.N. Spatial tessellations: concepts and applications of Voronoi diagrams. John Wiley \& Sons; 2009; vol. 501.
\bibitem{lorzkrawietz91}
Lorz~U., Krawietz~R. Stereological model tests for the spatial Poisson-Voronoi tessellation. Acta Stereologica; 1991.
\bibitem{kumar94}
Kumar~S., \& Kurtz~S. K. Simulation of material microstructure using a 3d Voronoi tesselation: Calculation of effective thermal expansion coefficient of polycrystalline materials. Acta metallurgica et materialia. 1994; vol. 42(12); p. 3917-3927.

\bibitem{lorz90}
Lorz~U. Cell-area distributions of planar sections of spatial Voronoi mosaics. Materials characterization; 1990; p. 297-309
\bibitem{vittorietti17}
Vittorietti~M., Jongbloed~G., Kok~P.J.J., Sietsma~J. Accurate approximation of the distributions of the 3D Poisson-Voronoi typical cell geometrical features. arXiv preprint arXiv:1705.06505; 2017.
\bibitem{lorzhahn94}
Hahn~U., Lorz~U. On the precision of some stereological estimators for the model parameter of the spatial Poisson-Voronoi tessellation. Acta Stereologica; 1994.

\bibitem{lorzhahn93}
Hahn~U., Lorz~U. Stereological model tests for the spatial Poisson-Voronoi tessellation II. Acta Stereologica; 1993; vol. 12; p. 131-131.


\bibitem{chiu96}
Chiu~S.N., Van de Weygaert~R., Stoyan~D. The sectional Poisson Voronoi tessellation is not a Voronoi tessellation. Advances in applied probability; 1996; p. 356-376.
\bibitem{mecke84}
Mecke~J. Parametric representation of mean values for stationary random mosaics.  Series Statistics; 1984; p. 437-442.
\bibitem{hermann1989}
Hermann~H., Wendrock~H., \& Stoyan~D. Cell-area distributions of planar Voronoi mosaics. Metallography; 1989; vol. 23.3; p. 189-200.
\bibitem{stamm97}
Schwertel~J., Stamm~H. Analysis and modelling of tessellations by means of image analysis methods. Journal of Microscopy; 1997; vol. 186(2); p. 198-209.
\bibitem{horalek90}
Hor\'alek~V., Stamm~H. ASTM grain-size model and related random tessellation models. Materials Characterization; 1990; vol. 25(3); p. 263-284.

\bibitem{jones93}
Jones~M. C. Simple boundary correction for kernel density estimation. Statistics and Computing; 1993;  vol. 3(3); p. 135-146.
\bibitem{robins2016} Robins~V., \& Turner~K. Principal component analysis of persistent homology rank functions with case studies of spatial point patterns, sphere packing and colloids. Physica D: Nonlinear Phenomena;2016; vol. 334, p. 99-117.
\bibitem{ghrist2008}
Ghrist~R. Barcodes: the persistent topology of data. Bulletin of the American Mathematical Society; 2008; vol. 45(1); p. 61-75.
\bibitem{fasy2014}
Fasy~B. T., Lecci~F., Rinaldo~A., Wasserman~L., Balakrishnan~S., \& Singh~A.  Confidence sets for persistence diagrams. The Annals of Statistics; 2014; vol. 42.6; p. 2301-2339.
\bibitem{gamble2010}
Gamble~ J., \& Heo~ G. Exploring uses of persistent homology for statistical analysis of landmark-based shape data. Journal of Multivariate Analysis; 2010; vol. 101.9; p. 2184-2199.
\bibitem{hatcher2002}
Hatcher~A. Algebraic topology. Cambridge University Press; 2002.
\bibitem{edelsbrunner2010}
Edelsbrunner~H., Harer~J. Computational topology: an introduction. American Mathematical Soc.; 2010.
\bibitem{edelsbrunner1994}
Edelsbrunner~H., \& M{\"u}cke~E. P. Three-dimensional alpha shapes. ACM Transactions on Graphics (TOG); 1994; vol. 13.1; p. 43-72.
\bibitem{ohser2000}
Ohser~J., \& M{\"u}cklich~F.  Statistical analysis of microstructures in materials science. Wiley; 2000.
\bibitem{chazal2014}
Chazal~F., Fasy~B. T., Lecci~F., Rinaldo~A., \& Wasserman~L.  Stochastic convergence of persistence landscapes and silhouettes. Proceedings of the thirtieth annual symposium on Computational geometry. ACM; 2014; p. 474.
\bibitem{turner2013}
Turner~K. Means and medians of sets of persistence diagrams. arXiv preprint arXiv:1307.8300, 2013.
\bibitem{bubenik2015}
Bubenik~P. Statistical topological data analysis using persistence landscapes. The Journal of Machine Learning Research; 2015; vol. 16.1: p. 77-102.
\bibitem{biscio2016}
Biscio~C., \& M\o ller~J. The accumulated persistence function, a new useful functional summary statistic for topological data analysis, with a view to brain artery trees and spatial point process applications. arXiv preprint arXiv:1611.00630; 2016.
\bibitem{loosmore2006}
Loosmore~N. B., \& Ford~E. D. Statistical inference using the G or K point pattern spatial statistics. Ecology;  2006; vol. 87(8); p. 1925-1931.
\bibitem{sahai1993}
Sahai~H., \& Khurshid,~A. Confidence intervals for the mean of a Poisson distribution: a review. Biometrical Journal; 1993; vol. 35(7); p. 857-867.
\end{thebibliography}
\end{document}